\documentclass[aps, twocolumn, nofootinbib, longbibliography]{revtex4-1}

\usepackage{amsthm}
\usepackage{amsmath}
\usepackage{amssymb}
\usepackage{bbding}
\usepackage{enumerate}
\usepackage{datetime}
\usepackage{pstricks}
\usepackage{pst-solides3d}
\usepackage{pst-poly}
\usepackage{pst-3dplot}
\usepackage{pstricks-add}
\usepackage[colorlinks=true, citecolor=red, urlcolor=blue]{hyperref}

\DeclareMathOperator{\tr}{tr}
\DeclareMathOperator{\spa}{span}
\DeclareMathOperator{\Herm}{Herm}
\DeclareMathOperator{\End}{End}

\DeclareMathOperator{\conv}{conv}

\DeclareMathOperator{\aff}{aff}
\DeclareMathOperator{\ver}{vert}

\DeclareMathOperator{\im}{im}

\newcommand{\rst}[1]{\ensuremath{{\mathbin\upharpoonright}%
\raise-.5ex\hbox{$#1$}}}

\newtheoremstyle{thm}
     {10pt}
     {10pt}
     {}
     {}
     {\bfseries}
     {:}
     {.5em}
     {}

\theoremstyle{thm}
\newtheorem{thm}{Theorem}

\newtheorem{lemma}[thm]{Lemma}

\newtheorem*{postulate}{Postulate}
\newtheorem*{result}{Result}
\newtheorem*{lemm}{Lemma}
\newtheorem*{defi}{Definition}

\newtheoremstyle{princ}
     {10pt}
     {10pt}
     {\itshape}
     {}
     {\bfseries}
     {:}
     {.5em}
     {}

\theoremstyle{princ}

\newtheoremstyle{ex}
     {10pt}
     {10pt}
     {}
     {}
     {\bfseries}
     {:}
     { }
     {}

\theoremstyle{ex}

\makeindex

\makeatletter
\def\@xfloat@prep{%
  \ltx@footnote@pop
  \def\@mpfn{mpfootnote}%
  \def\thempfn{\thempfootnote}%
  \c@mpfootnote\z@
  \let\H@@footnotetext\H@@mpfootnotetext
}%
\makeatother

\begin{document}

\title{If no information gain implies no disturbance, \\then any discrete physical theory is classical}
\author{Corsin Pfister}
\author{Stephanie Wehner}
\affiliation{Centre for Quantum Technologies, National University of Singapore, 3 Science Drive 2, Singapore 117543, Singapore}
\date{\today}
\begin{abstract}
	It has been suggested that nature could be discrete in the sense that the underlying state space of a physical system
	has only a finite number of pure states. For example, the Bloch ball of a single qubit could be discretized into small patches and only appear round
	to us due to experimental limitations.
	Here, we present a strong physical argument for the quantum theoretical property that every state space (even the smallest possible one, the qubit) has infinitely many 
	pure states. We propose a simple physical postulate which dictates that in fact the only possible discrete theory is classical mechanics.
	More specifically, we postulate that no information gain implies no disturbance --- or, read in the contrapositive, that 
	disturbance leads to some form of information gain. In a theory like quantum mechanics where we already know that the converse
	holds, i.e. information gain does imply disturbance, this can be understood as postulating an equivalence between disturbance and information gain. 
	What's more, we show that non-classical discrete theories are still ruled out even if we relax the postulate to hold only approximately in the
	sense that no information gain only causes a small amount of disturbance. 
	Finally, our postulate also rules out popular generalizations such as the PR-box that allows non-local correlations beyond the limits
	of quantum theory.
\end{abstract}
\maketitle

In contrast to classical theory, quantum theory has the remarkable property that the state space of every system 
has continuously many pure states. These are states which can be seen as states of maximal knowledge: They cannot be 
prepared by flipping a (possibly biased) coin to decide between two different preparation procedures to be executed, 
hiding the outcome of the coin flip. 
Even the qubit, the smallest possible system with no more than two perfectly distinguishable states, has continuously many 
such states. This non-discreteness of quantum theory contrasts sharply with classical theory, where systems with a finite 
number of perfectly distinguishable states have the same \emph{finite} number of pure states. While from a mathematical 
point of view, this quantum property is satisfactorily explained as a consequence of the mathematical framework of quantum 
theory, a physical explanation of this phenomenon is less evident.

Indeed one might conjecture that the actual state space of a physical system really was discrete with only finitely many pure states (see Fig.\ 
\ref{discretized-figure}) \cite{BHZ05, BHZ06}. The fact that experiments have not found a deviation from the continuous nature of the quantum 
state spaces could then be explained by insufficient measurement precision. A qubit, for example, could be described by 
a polytope that approximates the continuous spherical shape of the Bloch ball very well, while it actually is a discrete system. 
Quantum gravitational considerations have led some authors to the idea that indications for the discreteness of spacetime 
could in turn provide an indication for the discreteness of quantum state spaces \cite{BHZ05, BHZ06}. 
Such considerations might suggest state spaces with an extremely high number of pure states, but as long as the number 
of pure states is finite, they would differ from quantum state spaces in a fundamental way.

\begin{figure}[htb]

\begin{pspicture}[showgrid=false](-6.5,-2)(1.5,1.3)
\psset{viewpoint=50 50 20, Decran=80, linewidth=0.5\pslinewidth}
\psSolid[
	action=draw**,
	object=geode,
	ngrid = 5 1
	]
\pscircle(-5, 0){1.1}
\psellipticarc[linestyle=dashed](-5, 0)(1.1, 0.3){0}{180}
\psellipticarc(-5, 0)(1.1, 0.3){180}{0}
\uput[d](-5, -1.2){Bloch ball}
\uput[d](0, -1.2){discretized Bloch ball}
\pnode(-3, 0){left}
\pnode(-2, 0){right}
\pcline[arrowscale=2]{->}(left)(right)
\end{pspicture}

\caption{\label{discretized-figure}Illustration of discretized state spaces: 
							One might conjecture that physical state spaces are discrete in the sense that
							they only have a finite number of pure states. In such discrete state spaces, 
							the pure states are given by the corners of the state space.}
\end{figure}
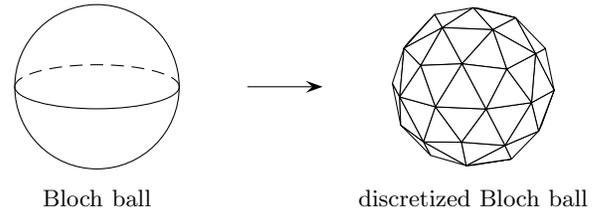

In this work, we present a strong physical counter-argument to the idea that quantum theory could be replaced by a theory with 
discrete state spaces. This argument is derived from a postulate which claims a very basic principle for measurements. 
It states that every (pure) measurement can be performed in a way such that the states with a \emph{definite} outcome
(i.e. the states with an outcome of probability one) are left invariant. We regard this principle to be a natural property of a theory that describes 
physical measurements, so we impose it as a postulate. Performing a measurement with a definite outcome does not give any 
information, while performing a measurement for which the outcome is not known in advance can be seen as a process of 
gaining information. This allows to regard our postulate as a converse 
to the well-known fact in quantum theory that information gain causes disturbance \cite{FP96}: We postulate that 
a measurement with \emph{no} information gain causes \emph{no} disturbance. We prove that a non-classical 
probabilistic theory which satisfies this postulate cannot be discrete. By a discrete system, we mean a system 
for which the state space has only finitely many pure states. In other words, we show that every theory 
which satisfies our postulate must either be classical or it must have infinitely many pure states.

\section{Technical introduction}

\subsection*{The framework}

We formulate our result in the \emph{abstract state space} framework \cite{BW09, BBLW08, BGW09, BW11}. 
This framework arises from the idea to consider the largest possible 
class of physical theories (more precisely, generalized probabilistic theories) which satisfy minimal assumptions, 
containing classical and quantum theory as special cases. This allows us to study properties of quantum theory, 
like the non-discreteness of the state space, from an outside perspective. Here we discuss these minimal 
assumptions very briefly and refer to \cite{Pfi12} for a detailed introduction to the abstract state space framework 
and its mathematical background. 

The framework, which relies on four minimal assumptions, is based on the idea that any physical theory admits the  
notions of \emph{states} and \emph{measurements}. Their interpretation is assumed to be given. The first assumption is that 
the normalized states form a convex subset $\Omega_A$ of a real vector space $A$. The underlying motivation is the idea of 
probabilistic state preparation: If $\omega, \tau \in \Omega_A$ are states which can each be prepared by a corresponding 
preparation procedure, then executing the preparation procedures with probability $p$ and $1-p$ should also lead to a state 
(described by the convex sum $p \omega + (1-p) \tau$) and should therefore be an element of $\Omega_A$ as well. 
The second assumption is that the dimension of the vector space containing the set of states is arbitrarily large but finite. 
This is a purely technical assumption intended to make the involved mathematics feasible. 
The third assumption is that the set of states $\Omega_A$ is compact. Although there might be some 
physical motivation for this assumption, we shall be satisfied with considering it as a technical assumption.

Before we discuss the fourth assumption, we make a few comments on the structure of $\Omega_A$.
The extreme points of $\Omega_A$ are the \emph{pure} states of the system, the other elements are \emph{mixed} states. 
Since $\Omega_A$ is a convex and compact subset of a finite-dimensional vector space $A$, 
every element of $\Omega_A$ is a convex combination of the extreme points 
of $\Omega_A$ \cite{Min-un, *Min1911}. 
Thus, every state is a convex combination of pure states. Since a convex combination is a sum 
with positive weights that sum up to one, a state can be seen as a probability distribution over pure states. In general, this 
probability distribution is not unique. In classical theory, however, it is (see the example below). 
In addition to the normalized states $\Omega_A$, an abstract state space $A$ also contains the 
\emph{subnormalized states} $\Omega_A^{\leq 1}$, which are given by all rescalings of the normalized states by 
factors between zero and one.

The fourth assumption states, roughly speaking, that every mathematically well-defined measurement is regarded 
as a valid measurement: A measurement is a finite set 
$\mathcal{M} = \{ f_1, \ldots, f_n \}$ of functions $f_i: A \rightarrow \mathbb{R}$ which are called \emph{effects}, 
each corresponding to an outcome of the measurement. For a state $\omega \in \Omega_A$, the value $f_i(\omega)$ 
is interpreted to be the probability that the measurement yields the outcome $i$ when the system was in the state 
$\omega$ prior to the measurement. Thus, one must have $0 \leq f_i(\omega) \leq 1$ for all $\omega \in \Omega_A$.
If the measured system was in the state $\omega$ with probability $p$ and in the state 
$\tau$ with probability $1-p$, then the probability $p f_i(\omega) + (1-p) f_i(\tau)$ of getting the outcome $i$ 
has to be identical to $f_i(p \omega + (1-p) \tau)$ since $p \omega + (1-p) \tau$ is regarded to be a state in its own 
right (in accordance with the first assumption). Skipping a few details, this means that effects are assumed to be linear. Moreover, 
the effects of a measurement have to sum up to the so-called \emph{unit effect} $\sum_{i=1}^n f_i = u_A$ 
for which $u_A(\omega) = 1$ for all $\omega \in \Omega_A$ (since the probability that \emph{any} outcome occurs 
has to be one). The fourth assumption is that every set of such linear functionals (effects) is a valid measurement. 
We denote the set of all effects on an abstract state space by $E_A$, and we denote a measurement (i.e. a set of effects 
that sum up to the unit effect) by calligraphic letters ($\mathcal{M}$ or $\mathcal{N}$ in this paper).

We would like to emphasize that the fourth assumption, which connects the geometry of the states with the geometry 
of the effects \cite{Pfi12}, is 
standard but non-trivial and of crucial technical importance for our result. A compelling physical motivation does not seem 
to be obvious, so it should be regarded as a tentative assumption on the way to a better understanding of 
quantum theory. Note that as a consequence of this assumption, a theory where the set of states is a quantum state space 
but where the measurements are restricted to a proper subset of  the positive operator valued measures (POVMs) 
is not part of the framework (c.f. quantum theory in the examples below). In quantum information science, it is always 
assumed that the full set of POVMs can be performed.

These four assumptions determine the framework of abstract state spaces. This structure is sufficient as long as 
one is only interested in measurement statistics of one-shot measurements. If one wants to describe several consecutive measurements, one has to introduce measurement-transformations. We will discuss this below, but first, we make a 
few examples.

\subsection*{Examples}

In the following, we introduce a few examples of theories which can be formulated in the abstract state space framework. 
While quantum and classical theory are theories of actual physical significance, other theories that we introduce 
play the role of toy theories which are helpful to understand the framework. Especially the square and the pentagon, 
which are instances of polygon models (see below), will serve as useful examples in the illustration of the 
proof idea of our result.

\emph{Quantum theory}: The set of states of a (finite-dimensional) quantum system is given by 
$\Omega_A = \mathcal{S}(\mathcal{H})$ for some (finite-dimensional) Hilbert space $\mathcal{H}$, where 
$\mathcal{S}(\mathcal{H})$ denotes the positive operators on $\mathcal{H}$ with unit trace (the density 
operators). These operators form a compact convex subset of $A = \Herm(\mathcal{H})$, the vector space of 
Hermitian operators on $\mathcal{H}$. Every quantum system has continuously many pure states. 
The most general description of measurement statistics in quantum theory is given by 
a POVM, which is a set $\{ F_i \}_{i=1}^n$ of positive operators which sum up to the identity operator 
$I$ on $\mathcal{H}$. They give rise to the effects $\rho \mapsto \tr(F_i \rho)$ which sum up to the unit effect 
$u_A$ given by $\rho \mapsto \tr( I \rho) = 1$ for all $\rho \in \mathcal{S}(\mathcal{H})$. In analogy to our 
comment above, we emphasize that a theory where the states form a proper subset of a quantum state space 
but where the measurements are given by not more than POVMs fails to satisfy the fourth assumption of the 
framework since a reduction of the allowed states requires an extension of the effects.

\emph{Classical theory}: The states $\Omega_A$ of a (finite) classical theory are given by a \emph{simplex}, 
that is by the convex hull of finitely many affinely independent points. (We say that points $p_1, \ldots, p_n$ in a real 
vector space are affinely independent if no point is an affine combination of the other points, i.e. if for every $p_i$, 
there are no real coefficients $\{ \alpha_k \}_{k \neq i}$ with $\sum_{k \neq i} \alpha_k = 1$ such that 
$\sum_{k \neq i} \alpha_k p_k = p_i$.) Examples of simplices are given by a line segment, a triangle, a tetrahedron, 
a pentachoron and so on. Every element of a simplex $\Omega_A$ is a \emph{unique} convex combination of 
the extreme points of $\Omega_A$ (see Fig.\ \ref{classical-property-fig}). Thus, for a simplex $\Omega_A$, the states 
are in a one-to-one correspondence with the probability distributions over the pure states, which in the case of a simplex 
are perfectly distinguishable. This allows to interpret the pure states as classical symbols. In a classical system, there is a 
generic measurement. For a given state $\omega$, the outcome probabilities for this measurement are precisely the 
coefficients in the convex sum of the pure states which yield~$\omega$.

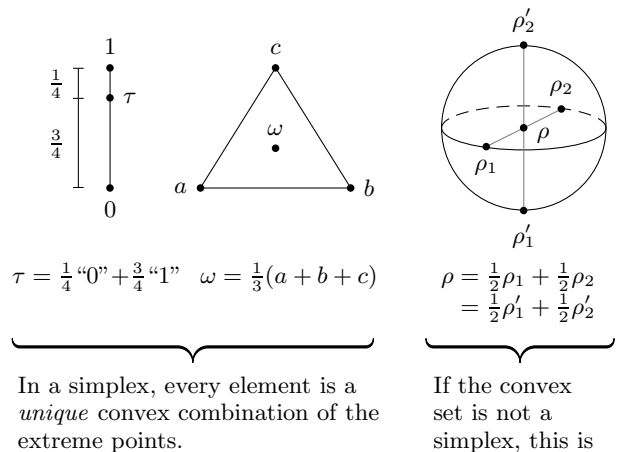
\begin{figure}[htb]

\begin{pspicture}[showgrid=false](0,-3.3)(8,3.2)

	\psset{linewidth=0.5\pslinewidth, dotsize=0.1}
	
	\newcommand{\triang}{
		\pnode(0.5,0.7){a}
		\pnode(2.5,0.7){b}
		\pnode(1.5,2.3){c}
		\pnode(1.5, 1.23){omega}
		\pspolygon(a)(b)(c)
		\psdots(a)(b)(c)(omega)
		\uput[l](a){$a$}
		\uput[r](b){$b$}
		\uput[u](c){$c$}
		\uput[u](omega){$\omega$}
	}
	\rput[bl](2,0){\triang}
	
	\newcommand{\bit}{
		\pnode(4.5, 0.7){zero}
		\pnode(4.5, 2.3){one}
		\pnode(4.5, 1.9){tau}
		\psdots(one)(zero)(tau)
		\uput[d](zero){0}
		\uput[u](one){1}
		\uput[r](tau){$\tau$}
		\pcline{-}(zero)(one)
		\pcline[offset=12pt]{|-|}(zero)(tau)
		\tlput*{$\frac{3}{4}$}
		\pcline[offset=12pt]{-|}(tau)(one)
		\tlput*{$\frac{1}{4}$}
	}
	\rput[bl](-3.2,0){\bit}
	
	\pnode(6.8, 1.5){qubit-center}
	\pscircle(qubit-center){1.1}
	\psellipticarc[linestyle=dashed](qubit-center)(1.1, 0.3){0}{180}
	\psellipticarc(qubit-center)(1.1, 0.3){180}{0}
	\pnode(6.3, 1.25){rho1}
	\pnode(7.3, 1.75){rho2}
	\pnode(6.8, 0.4){rhop1}
	\pnode(6.8, 2.6){rhop2}
	\pcline[linecolor=gray](rho1)(rho2)
	\pcline[linecolor=gray](rhop1)(rhop2)
	\psdots(qubit-center)(rho1)(rho2)(rhop1)(rhop2)
	\uput[340](qubit-center){$\rho$}
	\uput[d](rho1){$\rho_1$}
	\uput[u](rho2){$\rho_2$}
	\uput[d](rhop1){$\rho'_1$}
	\uput[u](rhop2){$\rho'_2$}
	
	\rput[tl](0,-0.3){\parbox{2.3cm}{\raggedright $\tau = \frac{1}{4}$``0''$+ \frac{3}{4}$``1''}}
	\rput[tl](2.5,-0.3){\parbox{2.5cm}{\raggedright $\omega = \frac{1}{3}(a+b+c)$}}
	\psbrace[braceWidth=0.8pt, rot=90, ref=t, nodesepB=5pt, nodesepA=5pt, braceWidthInner=0.1, braceWidthOuter=0.2]
		(0,-1.3)(4.8,-1.3){\parbox{5cm}{\raggedright 
		In a simplex, every element is a \emph{unique} convex combination of the extreme points.}}
		
	\rput[tl](5.7, -0.3){\parbox{2.3cm}{\raggedright $\rho = \frac{1}{2} \rho_1 + \frac{1}{2} \rho_2$ \\ 
		$\phantom{\rho} = \frac{1}{2} \rho'_1 + \frac{1}{2} \rho'_2$}}
	\psbrace[braceWidth=0.8pt, rot=90, ref=t, nodesepB=5pt, braceWidthInner=0.1, braceWidthOuter=0.2]
		(5.5,-1.3)(8,-1.3){\parbox{2.3cm}{\raggedright 
		If the convex set is not a simplex, this is false.}}
	
\end{pspicture}

\caption{\label{classical-property-fig} (Non-)Uniqueness of convex decompositions:
	A classical system is described by a simplex, which has the property that 
	every point is a unique convex combination of the extreme points. Thus, a state in a classical system corresponds 
	to a unique probability distribution over classical symbols.}
\end{figure}

\emph{Discrete theories}: We call $\Omega_A$ a \emph{discrete} state space if it is the convex hull of \emph{finitely} 
many (not necessarily affinely independent) points. Since $\Omega_A$ is compact, this is equivalent to saying that 
the theory has only finitely many pure states. Classical theory is an example of a discrete theory, while quantum 
theory is not.

\emph{Box world}: The \emph{generalized non-signalling theory} \cite{Bar07}, colloquially called box world, 
is a whole class of state spaces which can be formulated as abstract state spaces. It includes the well-known \emph{PR-box} 
\cite{PR94} and its local reduced state space, the so-called \emph{gbit}. All of them have only finitely 
many pure states, so they are all discrete theories according to our definition. The states $\Omega_A$ of a gbit 
form a square. Since the whole situation can be drawn in only three dimensions, the gbit provides 
an example for which we can give a picture (see Fig.\ \ref{gbit-fig}). To see the interplay of states and effects in 
such a low-dimensional example, it is useful to represent effects as vectors in the same space as the states 
\cite{JGBB11}. To evaluate an effect at some state, one simply takes the scalar product of the state and the 
vector representing the effect. This geometric representation will be useful in the illustration of the idea behind the proof 
in the Methods section.

\begin{figure}[htb]

\begin{pspicture}[showgrid=false](-2.5,0)(6,2.5)
	\psset{viewpoint=26 10 5,Decran=60}
	\psSolid[object=new, linewidth=0.5\pslinewidth,
		action=draw*,
		fcol=0 (.75 setgray),
		sommets= 
		0 0 0 
		0 1.18921 1 
		-1.18921 0 1 
		0 -1.18921 1 
		1.18921 0 1, 
		faces={
		[1 2 3 4]
		[0 2 1]
		[0 3 2]
		[0 4 3]
		[0 1 4]}]%
	\rput[b](0, 2){$\Omega_A$}
	\rput[c](0, 1){$\Omega_A^{\leq 1}$}
	\rput[l](3, 2){$\Omega_A$: \parbox[t]{2cm}{\raggedright normalized states}}
	\rput[l](3, 1){$\Omega_A^{\leq1 }$: \parbox[t]{2cm}{\raggedright subnormalized states}}
	\psPoint(0,0,0){zero}
	\psdots[dotsize=0.1](zero)
	\uput[r](zero){$0$}
	\end{pspicture}
	\begin{pspicture}[showgrid=false](-2.5,0)(6,3)
	\psset{viewpoint=26 10 5,Decran=60}
	\psSolid[object=new, linewidth=0.5\pslinewidth,
		action=draw*,
		fcol=8 (.75 setgray),
		sommets= 
		0 0 0 
		0.420448 0.420448 0.5 
		-0.420448 0.420448 0.5 
		-0.420448 -0.420448 0.5 
		0.420448 -0.420448 0.5 
		0 0 1 
		0 1.18921 1 
		-1.18921 0 1 
		0 -1.18921 1 
		1.18921 0 1, 
		faces={
		[1 4 0]
		[0 2 1]
		[0 3 2]
		[0 4 3]
		[1 2 5]
		[2 3 5]
		[3 4 5]
		[4 1 5]
		[6 7 8 9]}]%
	\psPoint(0,0,0){zero}
	\psPoint(0.594605, -0.594605, 1){state}
	\psPoint(0.420448, 0.230149, 0.5){effect}
	\pnode(4.8, 1.5){sss} 
	\pnode(4.2, 2.2){ssf} 
	\pnode(5.4, 1.5){ses} 
	\pnode(5.44, 2){sef} 
	\pnode(5, 1.8){dot}
	\psdots[dotsize=0.1](state)(effect)
	\pcline[arrowscale=1.5, linestyle=dashed]{->}(zero)(state)
	\pcline[arrowscale=1.5]{->}(zero)(effect)
	\pcline[linestyle=dashed]{->}(sss)(ssf)
	\pcline{->}(ses)(sef)
	\psdots[dotsize=0.08](dot)
	\uput[30](state){$\omega$}
	\rput[bl](effect){$f$}
	\rput[l](3,1.8){$f(\omega) =$}
	\rput[l](3, 0.9){\parbox{3cm}{\raggedright probability equals scalar product}}
	\rput[b](1,2){$\Omega_A$}
	\rput[b](-0.05,1.4){$E_A$}
	\psdots[dotsize=0.1](zero)
	\uput[r](zero){$0$}
\end{pspicture}

\caption{\label{gbit-fig}The gbit as an abstract state space: The upper part of the figure shows the 
set of normalized states $\Omega_A$ (gray), together with the subnormalized states $\Omega_A^{\leq 1}$
(white ``pyramid''), which are given by all rescalings of normalized states with factors between zero and one. 
In the lower part of the figure, the subnormalized states are omitted. Instead, the effects $E_A$ are 
shown (here they correspond to an octahedron). The reader who is familiar with the mathematics of ordered vector 
spaces may notice that the effects arise from the structure of the dual cone $A_+^*$ (more precisely, the effects 
form an order interval $[0, u_A]$ in $A^*$) \cite{Pfi12}. Here, they are represented as vectors in the same space as 
the states. To calculate a probability $f(\omega)$, one simply takes the scalar product of the vector $\omega$ and 
the vector representing $f$.}
\end{figure}
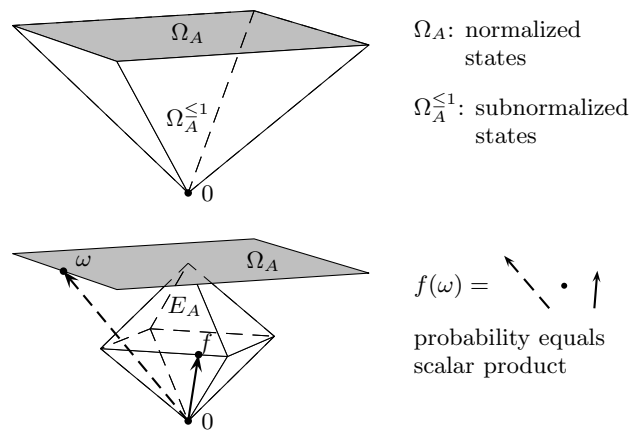

\emph{The polygon models} \cite{JGBB11}: These are abstract state spaces where $\Omega_A$ 
is a regular polygon, so they are special kinds of discrete theories. As their states and effects can 
be easily drawn in three dimensions, they also provide 
examples that we can depict. The square polygon corresponds to the gbit. In the Methods section below, 
the square and the pentagon will be the central examples in the illustration of basic idea of the proof.

\emph{Strictly convex theories}: These are theories where the set of normalized states $\Omega_A$ is a 
strictly convex set. A strictly convex set is a set where all faces are single points (if the notion of a face of 
a convex set is unknown, see \cite{Pfi12} or the Supplementary Information). In other words, a (compact) 
strictly convex set is a set such that its boundary contains no line segment, so the set is ``round'' at every 
point of the boundary. For example, the qubit, which is represented by the Bloch ball, is strictly convex, 
but every higher-dimensional quantum system is not. The latter follows from the fact that if 
$\mathcal{H}'$ is a subspace of a Hilbert space $\mathcal{H}$, then $\mathcal{S}(\mathcal{H}')$ is a face 
of $\mathcal{S}(\mathcal{H})$.

\subsection*{Post-measurement states}

In the preceding two subsections, we have discussed the core structure of abstract state spaces: states and effects. 
They only allow for the description of one-shot measurement statistics. If one wants to describe the statistics of 
\emph{several consecutive} measurements, then one has to specify what happens to the state of the system 
when a measurement is performed (otherwise, the statistics of the subsequent measurement cannot be described). 
In other words, one has to specify a rule for post-measurement states. The structure of an abstract state space, 
however, does not provide such a rule and leaves open the question of how to specify post-measurement states.

We deal with this question and consider some extra structure on abstract state spaces 
which provides a rule for post-measurement states. We describe the transition from the initial state of the system 
(prior to the measurement) to the post-measurement state by what we call a \emph{measurement-transformation}. 
Such transformations have been considered, for example, in \cite{DL70, Har01, Bar07}. We go one step further. 
Our result makes a statement about the existence of measurement-transformations in abstract state spaces 
which satisfy a certain postulate. 

As we have just mentioned above, the general idea is that a measurement-transformation 
specifies a rule for how post-measurement states are assigned. However, in a physical theory, how such a rule looks like 
depends on the particular situation which one wants to describe. To be more specific, we can think of at least three such 
situations (we will make quantum examples below), which correspond to the case where
\begin{enumerate}[(a)]
\item	the observer finds out the outcome of the measurement and describes the state of the system after the measurement
	\emph{conditioned on} that outcome.
\item	the observer describes the system after the measurement by a subnormalized state for the hypothetical case that a particular 
	outcome occurred, incorporating the probability of that outcome into the post-measurement state.
\item	the observer does \emph{not} find out the outcome of the measurement and describes the state of the system 
	after the measurement, knowing only that the measurement has been performed.
\end{enumerate}
A physical theory has to allow for a mathematical description for all of these cases. Each of the three situations can be 
described by a particular kind of map. To understand the difference between them, 
it is helpful to see how these maps look like for the particular case of quantum theory. There, if the measurement is a 
projective measurement $\mathcal{M} = \{ P_i \}_{i=1}^n$, the maps are given by \emph{L{\"u}ders projections} 
\cite{Lue50, *[{English reprint: }] Lue06} (the literature is ambiguous about which of the three maps is called a 
L{\"u}ders projection, but as they are very closely related, this usually does not lead to problems). The situations (a), (b) 
and (c) above are described by the following maps: 
\begin{enumerate}[(a)]
\item If the outcome associated with projector $P_k$ is measured, then the state is transformed as
\begin{align*}
\rho \mapsto \frac{P_k \rho P_k}{\tr(P_k \rho)} \,.
\end{align*}
\item Considering the outcome associated with projector $P_k$, the state transforms into a subnormalized state as
\begin{align*}
\rho \mapsto P_k \rho P_k \,.
\end{align*}
\item If the outcome of the measurement is unknown, the state is transformed as
\begin{align*}
\rho \mapsto \sum\limits_{i=1}^n \tr(P_i \rho) \frac{P_i \rho P_i}{\tr(P_i \rho)} = \sum\limits_{i=1}^n P_i \rho P_i \,.
\end{align*}
\end{enumerate}
Most introductory textbooks on quantum theory only discuss situation (a). Note that (a) is not a linear map. By the definition that 
we will make below, it should not be called a transformation. The maps (b) and (c) are linear. The map (b) describes what 
L{\"u}ders calls a ``measurement followed by selection'', whereas the map (c) describes what he calls a ``measurement 
followed by aggregation'' \cite{Lue50, *[{English reprint: }] Lue06}. 

The preceding discussion allows us to understand what we mean by a measurement-transformation. 
\emph{By a measurement-transformation, we mean a map of type (b)}. Note that such a map leads to 
subnormalized post-measurement states rather than normalized ones. The norm of the post-measurement state (the trace-norm 
in the quantum case) is equal to the probability that the outcome occurs (which is what we mean by ``the probability of that outcome 
is incorporated into the state''). 

Choosing maps of type (b) (rather than maps of type (a) or (c)) as the subject matter is not a relevant restriction since the three types 
of maps are so closely related that insights into one of these maps translate into insights into the other maps as well. In particular, 
from the map of type (b), one can construct the map of type (a) by rescaling the images with the inverse probability and the map of 
type (c) by summing up over all outcomes.

With the above motivation in mind, we now proceed to the task of formally defining what we mean by a measurement-transformation 
on an abstract state space. A \emph{transformation} $T$ on an abstract state space $A$ is a linear map $T: A \rightarrow A$ 
such that $T(\Omega_A) \subseteq \Omega_A^{\leq 1}$. The motivation for the linearity of transformations is similar to 
the motivation for the linearity of effects. The linearity expresses a compatibility condition for probabilistically prepared states: 
If the system is in a state $\omega$ with probability $p$ and in a state $\tau$ with probability $1-p$ before the transformation, 
then the transformed state $pT(\omega) + (1-p)T(\tau)$ has to coincide with $T(p \omega + (1-p)\tau)$ since 
$p \omega + (1-p)\tau$ is regarded as a state in its own right. (A more rigorous argument would require 
$p f(T(\omega)) + (1-p)f(T(\tau)) =  f(T(p \omega + (1-p)\tau))$ for all effects $f$, which eventually boils down to what we have 
just required.) A \emph{measurement}-transformation has to satisfy one more condition. As we have explained above, a 
measurement-transformation is associated with a particular outcome, or more precisely, with a particular effect. If $T$ is a 
measurement-transformation for an effect $f$, then we require that the norm $u_A(T(\omega))$ of the transformed state 
is equal to the probability $f(\omega)$ for measuring the outcome associated with $f$. In short, we require
\begin{align*}
	u_A \circ T = f \,.
\end{align*}
In quantum theory, where $u_A$ is given by the trace, this property is satisfied for projective measurements since 
the L\"uders projection gives $\tr(P \rho P) = \tr(P^2 \rho) = \tr(P \rho)$. 

We will only consider measurement-transformations for a special class of effects which we call \emph{pure} effects. 
We say that an effect $f \in E_A$ is pure if it is an extreme point of the (convex) set of effects $E_A$, and we say that a 
measurement $\mathcal{M} = \{f_1, \ldots, f_n\}$ is pure if every effect $f_1, \ldots, f_n$ is pure. It turns out that in the case 
of quantum theory, an effect $F \mapsto \tr(F\rho)$ of a POVM element $F$ is pure if and only if $F$ is a projector \cite{Pfi12}. 
Thus, we only consider measurement-transformations for a class of effects which, in the case of quantum theory, 
reduces to projectors. For this class, the measurement-transformations are given by L\"uders projections. 
The fact that we will restrict our considerations to pure effects is \emph{not} a restriction of the validity of our result. 
Quite the contrary, this makes our result stronger. As we will see below, our postulate claims a property of 
measurement-transformations for \emph{pure} effects rather than claiming this property for \emph{all} effects. 
This results in a \emph{weaker} postulate, so every implication derived from this postulate leads to a \emph{stronger} result. 
As we will see later, we will restrict the claim of the postulate to an even smaller subclass of effects (see the Methods section 
and the Supplementary Information for further details).

In a nutshell, a measurement-transformation for a \emph{pure} effect $f$ 
is a linear map $T: A \rightarrow A$ with $T(\Omega_A) \subseteq \Omega_A^{\leq 1}$ and $u_A \circ T = f$.

\section{Main result}

Let us first state our postulate. For a mathematically precise formulation, 
we refer to the Methods section and the Supplementary Information of this article. 

\begin{postulate}[No information gain implies no disturbance]
Every pure measurement can be performed in a way such that the states for which it yields a \emph{certain} outcome 
(i.e. the states with an outcome of probability one) are left invariant.
\end{postulate}

In more technical terms, the postulate states that for every pure effect $f \in E_A$, there exists an associated 
measurement-transformation $T$ with $u_A \circ T = f$ such that for all states $\omega \in \Omega_A$ 
with $f(\omega) = 1$, we have that $T(\omega) = \omega$. The existence of such a measurement-transformation $T$ is 
what is meant by saying that there exists a way to perform the measurement. 
Furthermore, note that without looking at the definition of a measurement transformation, saying that ``there exists a way to perform
the measurement'' may appear trivial by itself. After all, doing nothing and outputting the measurement outcome (associated with) $f$ 
preserves $\omega$ and yields $f$ with probability 1. This case is ruled out by the definition of a measurement-transformation. 
More precisely, note that $T$ must be such that $(u_A \circ T)(\omega') = f(\omega')$ for \emph{all} states $\omega' \in \Omega_A$. 
That is, it yields the correct probabilities for any state that we wish to measure.

It is interesting to note that the actual proof of our main result only needs an even weaker, but rather technical requirement (see the Methods section).
To see the link to information gain, note that the Shannon information content (see e.g.~\cite{mackay}) $- \log f(\omega)$ is zero for any outcome of
an experiment that occurs with certainty. As such, $f(\omega) = 1$ is equivalent to stating that no information gain occurs. The demand
that $T(\omega) = \omega$ says that the state is unchanged, i.e., no disturbance has occurred.

Quantum theory and classical theory satisfy this postulate. In quantum theory, for example, if a system is in a
state $\rho$ such that a projective measurement $\{ P_i \}_{i=1}^n$ has some outcome $k$ with probability 
$\tr(P_k \rho) = 1$, then the transformation $\rho \mapsto P_k \rho P_k$ leaves the state invariant. 
Quantum theory even satisfies the postulate in a much stronger form in the sense that little information gain also causes only little disturbance. 
This can be seen from a special case of the \emph{gentle measurement lemma} \cite{Win99, Win98}. It states that if measuring 
an outcome associated with a projector $F$ has probability $\tr(F \rho) \geq 1 - \epsilon$, 
then measuring that outcome disturbes the state by no more than $\Vert \rho - F \rho F \Vert_1 \leq \sqrt{8 \epsilon}$. 
Setting $\epsilon = 0$, this reduces to our postulate. However, we emphasize that our 
postulate is much weaker than postulating the gentle measurement lemma. We also note that our postulate does not make
any assumptions about locality, i.e., it does not make a statement about whether verification measurements of bipartite states 
can be implemented on local quantum systems or locally disturb the state as has been considered in~\cite{popescuVaidman}.

Even though the statement of the postulate is very concise, it may appear unsatisfying since it involves the abstract concept of a state, which is something 
that one cannot observe directly. However, it can be reformulated in purely operational terms, referring only to 
directly observable objects, namely measurement statistics. Such a reformulation is possible because two states 
can be regarded as being identical if and only if they induce the same measurement statistics for every measurement (in more mathematical 
terms, a state $\omega$ is an equivalence class under the relation 
$\omega \sim \omega' \Leftrightarrow (f(\omega) = f(\omega')$ for all $f \in E_A$))~\cite{ts:entropy}. Hence, instead of making statements 
about states, one can make statements about the statistics of all potential measurements. Figure \ref{state-to-stat-fig} 
illustrates the idea of this reformulation. 

\begin{figure}[htb]

\fbox{
\begin{pspicture}[showgrid=false](0, -10)(8, 5.5)
	\psset{linewidth=0.7\pslinewidth}
	\newcommand{\prepbox}[1]{
		\pnode(0.4,0.2){boxbl}
		\pnode(0.8,0.2){boxbr}
		\pnode(0.8,0.8){boxtr}
		\pnode(0.4,0.8){boxtl}
		\pcline(boxbl)(boxbr)
		\pcline(boxbr)(boxtr)
		\pcline(boxtr)(boxtl)
		\psarc(0.4, 0.5){0.3}{90}{270}
		\rput[c](0.5,0.5){$\mathcal{#1}$}
	}
	\newcommand{\measbox}[1]{
		\pnode(0.2,0.2){boxbl}
		\pnode(0.8,0.2){boxbr}
		\pnode(0.8,0.8){boxtr}
		\pnode(0.2,0.8){boxtl}
		\pspolygon(boxbl)(boxbr)(boxtr)(boxtl)
		\rput[c](0.5,0.5){$\mathcal{#1}$}
	}
	\newcommand{\myline}[2]{
		\pnode(#1){node1}
		\pnode(#2){node2}
		\pcline(node1)(node2)
	}
	\newcommand{\myoline}[3]{
		\pnode(#1){node1}
		\pnode(#2){node2}
		\pcline(node1)(node2)
		\naput[labelsep=2pt]{#3}
	}
	\newcommand{\myodline}[3]{
		\pnode(#1){node1}
		\pnode(#2){node2}
		\pcline[linestyle=dashed](node1)(node2)
		\naput[labelsep=2pt]{#3}
	}
	\newcommand{\myncputline}[3]{
		\pnode(#1){node1}
		\pnode(#2){node2}
		\pcline(node1)(node2)
		\ncput*{#3}
	}
	\newcommand{\myuputline}[3]{
		\pnode(#1){node1}
		\pnode(#2){node2}
		\pcline(node1)(node2)
		\uput[r](node2){#3}
	}
	\newcommand{\myuputlined}[3]{
		\pnode(#1){node1}
		\pnode(#2){node2}
		\pcline{->}(node1)(node2)
		\rput[t](node2){#3}
	}
	
	\rput[tl](0, 5.5){\parbox{8cm}{Consider a preparation $\mathcal{P}$ which outputs an initial state $\omega \in \Omega_A$ and
							a measurement $\mathcal{M} = \{ f_1, \ldots, f_n \}$ such that $f_k(\omega) = 1$ 
							for some $k$. According to the postulate, the state of the system after the following 
							two experiments are identical:}}
	
	\newcommand{\exponetwo}{
		\rput[bl](0,3){\prepbox{P}}
		\myncputline{0.8, 3.5}{2, 3.5}{$\omega$}
		\rput[bl](1.8,3){\measbox{M}}
		\myuputline{2.6, 3.5}{3, 3.5}{$T_k(\omega)$}
		\myuputlined{2.3, 3.2}{2.3, 2.8}{\parbox{2.5cm}{\centering outcome $k$ \\ with certainty}}
		\rput[bl](0,1){\prepbox{P}}
		\myuputline{0.8, 1.5}{3, 1.5}{$\omega$}
		\psbrace[braceWidth=0.3pt, ref=t, nodesepB=-5pt, nodesepA=30pt, braceWidthInner=0.1, braceWidthOuter=0.2]
			(4.1,1.3)(4.1,3.8){}
		\rput[l](4.6,2.55){\parbox{3cm}{\raggedright identical states, \\ $T_k(\omega) = \omega$}}
	}
	
	\rput[tl](0,-0.5){\exponetwo}
	
	\rput[tl](0, 0.1){\parbox{8cm}{Thus, if the two experiments are followed by any measurement, say 
							$\mathcal{N} = \{ g_1, \ldots, g_l \}$, then the statistics of the 
							$\mathcal{N}$-measurement coincide:}}
							
	\newcommand{\expsagain}{
		\rput[bl](0,3){\prepbox{P}}
		\myncputline{0.8, 3.5}{2, 3.5}{$\omega$}
		\rput[bl](1.8,3){\measbox{M}}
		\myuputline{2.6, 3.5}{3, 3.5}{$T_k(\omega)$}
		\myuputlined{2.3, 3.2}{2.3, 2.8}{\parbox{2.5cm}{\centering outcome $k$ \\ with certainty}}
		\rput[bl](0,1){\prepbox{P}}
		\myuputline{0.8, 1.5}{3, 1.5}{$\omega$}
		\myline{4.1, 3.5}{4.5, 3.5}
		\rput[bl](4.3, 3){\measbox{N}}
		\myline{3.6, 1.5}{4.5, 1.5}
		\rput[bl](4.3, 1){\measbox{N}}
		\pnode(4.8, 3.2){node1}
		\pnode(4.8, 2.7){node2}
		\pcline{->}(node1)(node2)
		\rput[t](5.3, 2.6){$p_{\mathcal{N}}(j) = g_j(\omega)$}
		\psbrace[braceWidth=0.3pt, ref=t, nodesepB=-5pt, nodesepA=30pt, braceWidthInner=0.1, braceWidthOuter=0.2]
			(6.3,1.3)(6.3,3.8){}
		\rput[l](6.8,2.55){\parbox{1.5cm}{\raggedright identical statistics for every measurement~$\mathcal{N}$}}
		\pnode(4.8, 1.8){node1}
		\pnode(4.8, 2.2){node2}
		\pcline{->}(node1)(node2)
	}
	
	\rput[tl](0, -5.2){\expsagain}
	
	\rput[tl](0, -4.4){\parbox{8cm}{The $\mathcal{N}$-statistics coincide 
							\emph{for every measurement} $\mathcal{N}$. This is \emph{equivalent} to saying 
							that the states prior to the $\mathcal{N}$-measurement (i.e. $T_k(\omega)$ 
							and $\omega$) are identical. Thus, we do not need to refer to states and 
							can reformulate the postulate as: \emph{If a measurement 
							has a definite outcome, then performing this measurement does not influence 
							the statistics of any subsequent measurement}. Diagrammatically, }
	}
	
	\newcommand{\finalexp}{
		\rput[bl](0,3){\prepbox{P}}
		\myline{0.8, 3.5}{1.1, 3.5}
		\rput[bl](0.9,3){\measbox{M}}
		\myline{1.7, 3.5}{2, 3.5}
		\rput[bl](1.8,3){\measbox{N}}
		\pnode(1.4, 3.2){p1}
		\pnode(1.4, 2.9){p2}
		\pcline{->}(p1)(p2)
		\rput[t](p2){definite outcome}
		\rput[bl](0,1.5){\prepbox{P}}
		\myline{0.8, 2}{2, 2}
		\rput[bl](1.8,1.5){\measbox{N}}
		\pcline{->}(2.6,3.5)(3.5,2.9)
		\pcline{->}(2.6,2)(3.5,2.5)
		\rput[tl](3.6, 2.85){identical statistics for every $\mathcal{N}$}
	}
	
	\rput[tl](0, -11.5){\finalexp}
		
\end{pspicture}
}

\caption{	\label{state-to-stat-fig}
		A reformulation of the postulate in purely operational terms: 
		Instead of referring to initial and post-measurement states, the reformulated version 
		states that a measurement with a definite outcome does not influence the statistics of 
		any subsequent measurement, so it only refers to directly observable quantities.}
\end{figure}

In terms of the postulate, our result can now be stated as follows.

\begin{result}
An abstract state space which satisfies the postulate is either non-discrete (i.e. it has infinitely many pure states) 
or it is classical.
\end{result}

This means that if a physical system is described by an abstract state space where the set of states $\Omega_A$ 
is a polytope which is not a simplex (i.e. if it is a discrete non-classical system), then it violates our postulate. 

What is more, our result is robust in the sense that discrete non-classical theories are ruled out even if the postulate is weakened to 
an approximate version. To formulate this approximate version of the result, we assume that $A$ 
is equipped with a norm $\Vert \cdot \Vert_A$. This induces a distance function 
$\text{dist}(\omega, \omega') := \Vert \omega - \omega' \Vert_A$ on $A$. We prove that for every discrete non-classical theory, 
equipped with some norm $\Vert \cdot \Vert_A$, there is a positive number $\epsilon > 0$ such that the implication
$f(\omega) = 1 \Rightarrow \Vert T(\omega) - \omega \Vert_A \leq \epsilon$ (where $T$ is the measurement-transformation 
for $f$) cannot be satisfied for every pure effect $f \in E_A$. We prove this approximate case, which is a \emph{stronger} 
version of the result, in Section B of the Supplementary Information. 

\section{Discussion}

Our simple postulate rules out discrete non-classical theories, while classical and quantum theory satisfy the postulate. 
Read in the contrapositive, our postulate says that disturbance implies information gain. 
Any theory that does not satisfy our postulate
thus allows for disturbance without a corresponding ability of information gain. 
Note that even in a theory which a priori only defines transformations $T$, one can define effects
as $u_A \circ T$.

We also note that our postulate rules out several alternatives to quantum theory, most notably the famous PR-box~\cite{PR94,PR1,PR2}
that allows a violation of the CHSH inequality~\cite{chsh} far beyond the limits of quantum theory. More specifically, the PR-box achieves the algebraically maximal violation of the CHSH inequality, while still respecting the law that no information can travel faster than light. This is in spirit similar to other approaches such as information-causality~\cite{infoCausality}, 
communication complexity assumptions~\cite{wim:thesis}, the assumption of local quantum mechanics~\cite{localQM} or the uncertainty principle~\cite{js:uncertainty}.
We emphasize, however, that whereas this is a nice byproduct of our result, our real aim lies in the study of local physical systems with the goal to
identify just \emph{one} postulate that sheds light on the simple question whether the state space should be discrete or continuous.
It is very satisfying that this question can be understood by introducing just a single postulate.

One may wonder whether our postulate does in fact rule out all theories but classical and quantum mechanics. 
To answer this question, let us first be more precise about what we mean by ``a theory is (not) ruled out by the postulate''. 
We mentioned 
in the preceding section that for general abstract state spaces, measurement-transformations are not specified, 
so we cannot make statements saying that \emph{the} measurement-transformations do (not) satisfy our postulate. 
Instead, we can discuss the following well-defined question: Given an abstract state space, is it true that for every pure effect, 
there \emph{exists} a measurement-transformation which satisfies our postulate? If this is the case, then we say that the 
theory \emph{can satisfy} the postulate, or that it is \emph{not ruled out} by the postulate. If this is not true, then we say that the theory 
\emph{cannot satisfy} the postulate, or that it is \emph{ruled out} by the postulate. 

This is the precise meaning of our statement that 
``discrete non-classical theories are ruled out by the postulate''. Using this terminology, we can identify a class of theories which, in 
addition to classical and quantum theory, is not ruled out by the postulate: the strictly convex theories can satisfy our postulate. 
There are more theories which can satisfy the postulate, but we do not know a concise 
classification. For example, a state space $\Omega_A$ formed like a piece of pizza is ruled out by the postulate, while a state space 
formed like an ice cream cone is not. Figure \ref{theories-overview-fig} gives an overview.

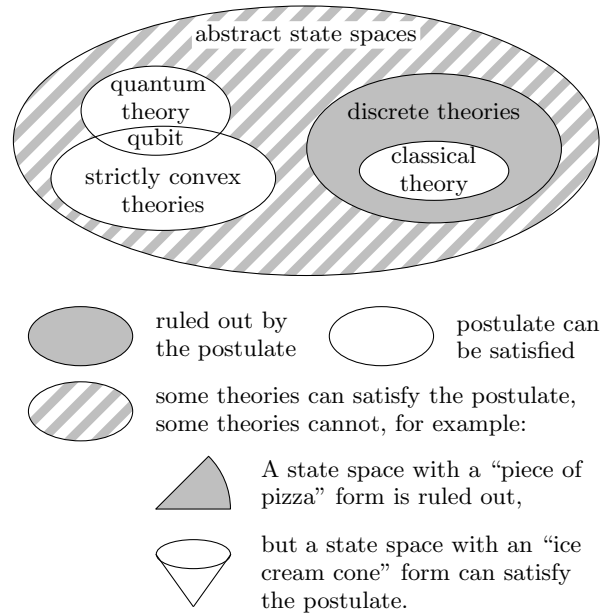
\begin{figure}[htb]

\begin{pspicture}[showgrid=false](-4,-6.5)(4,2)
	\psset{linewidth=0.5\pslinewidth}
	\psellipse[fillstyle=hlines, hatchcolor=lightgray, hatchwidth=0.12](0,0.1)(3.9, 1.8) 
	\rput[t](0, 1.7){\psframebox[framesep=1pt,linestyle=none, fillstyle=solid, fillcolor=white]{abstract state spaces}}
	\psellipse[fillstyle=solid, linestyle=none, fillcolor=white](-1.9, -0.4)(1.5,0.7) 
	\psellipse[fillstyle=solid, fillcolor=white](-2, 0.5)(1, 0.6) 
	\psellipse(-1.9, -0.4)(1.5,0.7) 
	\rput[c](-2, 0.12){qubit}
	\rput[c](-1.9, -0.55){\parbox{2.5cm}{\centering strictly convex theories}}
	\rput[c](-2, 0.65){\parbox{1.2cm}{\centering quantum theory}}
	\psellipse[fillstyle=solid, fillcolor=lightgray](1.7, 0)(1.7,1) 
	\psellipse[fillstyle=solid, fillcolor=white](1.7, -0.3)(1, 0.4) 
	\rput[c](1.7, -0.3){\parbox{1.2cm}{\centering classical theory}}
	\rput[c](1.7, 0.5){\psframebox[fillstyle=none, fillcolor=white, linestyle=none]{discrete theories}}
	
	\psellipse[fillstyle=solid, fillcolor=lightgray](-3,-2.5)(0.7, 0.4)
	\rput[l](-2, -2.5){\parbox{2cm}{\raggedright ruled out by the postulate}}
	\psellipse[fillstyle=solid, fillcolor=white](1,-2.5)(0.7, 0.4)
	\rput[l](2, -2.5){\parbox{2cm}{\raggedright postulate can be satisfied}}
	\psellipse[fillstyle=hlines, hatchcolor=lightgray, hatchwidth=0.12](-3,-3.5)(0.7, 0.4)
	\rput[l](-2, -3.5){\parbox{6cm}{\raggedright some theories can satisfy the postulate, some theories cannot, for example:}}
	\rput[tl](-2, -4.8){\pswedge[fillstyle=solid, fillcolor=lightgray](0, 0){1}{0}{45}}
	\rput[tl](-0.6,-4.15){\parbox{4.5cm}{\raggedright A state space with a ``piece of pizza'' form is ruled out,}}
	\newcommand{\conefig}{
		\psellipticarc(0, 0)(0.5, 0.2){0}{180}
		\psellipticarc(0, 0)(0.5, 0.2){180}{0}
		\psline(-0.5,0)(0,0.7)
		\psline(0.5,0)(0,0.7)
	}
	\rput[tl](-0.6,-5.15){\parbox{4.5cm}{\raggedright but a state space with an ``ice cream cone'' form can satisfy the postulate.}}
	\rput[tl]{180}(-1.5, -5.4){\conefig}
\end{pspicture}

\caption{\label{theories-overview-fig} An overview over the abstract state spaces ruled out by the postulate.}
\end{figure}

In the recent past, there have been several attempts to derive (finite-dimensional) quantum theory within a framework of 
probabilistic theories \cite{Har01, CDP11, MM11, DB11}. The idea is the following. 
One starts with a very general framework of probabilistic theories (like the abstract state space formalism). Then, one imposes 
a few physical postulates (our postulate can be seen as one such postulate). If one manages to show that 
all theories in this framework 
other than quantum theory are ruled out by these physical postulates, then this can be seen as a \emph{physical} derivation 
of quantum theory. As our postulate rules out quite a large fraction of all possible abstract state spaces already (see Fig.\ 
\ref{theories-overview-fig}), it seems promising that adding just a few more postulates might be sufficient to rule out all theories 
except for quantum theory. 

However, we do not make such an attempt and focus on one particular aspect only, introducing only one postulate. 
What makes our postulate special is that its nature is very different from the postulates 
that have been considered in this context so far. Many approaches focus on the aspect of non-locality, introducing rules for 
how physical systems are combined to form bi- or multi-partite systems. In contrast, our approach deals with local state spaces 
only, making a statement about post-measurement states. Within probabilistic theories, this aspect has gained less attention 
in the literature so far. The fact that, within the framework of abstract state spaces, we introduce just \emph{one} postulate (instead of a \emph{set} of postulates) helps us to understand its influence on one particular aspect of physical theories.

One might argue that an experimental proof of the non-discreteness of physical state spaces needs infinite measurement precision 
since the verification of the postulate that $T(\omega) = \omega$ (strict equality) requires the verification that $\omega$ and $T(\omega)$ 
give rise to the same measurement statistics (to arbitrary precision). Hence, our result is experimentally less accessible than other 
no-go theorems (e.g.\ the Bell Inequality, where it is sufficient to verify the violation of a single statistical inequality). 
There is a partial reply to this objection. As we have mentioned before, there is an approximate version 
of our result. It states that for a \emph{given} polytope $P$, there is a positive number $\epsilon_P > 0$ such that the postulate 
can be weakened to the following form (without changing the validity of the result): 
If a measurement on a state has an outcome with probability one, then performing 
the measurement does not change the state of the system by more than $\epsilon_P$ (for details, see Part B of the 
Supplementary Information). Thus, even if one weakens the postulate to allow for an $\epsilon_P$-disturbance of the state, 
it still rules out the polytope $P$. This is a \emph{stronger} form of the result. It states that in order to rule out a \emph{given} 
polytope experimentally, only finite measurement precision is needed (quantified by $\epsilon_P$). However, the allowed 
disturbance $\epsilon_P$ depends on the polytope $P$, so in order to rule out \emph{all} polytopes experimentally, infinite 
measurement precision is needed because for every measurement error, there could be a polytopic theory for the measured system 
for which the allowed disturbance $\epsilon_P$ is too small to be tested.

\section{Methods}

In this section, we sketch the idea of the proof of our main result. This will lead to geometric pictures that illustrate the 
incompatibility of non-classical discrete state spaces with our postulate (Fig.\ \ref{geom-crit-fig} and Fig.\ \ref{contra-figure}). 
For the full version of the proof and for a proof of 
the approximate version of our result, see the Supplementary Information of this article. 

Here, we aim for a geometric 
understanding of the proof. It is mainly based on a lemma which establishes geometric criteria for a set of states 
$\Omega_A$ which is compatible with our postulate. To illustrate this lemma, we provide two very basic examples 
which violate these criteria: the square and the pentagon (see Fig.\ \ref{geom-crit-fig}). For these two examples, 
it is easy to see geometrically why they cannot satisfy our postulate (as we will illustrate in Fig.\
\ref{contra-figure}). Finally, we describe roughly how we prove that every polytope $\Omega_A$ which satisfies 
the conditions of the lemma is a simplex (which is our main result).

Before we sketch the proof of the main result, it is useful to define in a bit more detail what an abstract state space is. 
For detailed definitions of the framework, see the Supplementary Information 
of this article, for a detailed motivation of the framework with detailed examples, see Chapter 3 in \cite{Pfi12}.

As illustrated in Fig.\ \ref{abs-st-sp-visualization}, an abstract state space is fully specified by a tuple $(A, A_+, u_A)$, where 
$A$ is a real finite-dimensional vector space, $A_+$ is a cone in $A$ and $u_A$ is a linear functional on $A$ 
(called the \emph{unit effect}). This linear functional is required to be strictly positive on the cone $A_+$ 
(i.e. $u_A(\omega) > 0$ for all $\omega \in A_+ \setminus \{0\}$). The tuple $(A, A_+, u_A)$ gives rise to the normalized states 
$\Omega_A$ and the subnormalized states $\Omega_A^{\leq 1}$ in the following way (c.f. Fig.\ \ref{abs-st-sp-visualization}):
\begin{align*}
&\Omega_A := \{ \omega \in A_+ \mid u_A(\omega) = 1 \} \,, \\
&\Omega_A^{\leq 1} := \{ \omega \in A_+ \mid u_A(\omega) \leq 1 \} \,.
\end{align*}

\begin{figure}[htb]
\centering

\begin{pspicture}[showgrid=false](-4,-0.5)(4,3.3)
	\psset{viewpoint=26 10 5,Decran=40}
	\psset{solidmemory}
	\psSolid[object=new, linewidth=0.5\pslinewidth,
	action=draw*,
	name=B,
	fcol=4 (.75 setgray),
	sommets= 
	0 0 0 
	0.420448 0.420448 0.5 
	-0.420448 0.420448 0.5 
	-0.420448 -0.420448 0.5 
	0.420448 -0.420448 0.5 
	0 0 1 
	0 1.18921 1 
	-1.18921 0 1 
	0 -1.18921 1 
	1.18921 0 1, 
	faces={
	[0 7 6]
	[0 8 7]
	[0 9 8]
	[0 6 9]
	[6 7 8 9]
	}]%
	\psSolid[object=line, args=0 1.18921 1 0 2.37841 2, linestyle=dotted]
	\psSolid[object=line, args=-1.18921 0 1 -2.37841 0 2, linestyle=dotted]
	\psSolid[object=line, args=0 -1.18921 1 0 -2.37841 2, linestyle=dotted]
	\psSolid[object=line, args=1.18921 0 1 2.37841 0 2, linestyle=dotted]
	\psPoint(0,0,1.7){omegas}
	\psdots[dotsize=0](omegas)
	\uput[u](omegas){$A_+$}
	\psPoint(0,0,0){zero}
	\psdots[dotsize=0.1](zero)
	\uput[d](zero){$0$}
	\rput[b](0,0.5){$\Omega_A^{\leq 1}$}
	\psPoint(0,0,0.78){omega}
	\psdots[dotsize=0](omega)
	\uput[u](omega){$\Omega_A$}
	\psPoint(3.7,0,0.95){f}
	\psdots[dotsize=0](f)
	\uput[u](f){$u_A(\omega) = 1$}
	\psSolid[object=plan, linewidth=0.5\pslinewidth,
	definition=equation, 
	args={[0 0 1 -1]}, 
	base=-2 2 -2 2,action=draw,name=awesome]
\end{pspicture}

\caption{\label{abs-st-sp-visualization} Visualization of the state cone: The states of any normalization are given by a cone $A_+$ in the real vector space $A$. The linear functional $u_A$ gives the normalization of a state, so the intersection of $A_+$ with the plane described by $u_A(\omega) = 1$ gives the normalized states, while the subnormalized states $\Omega_A^{\leq 1}$ are those elements of $A_+$ where $u_A$ takes values between 0 and 1.}
\end{figure}
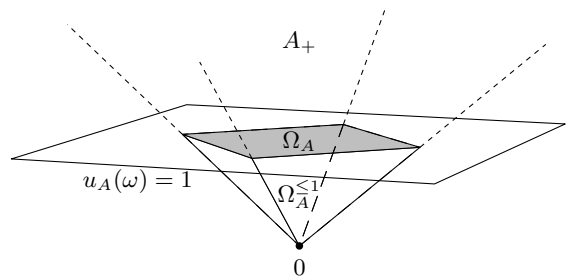

The set $E_A$ of effects on $A$ is given by the linear functionals which take values between zero and one on the states $\Omega_A$, i.e.
\begin{align*}
E_A := \{ f \in A^* \mid 0 \leq f(\omega) \leq 1 \quad \forall \omega \in \Omega_A \} \,,
\end{align*}
where $A^*$ is the dual space of $A$. A measurement is given by a finite set of effects $\mathcal{M} = \{ f_1, \ldots, f_n \} \subseteq E_A$ 
such that the effects sum up to the unit effect $u_A$, i.e. $\sum_{i=1}^n f_i = u_A$. Recall that if the system is in the state $\omega \in \Omega_A$ 
prior to the measurement described by $\mathcal{M} = \{ f_1, \ldots, f_n \}$, then the probability for outcome $k$ is given by $f_k(\omega)$.

As we have mentioned earlier, we restrict ourselves to \emph{pure} effects when we deal with post-measurement states 
(i.e. with measurement-transformations). The pure effects are the extreme points of $E_A$. A pure effect $f \in E_A$ has the property 
that the set of states $\omega$ which have probability $f(\omega) = 1$ is a \emph{face} of $\Omega_A$ \cite{Pfi12}. A face of $\Omega_A$ 
is a convex subset $F \subseteq \Omega_A$ with the property that every line segment whose endpoints are contained in $F$ must be fully 
contained in $F$, that is a face is some kind of ``extreme subset''. For a pure effect $f$, this allows us to define the \emph{certain face} $F_f$ of 
$f$ by
\begin{align*}
	F_f := \{ \omega \in \Omega_A \mid f(\omega) = 1 \} \,.
\end{align*}
Analogously, the set of states $\omega$ which have probability $f(\omega) = 0$ is a face of $\Omega_A$ as well \cite{Pfi12}. We call it 
the \emph{impossible face} of $f$ and define it by
\begin{align}
	\label{imp-eq}
	\overline F_f := \{ \omega \in \Omega_A \mid f(\omega) = 0 \} \,.
\end{align}
The notion of the certain face and the impossible face of an effect is central in our proof.

A \emph{transformation} on an abstract state space is a linear map $T: A \rightarrow A$ which is positive 
(i.e. $T(A_+) \subseteq A_+$) and does not increase the norm of the states, i.e. $u_A(T(\omega)) \leq u_A(\omega)$ for all 
$\omega \in A_+$. Equivalently, a transformation is a linear map $T: A \rightarrow A$ with $T(\Omega_A) \subseteq \Omega_A^{\leq 1}$.
Recall that we describe the state change due to a measurement by introducing \emph{measurement-transformations}. 
If a measurement yields an outcome associated to a pure effect $f \in E_A$, then the transformation of the state is described by 
$\omega \mapsto T(\omega)$, where $T$ is the measurement-transformation for $f$. As mentioned, we require that $T$ is a 
transformation which satisfies $u_A \circ T = f$.

With these definitions at hand, we can formulate our postulate as follows:

\begin{postulate}
For every pure effect $f \in E_A$, there is a transformation $T: A \rightarrow A$ such that $f = u_A \circ T$ and $T(\omega) = \omega$ 
for every $\omega \in F_f$.
\end{postulate}

Note that we only postulate the \emph{existence} of a measurement-transformation for $f$ that satisfies our postulate. 
For the actual proof, we will require an even weaker condition. We will not require the existence of such a measurement-transformation 
for every pure effect but only for pure effects for which the certain face $F_f$ is what we call a \emph{minus-face} of $\Omega_A$. This 
is a face which is exactly one dimension smaller than $\Omega_A$. This weakening of the postulate is particularly useful for the proof 
of the approximate version of our result.

To derive the result, we first prove a lemma which establishes geometric criteria which a set of states $\Omega_A$ has to satisfy 
to be compatible with our postulate. Given a pure effect $f \in E_A$, the lemma tells us geometric criteria for the 
certain face $F_f$ and the impossible face $\overline F_f$ of $f$ which are necessary for the existence of a 
measurement-transformation satisfying our postulate. It reads as follows:

\begin{lemm}
	Let $(A, A_+, u_A)$ be an abstract state space, let $f \in E_A$ be a pure effect. If there exists a transformation 
	$T: A \rightarrow A$ such that $u_A \circ T = f$ and $T(\omega) = \omega$ for every $\omega \in F_f$, then
	\begin{enumerate}[(a)]
	\item $\dim F_f + \dim \overline F_f \leq \dim \Omega_A - 1$ and
	\item if $\overline F_f$ consists of not more than one point, then $\aff(F_f \cup \overline F_f) \cap \Omega_A = \conv(F_f \cup \overline F_f)$,
	\end{enumerate}
	where $\aff( \, \cdot \, )$ and $\conv( \, \cdot \, )$ denote the affine hull and the convex hull, respectively (the reader unfamiliar 
	with these two notions is referred to the Supplementary Information of this article).
\end{lemm}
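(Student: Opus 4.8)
\emph{Proof plan.} The strategy is to extract from the hypothesis two rigidity statements about $T$ --- that it acts as the identity on the linear span of the certain face and annihilates the linear span of the impossible face --- and then to read off (a) from a dimension count and (b) from positivity. Write $V := \spa(F_f)$ and $W := \spa(\overline F_f)$ as linear subspaces of $A$. Since $F_f$ and $\overline F_f$ lie in $\Omega_A \subseteq \{\,\omega : u_A(\omega) = 1\,\}$, their affine hulls miss the origin, so $\dim V = \dim F_f + 1$ and $\dim W = \dim \overline F_f + 1$ (with $\dim\emptyset := -1$, the span then being $\{0\}$); for the same reason $\dim\spa(\Omega_A) = \dim\Omega_A + 1$, and both $V$ and $W$ sit inside $\spa(\Omega_A)$.

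The first step is to pin down how $T$ acts on $V$ and on $W$. On $F_f$ we have $T = \mathrm{id}$ by hypothesis, hence $T|_V = \mathrm{id}_V$ by linearity since $F_f$ spans $V$. For $\omega \in \overline F_f$ we get $u_A(T\omega) = (u_A\circ T)(\omega) = f(\omega) = 0$, while $T\omega \in A_+$ by positivity of $T$; strict positivity of $u_A$ on $A_+\setminus\{0\}$ then forces $T\omega = 0$, so $T|_W = 0$ by linearity. Together these give $V\cap W = \{0\}$, since a vector in the intersection is simultaneously fixed and annihilated by $T$. Hence $V\oplus W \subseteq \spa(\Omega_A)$, so $(\dim F_f + 1) + (\dim\overline F_f + 1) = \dim(V\oplus W) \le \dim\Omega_A + 1$, which rearranges to (a).

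For (b), take $x \in \aff(F_f\cup\overline F_f)\cap\Omega_A$. Since $F_f\cup\overline F_f \subseteq V\oplus W$, we can write $x = v + w$ uniquely with $v\in V$, $w\in W$, and applying $T$ gives $Tx = v$. The plan is now to realize $v$ and $w$ as nonnegative multiples of states in $F_f$ and $\overline F_f$. For $v$: it lies in $A_+$ (being $Tx$) with $u_A(v) = u_A(Tx) = f(x)$; if $f(x)>0$ then $\phi := v/f(x) \in \Omega_A$, and since $T$ fixes $V$ pointwise we get $f(\phi) = u_A(T\phi) = u_A(\phi) = 1$, so $\phi\in F_f$ and $v = f(x)\phi$, while if $f(x)=0$ then $u_A(v) = 0$ and positivity forces $v = 0$. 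For $w$: $u_A(w) = u_A(x) - u_A(v) = 1 - f(x) \ge 0$; when $\overline F_f$ is empty this gives $w\in W = \{0\}$ (so $x = v = \phi\in F_f$), and when $\overline F_f = \{\bar\phi\}$ we have $W = \mathbb R\bar\phi$, so $w = \bigl(1-f(x)\bigr)\bar\phi$ with $\bar\phi\in\overline F_f$. Either way $x = f(x)\phi + \bigl(1-f(x)\bigr)\bar\phi$ exhibits $x$ as a convex combination of a point of $F_f$ and a point of $\overline F_f$, so $x\in\conv(F_f\cup\overline F_f)$; the reverse inclusion is immediate from convexity of $\Omega_A$.

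The step I expect to be the crux --- and the reason (b) is restricted to $\overline F_f$ having at most one point --- is putting the component $w = x - v$ back into $\cone(\overline F_f)$. The component $v$ is harmless whatever the dimension of $F_f$: it genuinely lies in $A_+$, and because $T$ fixes $V$ pointwise one can verify membership of $v/f(x)$ in $F_f$ directly. For $w$ there is no such handle: $T$ kills $W$, so it reveals nothing about $w$, and $x - Tx$ is not known to be positive; all that survives is $w\in\spa(\overline F_f)$ with $u_A(w)\ge 0$, and this forces $w\in\cone(\overline F_f)$ exactly when $\spa(\overline F_f)$ is one-dimensional or trivial --- precisely the hypothesis. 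If $\overline F_f$ had higher dimension the dimension bound (a) would still go through, but the convex-hull identity (b) can genuinely fail (as for the pentagon), which is why the lemma keeps the two parts apart.
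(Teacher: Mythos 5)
Your proof is correct. Part (a) is essentially the paper's own argument: both extract $T|_{\spa(F_f)} = I|_{\spa(F_f)}$ and $T|_{\spa(\overline F_f)} = 0$, conclude $\spa(F_f)\cap\spa(\overline F_f)=\{0\}$, and finish with the same dimension count. Part (b), however, takes a genuinely different route. The paper argues with images of whole sets: from $T(F_f)=F_f$ and $T(\overline F_f)=\{0\}$ it gets $T(\conv(F_f\cup\overline F_f))=\conv(F_f\cup\{0\})$, invokes its Lemma~1 (adjoining an affinely independent point preserves faces) to see that this is a face of $\Omega_A^{\leq 1}$ and hence equals $\aff(F_f\cup\{0\})\cap\Omega_A^{\leq 1}$, shows $T(\aff(F_f\cup\overline F_f)\cap\Omega_A)$ is contained in the same set, and then uses injectivity of $T$ restricted to $\aff(F_f\cup\overline F_f)$ (this is where the single-point hypothesis enters for the paper) to compare preimages and force the equality. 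You instead work pointwise: decompose $x\in\aff(F_f\cup\overline F_f)\cap\Omega_A$ as $x=v+w$ with $v\in\spa(F_f)$, $w\in\spa(\overline F_f)$, identify $v=T(x)$, use positivity of $T$ and strict positivity of $u_A$ to write $v=f(x)\phi$ with $\phi\in F_f$ (or $v=0$), and use $\dim\spa(\overline F_f)\leq 1$ together with $u_A(w)=1-f(x)\geq 0$ to place $w$ in $\cone(\overline F_f)$, exhibiting $x$ explicitly as a convex combination. Your version is more elementary and self-contained --- it needs neither Lemma~1 nor the ``face $=$ affine hull $\cap$ set'' fact nor the injectivity/image-comparison step --- and it makes visible exactly where the restriction on $\overline F_f$ is used (nothing constrains the $w$-component when $\spa(\overline F_f)$ has dimension greater than one); the paper's route has the advantage that Lemma~1 is reused later in the proof of the main theorem, so its machinery is not introduced only for this lemma. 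Two cosmetic remarks: when $f(x)=0$ your closing formula $x=f(x)\phi+(1-f(x))\overline\phi$ involves an undefined $\phi$, though the conclusion stands since then $v=0$ and $x=\overline\phi$; and the parenthetical ``(as for the pentagon)'' in your final comment is misplaced --- the pentagon is a case where $\overline F_f$ \emph{is} a single point and the identity in (b) fails, which witnesses the nonexistence of a suitable transformation rather than a failure of the lemma for higher-dimensional $\overline F_f$.
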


To get a geometric idea for the two conditions (a) and (b), it is useful to consider abstract state spaces which violate these 
conditions. The two simplest examples we can think of are the square and the pentagon, depicted in Fig.\ \ref{geom-crit-fig}.

\begin{figure}[htb]
	\psset{linewidth=0.5\pslinewidth}
	\begin{pspicture}[showgrid=false](-2,-3.2)(6,1.5)
		\psline[linewidth=4\pslinewidth](-1.258,-1.25)(1.258,-1.25)
		\psline[linewidth=4\pslinewidth](-1.258,1.25)(1.258,1.25)
		\psline(-1.25,-1.25)(-1.25,1.25)
		\psline(1.25,-1.25)(1.25,1.25)
		\uput[90](0,-1.25){$F_f$}
		\uput[90](0,1.25){$\overline F_f$}
		\rput[tl](-2,-2.05){\parbox{4.2cm}{
			$\dim F_f = \dim \overline F_f = 1$, \\
			$\dim \Omega_A = 2$, so \\
			$\dim F_f + \dim \overline F_f > \dim \Omega_A - 1$
		}}
		\newcommand{\counterpenta}{
			\pspolygon[linestyle=none, fillstyle=solid, fillcolor=lightgray](-0.9,-1.25)(0,1.45)(0.9,-1.25)
			\rput(0,-0.05){\PstPentagon[PstPicture=false, unit=1.5]}
			\psline[linewidth=4\pslinewidth](-0.9,-1.25)(0.9,-1.25)
			\psdot[dotsize=0.15](0,1.45)
			\uput[270](0,-1.25){$F_f$}
			\uput[90](0,1.45){$\overline F_f$}
			\rput[l]{90}(-0.01,-1.1){$\conv(F_f \cup \overline F_f)$}
			\rput[tl](-1.6, -2){\parbox{3.6cm}{\centering
				$\aff(F_f \cup \overline F_f) \cap \Omega_A = \Omega_A$, \\
				but $\conv(F_f \cup \overline F_f) \neq \Omega_A$
			}}
		}
		\rput[l](4,0){\counterpenta}	
	\end{pspicture}
	
	\caption{\label{geom-crit-fig} 	Violation of the conditions stated in the Lemma:
							The square and the pentagon serve as very basic examples of abstract state spaces
							which violate the conditions stated in the Lemma. The square violates condition (a), 
							while the pentagon violates (b).}
\end{figure}

To see why the conditions (a) and (b) are necessary for the existence of a transformation compatible with our postulate, 
we now examine what goes wrong in the case where one of the conditions is violated. If condition (a) is violated, 
a contradiction occurs which we call a \emph{dimension mismatch}. If (b) is violated, then we say that a \emph{shape mismatch} 
occurs. Again, the square and the pentagon serve as good examples for a geometric illustration.

\emph{Dimension mismatch}: If condition (a) is violated (i.e. $\dim F_f + \dim \overline F_f > \dim \Omega_A - 1$), then 
there is no linear map $T$ such that
\begin{align}
&u_A \circ T = f \,, \label{utf} \\
&T(\omega) = \omega \quad \text{for every} \quad \omega \in F_f \quad \text{(postulate)} \,. \label{tww}
\end{align}
In particular, there is no transformation with these two properties. To see this, there are two things to notice. 

First, Equation \eqref{utf} implies that $u_A(T(\omega)) = f(\omega) = 0$ for all $\omega \in \overline F_f$ 
(c.f. the definition \eqref{imp-eq} of $\overline F_f$). Since the zero-vector $\omega = 0$ is the only state (i.e. 
the only element of $\Omega_A^{\leq 1}$) for which $f(\omega) = 0$, it follows that the whole impossible face 
$\overline F_f$ has to be mapped to the zero-vector. By the linearity of $T$, this implies that the restriction 
$T_{\spa(\overline F_f)}$ of $T$ to $\spa(\overline F_f)$ is the zero-operator on $\spa(\overline F_f)$:
\begin{align}
	\label{square-fail-eq}
	T|_{\spa(\overline F_f)} = 0|_{\spa(\overline F_f)} \,.
\end{align}
Secondly, the postulate \eqref{tww} and the linearity of $T$ imply that the restriction 
$T|_{\spa(F_f)}$ of $T$ to $\spa(F_f)$ is the identity operator on $\spa(F_f)$:
\begin{align}
	\label{pentagon-fail-eq}
	T|_{\spa(F_f)} = I|_{\spa(F_f)} \,.
\end{align}
However, in the case where $\dim F_f + \dim \overline F_f > \dim \Omega_A - 1$, equations \eqref{square-fail-eq} and 
\eqref{pentagon-fail-eq} lead to a contradiction. In this case, the intersection $\spa(\overline F_f) \cap \spa(F_f)$ 
is a subspace which is at least one-dimensional (see Fig.\ \ref{contra-figure}). Equations \eqref{square-fail-eq} and 
\eqref{pentagon-fail-eq} imply that on this subspace, $T$ has to be the zero-operator and the identity-operator 
simultaneously, which could only be satisfied if the subspace would be $\{0\}$.

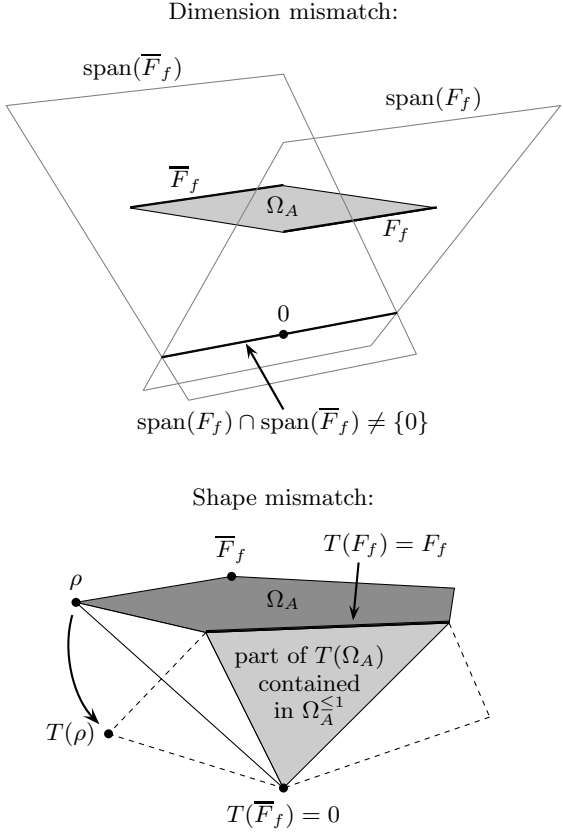
\begin{figure}[htb]
\centering

\begin{pspicture}[showgrid=false](-4,-1.5)(4,4.5)
\psset{viewpoint=26 0 5,Decran=45}
\psset{solidmemory}
\psSolid[object=new,linewidth=0.5\pslinewidth,
action=draw*,
name=B,
sommets= 
0 0 0 
0.420448 0.420448 0.5 
-0.420448 0.420448 0.5 
-0.420448 -0.420448 0.5 
0.420448 -0.420448 0.5 
0 0 1 
0 1.18921 1 
-1.18921 0 1 
0 -1.18921 1 
1.18921 0 1, 
fcol = 0 (.8 setgray),
faces={
[6 7 8 9]}]%
\psSolid[object=line, linewidth=2\pslinewidth,args=0 1.18921 1 1.18921 0 1]
\psSolid[object=line, linewidth=2\pslinewidth,args=0 1.18921 1.01 1.18921 0 1.01]
\psSolid[object=line, linewidth=2\pslinewidth,args=0 1.18921 1.005 1.18921 0 1.005]

\psSolid[object=line, linewidth=2\pslinewidth,args=-1.18921 0 1.005 0 -1.18921 1.005]
\psSolid[object=line, linewidth=2\pslinewidth,args=-1.18921 0 1.01 0 -1.18921 1.01]

\psPoint(0.727673, 0.528686, 1.01){g}

\rput[b](0,4.2){\parbox{8cm}{\centering Dimension mismatch:}}

\psSolid[object=line, linecolor=gray, linewidth=0.7\pslinewidth,args=-1.07029 0.713526 -0.3 0.713526 -1.07029 -0.3]
\psSolid[object=line, linecolor=gray, linewidth=0.7\pslinewidth,args=-1.07029 0.713526 -0.3 0 2.14058 1.8 2.14058 0 1.8 0.713526 -1.07029 -0.3 ]
\psSolid[object=line, linecolor=gray, linewidth=0.7\pslinewidth,args=-2.14058 0 1.8 0 -2.14058 1.8 1.07029 -0.713526 -0.3 -0.713526 1.07029 -0.3 -2.14058 0 1.8]
\psset{linewidth=1.5\pslinewidth}
\psSolid[object=line, linewidth=1.5\pslinewidth,args=-0.920449 0.920449 0 0.920449 -0.920449 0]
\psSolid[object=line, linewidth=1.5\pslinewidth,args=-0.920449 0.920449 0.005 0.920449 -0.920449 0.005]
\psSolid[object=line, linewidth=1.5\pslinewidth,args=-0.920449 0.920449 -0.005 0.920449 -0.920449 -0.005]
\psPoint(0,0,0.0){h}
\psdots[dotsize=0.12](h)
\uput[u](h){0}
\uput[u](2,2.8){$\spa(F_f)$}
\uput[u](-2,3.2){$\spa(\overline F_f)$}
\uput[u](-1.3,1.7){$\overline F_f$}
\uput[u](1.5,1.05){$F_f$}
\uput[u](0, 1.4){$\Omega_A$}
\pnode(0,-1){text}
\pnode(-0.5,-0.12){line}
\pcline[arrowscale=1.3, linewidth=0.7\pslinewidth]{->}(text)(line)
\pnode(0, -0.8){textprime}
\uput[d](textprime){$\spa(F_f) \cap \spa(\overline F_f) \neq \{0\}$}
\end{pspicture}
\begin{pspicture}[showgrid=false](-4,-0.6)(4,4.5)
\psset{viewpoint=26 10 5,Decran=70}
\psset{solidmemory}
\psSolid[object=new,linewidth=0.5\pslinewidth,
action=draw**,
name=A,
sommets= 
0 0 0 
0.152217 0.468477 0.414214 
-0.402248 0.29225 0.447214 
-0.402248 -0.29225 0.447214 
0.153645 -0.472871 0.447214 
0.497206 0 0.447214 
-0.153645 -0.472871 0.552786 
0.402248 -0.29225 0.552786 
0.402248 0.29225 0.552786 
-0.153645 0.472871 0.552786 
-0.497206 0 0.552786 
0 0 1 
0.343561 1.05737 1 
-0.899454 0.653491 1 
-0.899454 -0.653491 1 
0.343561 -1.05737 1 
1.11179 0 1, 
fcol=5 (.55 setgray) 4 (.8 setgray),
faces={
[0 16 15]
[0 15 14]
[0 14 13]
[0 13 12]
[0 12 16]
[12 13 14 15 16]}]%
\psSolid[object=line,linestyle=dotted,args=0.343561 1.05737 1 -0.323511 1.05737 0.4]
\psSolid[object=line,linestyle=dotted,args=-0.323511 1.05737 0.4 0 0 0]
\psSolid[object=line,linestyle=dotted,args=1.11179 0 1 0.905649 -0.634423 0.4]
\psSolid[object=line,linestyle=dotted,args=0.905649 -0.634423 0.4 0 0 0]
\psSolid[object=line, linewidth=2\pslinewidth,args=0.343561 1.05737 1 1.11179 0 1]
\psSolid[object=line, linewidth=2\pslinewidth,args=0.343561 1.05737 1.01 1.11179 0 1.01]
\psSolid[object=line, linewidth=2\pslinewidth,args=0.343561 1.05737 1.005 1.11179 0 1.005]

\rput[b](0,3.7){\parbox{8cm}{\centering Shape mismatch:}}

\psPoint(0.627673, 0.628686, 1.03){f}
\psdots[dotsize=0](f)
\pnode(1,3){text2}
\pcline[arrowscale=1.3]{->}(text2)(f)
\uput[70](1,2.9){$T(F_f) = F_f$}
\psPoint(0,0,0){e3}
\psdots[dotsize=0.12](e3)
\uput[d](e3){$T(\overline F_f) = 0$}
\psPoint(0,0,0.9){omega}
\psdots[dotsize=0](omega)
\uput[u](omega){$\Omega_A$}
\psPoint(-0.899454, -0.653491, 1){of}
\psdot[dotsize=0.12](of)
\uput[u](of){$\overline F_f$}
\psPoint(0.343561, -1.05737, 1){rho}
\psdot[dotsize=0.12](rho)
\uput[u](rho){$\rho$}
\psPoint(0.905649, -0.634423, 0.4){trho}
\psdot[dotsize=0.12](trho)
\uput[l](trho){$T(\rho)$}
\ncarc[arrowscale=1.5, arcangle=-30, nodesep=5pt]{->}{rho}{trho}
\rput[tl](-0.65,1.9){\parbox{2cm}{\centering part of $T(\Omega_A)$ \\ contained \\ in $\Omega_A^{\leq 1}$}}
\end{pspicture}

\caption{\label{contra-figure} 	Consequences of the violation of conditions (a) or (b): This figure illustrates geometrically why the square and the pentagon violate our postulate. Intuitively, all non-classical discrete state spaces exhibit either a dimension or a shape mismatch.}
\end{figure}

\emph{Shape mismatch}: If condition (b) is violated (i.e. if $\overline F_f$ consists of only one point and $\conv(F_f \cup \overline F_f) \neq \aff(F_f \cup \overline F_f) \cap \Omega_A$), 
then for every linear map which satisfies Equations \eqref{utf} and \eqref{tww}, there is a state $\rho$ such that 
$T(\rho) \notin \Omega_A^{\leq 1}$ (i.e. $T(\rho)$ is not a state). Therefore, such a $T$ cannot be a transformation. 
To see this geometrically, it is useful to consider the pentagon for a particular choice of the effect $f$ where the certain face $F_f$
is an edge of the pentagon (see Fig.\ \ref{contra-figure}). Equation \eqref{utf} implies that 
the impossible face $\overline F_f$ is mapped to the zero-vector, while Equation \eqref{tww} means that the certain face $F_f$ 
is left invariant. In the case of the pentagon illustrated in Fig.\ \ref{contra-figure}, there is precisely one linear map $T$ with 
these two properties. It maps the normalized states $\Omega_A$ 
(dark gray surface in the figure) to a set in the vector space (dashed lines) which is not contained in $\Omega_A^{\leq 1}$ 
(the truncated cone between $0$ and $\Omega_A$). In particular, there is a $\rho$ such that $T(\rho) \notin \Omega_A$.
If one compares Fig.\ \ref{contra-figure} with Fig.\ \ref{geom-crit-fig}, then one can see that the part of $\Omega_A$ 
which is mapped to a subset of $\Omega_A^{\leq 1}$ (light gray face in Fig.\ \ref{contra-figure}) is precisely given by 
$\conv(F_f \cup \overline F_f)$ (gray part in Fig.\ \ref{geom-crit-fig}). However, the part of $\Omega_A$ which is mapped outside of 
$\Omega_A$ is given by $(\aff(F_f \cup \overline F_f) \cap \Omega_A) \setminus \conv(F_f \cup \overline F_f)$ 
(the white part in Fig.\ \ref{geom-crit-fig}). This observation generalizes to statement (b) of the Lemma: If (a) is satisfied and $\overline F_f$ consists of only one point, then
$T(\Omega_A)$ is contained in $\Omega_A^{\leq 1}$ only if 
$\aff(F_f \cup \overline F_f) \cap \Omega_A = \conv(F_f \cup \overline F_f)$.

These two examples illustrate all that can go wrong for discrete theories. We show that for every discrete theory (i.e. for every 
theory where $\Omega_A$ is a polytope), either condition (a) or (b) is violated (so either a dimension mismatch or a shape 
mismatch occurs), except for the case where $\Omega_A$ is a simplex (i.e. for classical theories). To show this, we proceed 
as follows. 

We consider an abstract state space $(A, A_+, u_A)$ where $\Omega_A$ is a polytope. We assume that for 
every pure effect $f \in E_A$ for which the certain face $F_f$ is a minus-face of $\Omega_A$, there is a measurement-transformation 
satisfying the postulate \eqref{tww}. In a first step, we show (using the Lemma) that every polytope $\Omega_A$ which is compatible 
with our postulate has a property that we call being \emph{uniformly pyramidal}. 
This means that for every minus-face $F$ of $\Omega_A$, it holds that there is a point $a_F \in \Omega_A$ 
such that $\Omega_A = \conv(F \cup \{a_F\})$ (see the Supplementary Information for more intuition and figures). In a second step, 
we show that every uniformly pyramidal polytope $\Omega_A$ is a simplex. This shows that every discrete theory satisfying our 
postulate has to be classical.

\section*{Acknowledgements}

We thank Christian Gogolin, Paolo Perinotti, Marco Tomamichel and Markus Baden for insightful discussions and 
Matthew Pusey for interesting comments on a preliminary version of this work \cite{Pfi12}.  
We thank Dieter Kadelka for pointing out that the previous formulation of Lemma 2 (b) was incorrect (see the remark before Lemma 2 in Appendix A).
This research was supported by the Ministry of Education and the National Research Foundation, Singapore.

\bibliographystyle{arxiv}
\bibliography{discreteclassical}

\clearpage
\begin{widetext}
\section*{Supplementary Information}
\begin{appendix}

\section{Formal proof of the main theorem}
\label{main-result-appendix}

In this appendix, we prove the main result: A polytopic theory for which our postulate holds is a classical theory. A preliminary version of this proof is available at \cite{Pfi12}. The proof provided here is much more concise than the proof in \cite{Pfi12}. It makes no use of strong theorems but is entirely proved on quite an elementary level. Fewer notions are introduced, so we only make definitions that are necessary for clarification or that simplify the proof. All in all, the proof and its preparation presented here are much shorter.

We want to emphasize again that our result depends on the assumption that every mathematically well-defined measurement is allowed by the theory. As discussed in the main article, this is a standard assumption, but it lacks a clear physical motivation. One should thus be aware of the fact that our result could not be derived without this assumption.

The focus of the proof in this appendix is on technical precision. To organize this appendix in a compact way, we proceed as follows: In Section \ref{def-and-not}, we list all the definitions that are necessary to understand the proof. This list is given for technical clarification only and cannot be regarded as an introduction to the subject. For a detailed introduction to all of the notions and concepts mentioned below, we refer to \cite{Pfi12}. In Section \ref{fact-section}, we give a list of facts that we state without giving a proof here. These facts are either standard mathematical results, easy to verify or we have proved them in \cite{Pfi12}. Whenever the latter is the case, we refer to the corresponding proposition. The referenced propositions have quite elementary proofs which are not very important for the understanding of the proof of the main result. Section \ref{tech-lemma-section} is dedicated to the preparation of the main proof. We give a rough outline of what we will show and derive a few technical lemmas. Finally, we prove the main result in Section \ref{main-result-section}.

\subsection{Definitions and Notation}
\label{def-and-not}

\begin{enumerate}[(Def. 1)]

\item A subset $C$ of a real vector space $V$ is a \emph{convex subset} or \emph{convex set} if $x, y \in C$ implies $\lambda x + (1-\lambda) y \in C$ for all $0 \leq \lambda \leq 1$.

\item A nonempty convex subset $F$ of a convex set $C$ is a \emph{face} if $x, y \in C$, $0 < \lambda < 1$ and $\lambda x + (1-\lambda) y \in F$ imply $x, y \in F$. \label{face-def}

\item An element $e \in C$ of a convex set is an \emph{extreme point} of $C$ if the singleton $\{e\}$ is a face of $F$, i.e. if $x, y \in C$, $0 < \lambda < 1$ and $\lambda x + (1-\lambda) = e$ imply $x = y = e$. The set of extreme points of $C$ is denoted by $\ver(C)$. \label{ext-point-def}

\xdef\letzterwert{\the\value{enumi}}
\end{enumerate}
In the following, $V$ denotes a finite-dimensional real vector space, $S \subseteq V$ denotes any subset of $V$.
\begin{enumerate}[(Def. 1)]
\setcounter{enumi}{\letzterwert}

\item $\conv(S)$ denotes the convex hull of $S$ (see Fig. \ref{conv-aff-visualization}), given by
\begin{align*}
\conv(S) := \left\{ \sum\limits_{i=1}^n \alpha_i v_i \ \middle\vert \ n \in \{0, 1, 2, \ldots \}, v_i \in S, \alpha_i \in [0,1], \sum\limits_{i=1}^n \alpha_i = 1 \right\} \,.
\end{align*}

\item $\aff(S)$ denotes the affine hull of $S$ (see Fig. \ref{conv-aff-visualization}), given by
\begin{align*}
\aff(S) := \left\{ \sum\limits_{i=1}^n \alpha_i v_i \ \middle\vert \ n \in \{0, 1, 2, \ldots \}, v_i \in S, \alpha_i \in \mathbb{R}, \sum\limits_{i=1}^n \alpha_i = 1 \right\} \,.
\end{align*}

\xdef\letzterwertprime17{\the\value{enumi}}
\end{enumerate}

\begin{figure}
\centering

\begin{pspicture}[showgrid=false](-6,-2)(6,2)
\psset{linewidth=0.7\pslinewidth}
\pnode(-3.5,0){p}
\pnode(-1,1.25){q}
\pnode(-5,1){r}
\psdots[dotsize=0.12](p)(q)(r)
\pcline(p)(q)
\aput{:U}{$\conv(\{p, q\})$}
\pnode(-6,-1.25){pe}
\pnode(0.5,2){qe}
\pcline[linestyle=dashed]{<-}(pe)(p)
\pcline[linestyle=none, offset=12pt](pe)(p)
\nlput[offset=-12pt,nrot=:U](pe)(p){1cm}{$\aff(\{p, q\})$}
\pcline[linestyle=dashed]{->}(q)(qe)
\uput[d](p){$p$}
\uput[d](q){$q$}
\uput[u](r){$r$}

\pnode(2,-1){a}
\pnode(5.5,-0.5){b}
\pnode(3.5,1.5){c}
\pspolygon[fillstyle=solid,fillcolor=lightgray](a)(b)(c)
\rput[l](2.8,0){$\conv(\{a, b, c\})$}
\uput[d](3.5,-1.2){$\aff(\{a, b, c\}) = \text{ whole plane}$}
\psdots[dotsize=0.12](a)(b)(c)
\uput[l](a){$a$}
\uput[r](b){$b$}
\uput[u](c){$c$}
\end{pspicture}

\label{conv-aff-visualization}
\caption{Visualization of convex and affine hulls: The convex hull of two points $p$ and $q$ is given by the line segment connecting the two points, while the affine hull of two points is the whole line through the two points. The point $r$ is affinely independent of $p$ and $q$ since $r \notin \aff(\{p, q\})$. The convex hull of three points (which do not lie on a line) is given by the triangle the corners of which coincide with the three points, while the affine hull is given by the whole plane containing the three points.}
\end{figure}
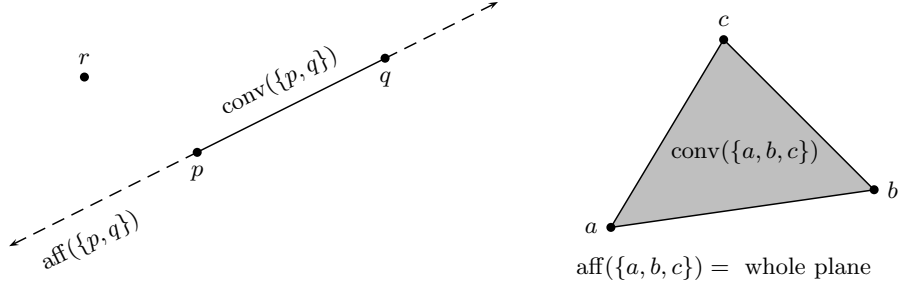

\begin{enumerate}[(Def. 1)]
\setcounter{enumi}{\letzterwertprime17}

\item $\spa(S)$ denotes the linear hull (or the linear span) of $S$.

\item A point $p \in V$ is \emph{affinely independent of $S$} if $p \notin \aff(S)$. Points $p_1, \ldots, p_n \in V$ are \emph{affinely independent} if $p_i \notin \aff( \{ p_1, \ldots, p_n \} \setminus \{ p_i \})$ for $i = 1, \ldots, n$. \label{aff-indep}

\item $\dim S := d$, where $d+1$ is the maximal number of affinely independent points in $S$. (In the case where $S$ is a vector subspace of $V$, this dimension is identical to the vector space dimension of $S$.)  We define the dimension of the empty set to be $-1$. \label{set-dim-def}

\xdef\letzterwertprime2{\the\value{enumi}}
\end{enumerate}
The notion of the dimension of a subset of a vector space (Def. \ref{set-dim-def}) allows us to define the following special type of a face of a convex set (recall (Def. \ref{face-def})).
\begin{enumerate}[(Def. 1)]
\setcounter{enumi}{\letzterwertprime2}

\item We call a face $F$ of a convex set $C$ a \emph{minus-face} of $C$ if $\dim F = \dim C - 1$.\footnote{In the context of polytopes, such a face is sometimes called a \emph{facet} of $C$. However, besides the fact that the the terms ``face'' and ``facet'' are easily mixed up, the use of the notion of a facet in the literature is inconsistent, so we prefer to introduce a new name to avoid confusion.} \label{minus-face}

\xdef\letzterwertprime{\the\value{enumi}}
\end{enumerate}
The following types of convex sets are very central in our analysis.
\begin{enumerate}[(Def. 1)]
\setcounter{enumi}{\letzterwertprime}

\item A subset $P$ of a finite-dimensional real vector space is a \emph{polytope} if $P$ is the convex hull of finitely many points.

\item Since $\ver(P)$ is a finite set for any polytope $P$ (which is readily verified), we can define the number of extreme points of a polytope $P$ by $n_e(P) := \vert \ver(P) \vert$. \label{number-ext-points}

\item A subset $S$ of a finite-dimensional real vector space $V$ is a \emph{simplex} if $S$ is the convex hull of finitely many affinely independent points. More precisely, $S$ is a $d$-simplex if it is the complex hull of $d+1$ affinely independent points. Obviously, a simplex is a polytope. \label{simplex-def}

\xdef\letschtwert{\the\value{enumi}}
\end{enumerate}
Before we can give a precise definition of an abstract state space, we introduce some notation and some notions related to vector spaces with a cone.
\begin{enumerate}[(Def. 1)]
\setcounter{enumi}{\letschtwert}

\item For any two subsets $M$ and $N$ of a real vector space and for any scalar $\alpha \in \mathbb{R}$, we denote
\begin{align*}
&M + N := \{ m + n \mid m \in M, n \in N \} \,, \\
&\alpha M := \{ \alpha m \mid m \in M \} \,.
\end{align*}

\item A nonempty subset $K$ of a real vector space $V$ is a \emph{cone} in $V$ if the following conditions are satisfied:
\begin{align*}
&K + K = K \,, \\
&\alpha K \subseteq K \quad \forall \alpha \geq 0 \,, \\
&K \cap (-K) = \{ 0 \} \,.
\end{align*}
A cone $K$ in $V$ is called \emph{generating} if $K - K = V$. 

\item For a vector space $V$, let $V^*$ denote the dual space\footnote{We will only consider finite-dimensional vector spaces $V$, 
for which there is no difference between the algebraic and the topological dual space.} of $V$. A linear functional $f \in V^*$ on a 
real vector space with cone $K$ is called \emph{strictly positive} if $f(v) > 0$ for all $v \in K \setminus \{0\}$.

\xdef\letzterwert2{\the\value{enumi}}
\end{enumerate}
In the following, we recall the basic definitions in connection with abstract state spaces.
\begin{enumerate}[(Def. 1)]
\setcounter{enumi}{\letzterwert2}

\item An \emph{abstract state space} is a tuple $(A, A_+, u_A)$, where $A$ is a finite-dimensional real vector space, $A_+$ is a closed\footnote{The closedness is to be understood with respect to any norm on $A$. Since all norms on finite-dimensional vector spaces are equivalent and therefore induce the same topology, the choice of the norm is irrelevant.} and generating cone in $A$ and $u_A$ is a strictly positive linear functional, called the \emph{unit effect}. We will often denote an abstract state space by $A$ rather than $(A, A_+, u_A)$. \label{ass-def}

\item For an abstract state space $A$, the symbol $\Omega_A$ denotes the \emph{set of normalized states},
\begin{align*}
\Omega_A = \{ \omega \in A_+ \mid u_A(\omega) = 1 \} \,.
\end{align*}
The \emph{set of subnormalized states} $\Omega_A^{\leq 1}$ is defined by
\begin{align*}
\Omega_A^{\leq 1} = \{ \omega \in A_+ \mid u_A(\omega) \leq 1 \} \,.
\end{align*}
Obviously, $\Omega_A^{\leq 1} = \conv(\Omega_A \cup \{0\})$ (also see Fig. \ref{states-visualization}).
\label{omega-a-def}

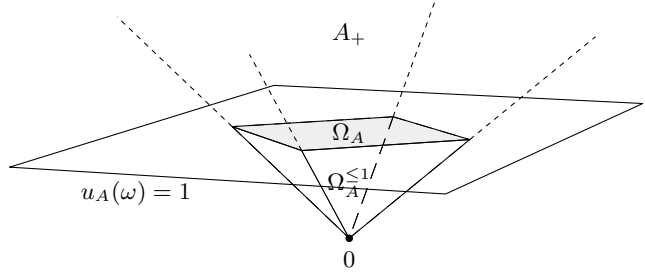
\begin{figure}[htb]
\centering

\begin{pspicture}[showgrid=false](-4.5,-0.5)(4,3.3)
\psset{viewpoint=26 10 5,Decran=40}
\psset{solidmemory}
\psSolid[object=new, linewidth=0.5\pslinewidth,
action=draw*,
name=B,
fcol=4 (.15 setfillopacity Gray),
sommets= 
0 0 0 
0.420448 0.420448 0.5 
-0.420448 0.420448 0.5 
-0.420448 -0.420448 0.5 
0.420448 -0.420448 0.5 
0 0 1 
0 1.18921 1 
-1.18921 0 1 
0 -1.18921 1 
1.18921 0 1, 
faces={
[0 7 6]
[0 8 7]
[0 9 8]
[0 6 9]
[6 7 8 9]
}]%
\psSolid[object=line, args=0 1.18921 1 0 2.37841 2, linestyle=dotted]
\psSolid[object=line, args=-1.18921 0 1 -2.37841 0 2, linestyle=dotted]
\psSolid[object=line, args=0 -1.18921 1 0 -2.37841 2, linestyle=dotted]
\psSolid[object=line, args=1.18921 0 1 2.37841 0 2, linestyle=dotted]
\psPoint(0,0,1.7){omegas}
\psdots[dotsize=0](omegas)
\uput[u](omegas){$A_+$}
\psPoint(0,0,0){zero}
\psdots[dotsize=0.1](zero)
\uput[d](zero){$0$}
\rput[b](0,0.55){$\Omega_A^{\leq 1}$}
\psPoint(0,0,0.78){omega}
\psdots[dotsize=0](omega)
\uput[u](omega){$\Omega_A$}
\psPoint(4.7,0,0.95){f}
\psdots[dotsize=0](f)
\uput[u](f){$u_A(\omega) = 1$}
\psSolid[object=plan, linewidth=0.5\pslinewidth,
definition=equation, 
args={[0 0 1 -1]}, 
base=-3 3 -2 2,action=draw,name=awesome]
\end{pspicture}

\caption{The states of any normalization are given by a cone $A_+$ in the real vector space $A$. The linear functional $u_A$ gives the normalization of a state, so the intersection of $A_+$ with the plane described by $u_A(\omega) = 1$ gives the normalized states, while the subnormalized states $\Omega_A^{\leq 1}$ are those elements of $A_+$ where $u_A$ takes values between 0 and 1.}
\label{states-visualization}
\end{figure}

\item An \emph{effect} $f$ on an abstract state space $A$ is a linear functional $f \in A^*$ such that $0 \leq f(\omega) \leq 1$ for all $\omega \in \Omega_A$. The set of effects on $A$ is denoted by $E_A$. An effect is said to be \emph{pure} if $f$ is an extreme point of $E_A$. \label{effect-def}

\item A \emph{measurement} on an abstract state space $A$ is a set $\mathcal{M} = \{ f_1, \ldots, f_n \}$ of effects which sum up to the
unit effect, $\sum_{i=1}^n f_i = u_A$. (We give this definition for the sake of completeness. We will not use this notion below, but we will 
formulate all statements in terms of effects.)

\item For a pure effect $f \in E_A$, the effect $\overline f := u_A - f$ is the \emph{complementary effect to} $f$. \label{compl-effect}

\item For a pure effect $f$, the \emph{certain face} $F_f$ of $f$ is defined by
\begin{align*}
F_f := \{ \omega \in \Omega_A \mid f(\omega) = 1 \} \,.
\end{align*}
The \emph{impossible face} $\overline F_f$ of $f$ is defined by
\begin{align*}
\overline F_f := \{ \omega \in \Omega_A \mid f(\omega) = 0 \} \,.
\end{align*}
Obviously, $\overline F_f = F_{\overline f}$. (Fact \ref{u-f-pure}) and (Fact \ref{certain-indeed-face}) below imply that $F_f$ and $\overline F_f$ are faces indeed, so the namings are justified.
\label{cert-face-def}

\item A \emph{transformation} $T: A \rightarrow A$ on an abstract state space $A$ is a map which fulfills the following conditions:
\begin{align}
&T \text{ is linear,} \nonumber \\
&T \text{ is positive, i.e. } T(A_+) \subseteq A_+, \label{trafo2} \\
&u_A(T(\omega)) \leq 1 \text{ for all } \omega \in \Omega_A. \label{trafo3}
\end{align}
Given that $T$ is linear, conditions (\ref{trafo2}) and (\ref{trafo3}) can be summarized as
\begin{align}
T(\Omega_A) \subseteq \Omega_A^{\leq 1} \,. \label{t-cond-given-lin}
\end{align}
\label{trafo-def}

\item An abstract state space $A$ is a \emph{polytopic theory} if $\Omega_A$ is a polytope. (This is what we called a discrete theory, but in this technical appendix, we use the term ``polytopic theory'' because a polytope is a well-established mathematical term in the context of convex sets.) \label{polytopic-theory}

\item An abstract state space $A$ is a \emph{classical theory} if $\Omega_A$ is a simplex. \label{classical-theory}

\xdef\letzterwert3{\the\value{enumi}}
\end{enumerate}

\subsection{Known facts}
\label{fact-section}

Recall the definition of a face and of an extreme point of a convex set, (Def. \ref{face-def}) and (Def. \ref{ext-point-def}).
\begin{enumerate}[(F{a}ct 1)]

\item For a convex set $C$, a face $S$ of a face $F$ of $C$ is itself a face of $C$. Moreover, if $S, F \subseteq C$ are faces of $C$ with $S \subseteq F$, then $S$ is a face of $F$ as well. Thus, for a face $F$ of a convex set $C$, $e \in F$ is an extreme point of $F$ (Def. \ref{ext-point-def}) if and only if it is an extreme point of $C$. \label{face-face-face}

\item For a face $F$ of a convex set $C$, if a convex combination $\sum_{i=1}^n \alpha_i v_i$ lies in $F$ for some $v_1, \ldots, v_n \in C$ and nonzero coefficients $\alpha_i$, then $v_1, \ldots, v_n \in F$. \cite[Prop. 2.7]{Pfi12} \label{nonbin-extremality}

\item If $F$ is a face of a convex set $C$, then $F = \aff(F) \cap C$. \cite[Prop. 2.10]{Pfi12} \label{face-cap-aff}

\xdef\factlastvalue{\the\value{enumi}}
\end{enumerate}
In the following, $V$ denotes a finite-dimensional real vector space and $S \subseteq V$ is any subset of $V$.
\begin{enumerate}[(F{a}ct 1)]
\setcounter{enumi}{\factlastvalue}

\item $\spa(S) = \aff(S \cup \{0\})$. \label{spa-is-aff-0}

\item If $0 \notin \aff(S)$, then $\dim(\spa(S)) = \dim S + 1$. \label{dim-span-plus-one}

\item $\aff(\conv(S)) = \aff(S)$.

\xdef\factlastthing{\the\value{enumi}}
\end{enumerate}
Let $T: A \rightarrow B$ be a linear map between vector spaces $A$ and $B$, let $S_1, S_2 \subseteq A$ be any subsets of $A$.
\begin{enumerate}[(F{a}ct 1)]
\setcounter{enumi}{\factlastthing}

\item $T(\aff(S_1 \cup S_2)) = \aff(T(S_1) \cup T(S_2))$. \label{t-aff-pres}

\item $T(\conv(S_1 \cup S_2)) = \conv(T(S_1) \cup T(S_2))$.  \label{t-conv-pres}

\item If $T$ is injective, $S_1 \subseteq S_2$ and $T(S_1) \supseteq T(S_2)$, then $S_1 = S_2$. \label{image-superset}

\xdef\lastfactvalue{\the\value{enumi}}
\end{enumerate}
We will make use of the following properties of abstract state spaces. Recall (Def. \ref{ass-def}), (Def. \ref{omega-a-def}) and (Def. \ref{effect-def}).
\begin{enumerate}[(F{a}ct 1)]
\setcounter{enumi}{\lastfactvalue}

\item Let $A$ be an abstract state space. If $f \in E_A$ is a pure effect, then the complementary effect $\overline f := u_A - f$ is a pure effect as well. \cite[Prop. 3.33]{Pfi12} \label{u-f-pure}

\item For every nonzero pure effect $f \in E_A$, the certain face $F_f$ of $f$ (as defined in (Def. \ref{cert-face-def})) is indeed a nonempty face of $\Omega_A$. \cite[Corollary 3.37]{Pfi12} Thus, by (Fact \ref{u-f-pure}), the impossible face $\overline F_f = F_{\overline f}$ is a face of $\Omega_A$ as well (which is nonempty if $f \neq u_A$). \label{certain-indeed-face}

\item For any subset $S \subseteq \Omega_A$ it holds that $0 \notin \aff(S)$. (This follows from $\aff(S) \subseteq \aff(\Omega_A) = \{ \omega \in A \mid u_A(\omega) =~1\} \not\ni 0$.) \label{0-notin-aff}

\xdef\lastfactvaluestrich{\the\value{enumi}}
\end{enumerate}
The following facts about polytopes will be useful.
\begin{enumerate}[(F{a}ct 1)]
\setcounter{enumi}{\lastfactvaluestrich}

\item Every polytope $P$ has a minus-face. (More than that, there are lower bounds on the number of minus-faces of polytopes; see \cite[Chapter 3.1]{Gru67}.)\label{polytope-has-facet}

\item A face of a polytope is a polytope. (This is easily proved using (Fact \ref{nonbin-extremality}).) \label{polytope-face-lemma}

\xdef\lastfactvaluestrichlein{\the\value{enumi}}
\end{enumerate}

\subsection{Technical lemmas}
\label{tech-lemma-section}

In this section, we prove four technical lemmas. Before we prove them, we give a rough overview over their role in the main proof.
The main result, Theorem \ref{main-result-thm}, is proved in two steps:
\begin{enumerate}[(i)]
\item First, we show that every polytopic theory $A$ which satisfies our postulate has a set of normalized states $\Omega_A$ which is uniformly pyramidal (we will see in Section \ref{main-result-section} what this means).
\item In the second step, we show that every uniformly pyramidal polytope is a simplex, so $A$ is a classical theory.
\end{enumerate}
The main technical lemma that allows us to prove these two steps is Lemma \ref{main-technical-lemma}. It establishes geometric criteria that a set of states has to satisfy to be compatible with our postulate by specifying conditions on the certain face (Def. \ref{cert-face-def}) of pure effects. This lemma has two parts (a) and (b), both of which we will use in step (i) of the proof of Theorem~\ref{main-result-thm}. In order to prove Lemma \ref{main-technical-lemma}, we need Lemma \ref{face-pres-lemma} which we prove first (this lemma will also be useful in the proof of step (ii)). To be applied properly, Lemma \ref{main-technical-lemma} needs a helper which comes in the form of Lemma \ref{unique-suitable-f}. It shows that in the case where $\Omega_A$ is a polytope, every minus-face of $\Omega_A$ is the certain face of a pure effect. Lemma \ref{u_s-face-lemma} in turn helps us to prove Lemma \ref{unique-suitable-f}. Figure \ref{lemma-structure} gives an overview over the organisation of the proofs.

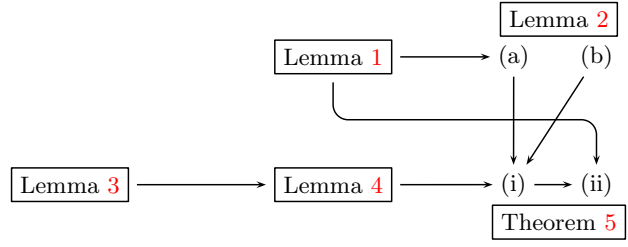
\begin{figure}[htb]
\centering

\begin{pspicture}[showgrid=false](-1,1)(8.5,5)
\psset{linewidth=0.7\pslinewidth, nodesep=3pt}
\psnode(4,3.7){l1}{\psframebox{Lemma \ref{face-pres-lemma}}}
\psnode(7,4.2){l2}{\psframebox{Lemma \ref{main-technical-lemma}}}
\psnode(6.4,3.7){l2a}{(a)}
\psnode(7.5,3.7){l2b}{(b)}
\psnode(6.4,2){thmi}{(i)}
\psnode(7.5,2){thmii}{(ii)}
\psnode(4,2){l4}{\psframebox{Lemma \ref{unique-suitable-f}}}
\psnode(0.5,2){l3}{\psframebox{Lemma \ref{u_s-face-lemma}}}
\psnode(7,1.5){l2}{\psframebox{Theorem \ref{main-result-thm}}}
\ncline{->}{l1}{l2a}
\ncangle[angleA=-90,angleB=90,armB=0.6cm,linearc=.2]{->}{l1}{thmii}
\ncline{->}{l2a}{thmi}
\ncline{->}{l4}{thmi}
\ncline{->}{l2b}{thmi}
\ncline{->}{l3}{l4}
\ncline{->}{thmi}{thmii}
\end{pspicture}

\caption{Organization of the proof of the main result. This diagram shows how the proof of Theorem \ref{main-result-thm} is subdivided into several Lemmas.}
\label{lemma-structure}
\end{figure}

Now we prove the four lemmas.

\begin{lemma}
\label{face-pres-lemma}
Let $C$ be a convex subset of a vector space $V$, let $a \in V$ be affinely independent of $C$ (i.e. $a \notin \aff(C)$). If $F$ is a face of $C$, then $D = \conv(F \cup \{a\})$ is a face of $\conv(C \cup \{a\})$.
\end{lemma}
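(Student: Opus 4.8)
The plan is to use the explicit ``join'' description of a convex hull with one extra point. Since $F$ (hence $C$) is nonempty, every point of $P := \conv(C \cup \{a\})$ can be written as $\lambda a + (1-\lambda)c$ with $\lambda \in [0,1]$, $c \in C$, and every point of $D := \conv(F \cup \{a\})$ as $\mu a + (1-\mu)x$ with $\mu \in [0,1]$, $x \in F$; this is obtained by collecting the weight placed on $a$ in a convex combination and renormalizing the rest using convexity of $C$ (resp.\ $F$). Since $F \subseteq C$, $D$ is a nonempty convex subset of $P$, so the only thing left to check is the defining property of a face (Def.~\ref{face-def}): if $p, q \in P$ and $0 < t < 1$ with $tp + (1-t)q \in D$, then $p, q \in D$.

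To do that I would write $p = \lambda_p a + (1-\lambda_p)c_p$ and $q = \lambda_q a + (1-\lambda_q)c_q$ with $c_p, c_q \in C$, and set $\mu := t\lambda_p + (1-t)\lambda_q$. If $\mu = 1$ then $\lambda_p = \lambda_q = 1$ (as $t, 1-t > 0$), so $p = q = a \in D$; hence I may assume $\mu < 1$. Then the leftover weight $1-\mu = t(1-\lambda_p) + (1-t)(1-\lambda_q)$ is positive, so $c := \frac{t(1-\lambda_p)}{1-\mu}c_p + \frac{(1-t)(1-\lambda_q)}{1-\mu}c_q$ lies in $C$ by convexity, and $tp + (1-t)q = \mu a + (1-\mu)c$. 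On the other hand, being in $D$, the same point equals $\nu a + (1-\nu)x$ for some $\nu \in [0,1]$, $x \in F$. The crux of the argument --- and the only place where $a \notin \aff(C)$ enters --- is to deduce $\mu = \nu$ and $c = x$ from $\mu a + (1-\mu)c = \nu a + (1-\nu)x$: if $\nu = 1$ then $(1-\mu)(c-a) = 0$ would force $a = c \in \aff(C)$, a contradiction, so $\nu < 1$; and if $\mu \neq \nu$ then rearranging writes $a$ as an affine combination of $x, c \in C$ (coefficients summing to $1$), again a contradiction. Hence $\mu = \nu$, and $1-\mu > 0$ gives $c = x \in F$.

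It then remains to extract $p, q \in D$. If $\lambda_p = 1$ then $p = a \in D$; otherwise $1-\lambda_p > 0$, so $c_p$ occurs in the convex combination $c \in F$ with strictly positive coefficient, and (Fact~\ref{nonbin-extremality}) --- applied after discarding the $c_q$-term if its coefficient happens to vanish --- yields $c_p \in F$, whence $p = \lambda_p a + (1-\lambda_p)c_p \in \conv(F \cup \{a\}) = D$; the case of $q$ is symmetric. I do not expect a genuine obstacle here: the real content is the rigidity forced by affine independence of $a$, and the only thing that requires care is the small amount of bookkeeping for degenerate cases ($\mu = 1$, $\nu = 1$, $\lambda_p = 1$, or a vanishing coefficient), each of which simply pins the point in question to the apex $a$ and is therefore trivially in $D$.
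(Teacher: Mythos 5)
Your proof is correct and follows essentially the same route as the paper's: write points of $\conv(C\cup\{a\})$ as apex-weight plus a point of $C$, use $a\notin\aff(C)$ to force the coefficients on $a$ to match (and to rule out the degenerate apex cases), and then invoke the face property of $F$ in $C$ to pull $c_p, c_q$ back into $F$. If anything, your handling of the corner cases is slightly tidier than the paper's, which silently relies on $a$ being an extreme point of $\conv(C\cup\{a\})$ and on the face definition with possibly vanishing coefficients, whereas you dispatch these via the same affine-independence contradiction and Fact~\ref{nonbin-extremality}.
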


\begin{figure}[htb]
\centering

\begin{pspicture}[showgrid=false](-1.2,0.2)(1.3,3.4) 
\psset{viewpoint=10 10 25 rtp2xyz,Decran=11} 
\psset{solidmemory}
\psSolid[
object=new, linewidth=0.7\pslinewidth,
linecolor=gray,
name=B,
sommets=
0 0 3 
-0.707107 1.22474 1 
-0.707107 -1.22474 1 
1.41421 0 1, 
faces={
[1 2 3]
[1 2 0]
[2 3 0]
[3 1 0]
},
action=draw*]%
\psSolid[object=plan,definition=solidface,action=none,args=B 2,name=R0]
\psset{fontsize=20}
\psset{phi=90}
\psProjection[object=texte,text={D},plan=R0]%
\psSolid[object=plan,definition=solidface,action=none,args=B 3,name=R1]
\psset{fontsize=20}
\psset{phi=0}
\psProjection[object=texte,text={C},plan=R1]%
\pstThreeDPut(1.41421, 1.6, 4.2){$F$}
\psSolid[object=line, linewidth=4\pslinewidth,args=1.41421 0 1 0 0 2.98]
\psSolid[object=line, linewidth=4\pslinewidth,args=1.41421 0.015 1 0 0.015 2.965]
\psPoint(-0.707107, -1.22474, 1){p}
\uput[l](p){$a$}
\psdots[dotsize=0.1](p)
\end{pspicture}

\caption{Visualization of Lemma 1: For a convex set $C$, a face $F \subseteq C$ and a point $a \notin \aff(C)$, 
									the set $D = \conv(F \cup \{a\})$ is a face of $\conv(C \cup \{a\})$.}
\label{face-pres-figure}
\end{figure}
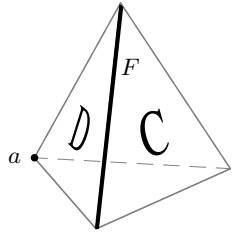

\begin{proof}
Let $x, y \in \conv(C \cup \{a\})$. Then,
\begin{align*}
x = \sum\limits_{i=1}^{m-1} \alpha_i p_i + \alpha_m a &\qquad \text{for some } \alpha_i \geq 0 \text{ with } \sum\limits_{i=1}^m \alpha_i = 1 \\
&\qquad \text{and } p_i \in C \ \forall i \in \{1, \ldots, m-1\} \,.
\end{align*}
We can simplify the expression for $x$ by defining
\begin{align*}
&p := \frac{\sum_{i=1}^{m-1} \alpha_i p_i}{\sum_{i=1}^{m-1} \alpha_i} \in C, \quad \alpha := \sum\limits_{i=1}^{m-1} \alpha_i, \quad \overline \alpha := \alpha_m
\end{align*}
to get
\begin{align}
&x = \alpha p + \overline \alpha a \quad \text{with} \quad \alpha, \overline \alpha \geq 0 \,, \quad \alpha + \overline \alpha = 1\,, \quad p \in C \,. \label{x-eq}
\end{align}
In a similar way, we get
\begin{align}
&y = \beta q + \overline \beta a \quad \text{for some} \quad \beta, \overline \beta \geq 0 \,, \quad \beta + \overline \beta = 1 \,, \quad q \in C \,. \label{y-eq}
\end{align}
Let $0 < \lambda < 1$, $\overline \lambda = 1-\lambda$. Then,
\begin{align}
\label{xy-eq}
\lambda x + \overline \lambda y = \lambda \alpha p + \overline \lambda \beta q + (\lambda \overline \alpha + \overline \lambda \, \overline \beta) a \,.
\end{align}
Suppose that
\begin{align}
\label{xyz-eq}
\lambda x + \overline \lambda y = z \quad \text{for some} \quad z \in \conv(F \cup \{a\}) \,.
\end{align}
According to (Def. \ref{face-def}), the claim is proved if we manage to show that (\ref{xyz-eq}) implies that $x, y \in \conv(F \cup \{a\})$. The point $z$ being an element of $\conv(F \cup \{a\})$ means that
\begin{align}
&z = \mu r + \overline \mu a \quad \text{for some} \quad \mu, \overline \mu \geq 0 \,, \quad \mu + \overline \mu = 1 \,, \quad r \in F \,. \label{z-eq}
\end{align}
The case $\mu = 0$ is easy: In this case, $\lambda x + \overline \lambda y = a$, but $a$ is an extreme point of $\conv(C \cup \{a\})$ and thus $x = y = a \in \conv(F \cup \{a\})$. Therefore, we consider the more difficult case and assume that
\begin{align}
\mu \neq 0 \,. \label{mu-not-0}
\end{align}
Equations (\ref{xy-eq}), (\ref{xyz-eq}) and (\ref{z-eq}) give
\begin{align}
\label{xyz-impl}
\mu r + \overline \mu a = \lambda \alpha p + \overline \lambda \beta q + (\lambda \overline \alpha + \overline \lambda \, \overline \beta) a \,.
\end{align}
Now \emph{assume} that
\begin{align}
\label{coeff-ass}
\overline \mu - \lambda \overline \alpha - \overline \lambda \, \overline \beta \neq 0 \,.
\end{align}
This assumption allows us to rewrite (\ref{xyz-impl}) in the form
\begin{align}
\label{aff-comb-for-a}
a = \frac{1}{\overline \mu - \lambda \overline \alpha - \overline \lambda \, \overline \beta} (\lambda \alpha p + \overline \lambda \beta q - \mu r) \,.
\end{align}
It is easily checked that the right hand side of (\ref{aff-comb-for-a}) is an affine combination of $p, q$ and $r$:
\begin{align*}
\frac{\lambda \alpha + \overline \lambda \beta - \mu}{\overline \mu - \lambda \overline \alpha - \overline \lambda \, \overline \beta}
= \frac{\lambda \alpha + (1-\lambda)\beta - \mu}{(1-\mu) - \lambda (1-\alpha) - (1-\lambda)(1-\beta)} = 1 \,.
\end{align*}
Thus, assumption (\ref{coeff-ass}) implies that $a \in \aff( \{p, q, r\} ) \subseteq \aff(C)$. This contradicts the premise that $a \notin \aff(C)$, so assumption (\ref{coeff-ass}) must be wrong and therefore
\begin{align}
\overline \mu = \lambda \overline \alpha + \overline \lambda \, \overline \beta \label{omu-eq}
\end{align}
Equation (\ref{omu-eq}) simplifies Equation (\ref{xyz-impl}) to
\begin{align}
\mu r = \lambda \alpha p + \overline \lambda \beta q \label{xyz-simpl}
\end{align}
Writing out $\overline \mu = 1-\mu$, $\overline \alpha = 1-\alpha$ and $\overline \beta = 1 - \beta$, it is easily checked that Equation (\ref{omu-eq}) implies
\begin{align}
\frac{\lambda \alpha}{\mu} + \frac{\overline \lambda \beta}{\mu} = 1 \,, \quad \text{where } \mu \neq 0 \text{ by (\ref{mu-not-0})} \label{frac-eq} \,.
\end{align}
We can rewrite (\ref{xyz-simpl}) as
\begin{align}
r = \frac{\lambda \alpha}{\mu} p + \frac{\overline \lambda \beta}{\mu} q \quad \text{with} \quad p, q \in C, \quad r \in F \,. \label{r-eq}
\end{align}
Equations (\ref{frac-eq}) and (\ref{r-eq}), together with the fact that $F$ is a face of $C$, implies that $p, q \in F$ (c.f. (Def. \ref{face-def})). Thus, by Equations (\ref{x-eq}) and (\ref{y-eq}), we have that $x, y \in \conv(F \cup \{a\})$, which completes the proof.
\end{proof}

Before we state and prove the next lemma, we want to point out a difference from previous versions of this article. Previously, Lemma \ref{main-technical-lemma} (b) did not state that it is sufficient for us to analyze the case where $\overline F_f$ consists of a single point. As this is the only case we need later, we had implicitly assumed this here, and without this restriction, Lemma~\ref{main-technical-lemma} (b) would not hold. We now corrected this mistake in our statement.

\begin{lemma}
\label{main-technical-lemma}
Let $(A, A_+, u_A)$ be an abstract state space, let $f \in E_A$ be a pure effect. If there exists a transformation $T: A \rightarrow A$ such that $u_A \circ T = f$ and $T(\omega) = \omega$ for every $\omega \in F_f$ (Def. \ref{cert-face-def}), then
\begin{enumerate}[(a)]
\item $\dim F_f + \dim \overline F_f \leq \dim \Omega_A - 1$ and
\item if $\overline F_f$ consists of not more than one point, then \\ $\aff(F_f \cup \overline F_f) \cap \Omega_A = \conv(F_f \cup \overline F_f)$.
\end{enumerate}
\end{lemma}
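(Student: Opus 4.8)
The plan is to exploit two structural consequences of the hypotheses, already sketched in the Methods section, and turn them into rigorous statements. First, since $u_A \circ T = f$ and $f$ vanishes exactly at the origin on $\Omega_A^{\leq 1}$, the whole impossible face $\overline F_f$ must be sent to $0$ by $T$; by linearity $T$ vanishes on $\spa(\overline F_f)$. Second, the postulate $T(\omega)=\omega$ for $\omega\in F_f$, again by linearity, forces $T$ to be the identity on $\spa(F_f)$. For part (a), I would combine these: $T$ restricted to $\spa(F_f)\cap\spa(\overline F_f)$ is simultaneously the zero map and the identity, so this intersection is $\{0\}$. Hence $\spa(F_f)$ and $\spa(\overline F_f)$ are in direct sum, giving $\dim\spa(F_f)+\dim\spa(\overline F_f)\le\dim A$. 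Using (Fact \ref{0-notin-aff}) together with (Fact \ref{dim-span-plus-one}), $\dim\spa(F_f)=\dim F_f+1$ and similarly for $\overline F_f$, while $\dim A=\dim\Omega_A+1$ (since $\Omega_A$ spans an affine hyperplane not through $0$, so $\spa(\Omega_A)=A$ by the generating-cone assumption, hence $\dim A=\dim\aff(\Omega_A)+1=\dim\Omega_A+1$). Substituting yields $\dim F_f+\dim\overline F_f\le\dim\Omega_A-1$, which is (a).

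For part (b), assume $\overline F_f=\{\bar\omega\}$ is a single point (the case $\overline F_f=\emptyset$, i.e. $f=u_A$, is degenerate and handled separately, with $F_f=\Omega_A$ and the claim trivial). The inclusion $\conv(F_f\cup\overline F_f)\subseteq\aff(F_f\cup\overline F_f)\cap\Omega_A$ is immediate. For the reverse inclusion, take any $\rho\in\aff(F_f\cup\overline F_f)\cap\Omega_A$. I would write $\rho=\omega'+t(\bar\omega-\text{pt})$-type affine combination — more precisely, using that $\aff(F_f\cup\{\bar\omega\})$ consists of points $\mu\sigma+(1-\mu)\bar\omega$ with $\sigma\in\aff(F_f)$ and $\mu\in\mathbb{R}$ — so $\rho=\mu\sigma+(1-\mu)\bar\omega$ for some real $\mu$ and $\sigma\in\aff(F_f)$. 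Applying $T$: since $T$ is the identity on $\spa(F_f)\supseteq\aff(F_f)$ and kills $\bar\omega$, we get $T(\rho)=\mu\sigma$. Now $\rho\in\Omega_A$ means $T(\rho)\in\Omega_A^{\leq 1}$ (as $T$ is a transformation), so $\mu\sigma\in\Omega_A^{\leq 1}$; but also $u_A(\mu\sigma)=\mu u_A(\sigma)=\mu$ since $\sigma\in\aff(F_f)\subseteq\aff(\Omega_A)$. Thus $0\le\mu\le 1$. Feeding this back, $\rho=\mu\sigma+(1-\mu)\bar\omega$ with $\mu\in[0,1]$; it remains to show $\sigma$ (when $\mu>0$) can be taken in $F_f$ rather than merely $\aff(F_f)$. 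For this I would use that $\sigma=\mu^{-1}(\rho-(1-\mu)\bar\omega)=\mu^{-1}T(\rho)\in\Omega_A^{\leq 1}$ with $u_A(\sigma)=1$, so $\sigma\in\Omega_A$; and since $f(\sigma)=\mu^{-1}f(\rho-(1-\mu)\bar\omega)=\mu^{-1}(f(\rho))$ — here I use $f(\rho)=u_A(T(\rho))=u_A(\mu\sigma)=\mu$ — we get $f(\sigma)=1$, i.e. $\sigma\in F_f$. Hence $\rho\in\conv(F_f\cup\{\bar\omega\})$, proving (b).

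The main obstacle I anticipate is part (b): keeping careful track of the affine bookkeeping (the coefficient $\mu$ can a priori be negative or exceed $1$, and one must rule this out using positivity of $T$ and strict positivity of $u_A$), and in particular verifying that the point $\sigma$ extracted from the affine combination genuinely lies in the \emph{face} $F_f$ and not just in $\Omega_A$ or $\aff(F_f)$ — this is exactly the subtlety flagged in the remark preceding the Lemma about the earlier incorrect statement, and it is why the hypothesis that $\overline F_f$ is a single point is essential (with a higher-dimensional $\overline F_f$ one loses the ability to split $\rho$ cleanly along a single line through $\bar\omega$). Part (a) is comparatively routine once the two restriction identities $T|_{\spa(\overline F_f)}=0$ and $T|_{\spa(F_f)}=\mathrm{id}$ are in place; the only care needed there is the dimension counting via (Fact \ref{dim-span-plus-one}) and the observation that $\spa(\Omega_A)=A$, which follows from $A_+$ being generating together with $u_A$ strictly positive.
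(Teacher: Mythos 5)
Your proof is correct, and part (a) is essentially the paper's own argument (restrictions of $T$ to $\spa(F_f)$ and $\spa(\overline F_f)$ forced to be the identity and the zero map, intersection $\{0\}$, then dimension counting). For part (b), however, you take a genuinely different and more elementary route: you parametrize a point $\rho \in \aff(F_f \cup \overline F_f) \cap \Omega_A$ as $\rho = \mu\sigma + (1-\mu)\bar\omega$ with $\sigma \in \aff(F_f)$, apply $T$, and use positivity of $T$ together with strict positivity of $u_A$ and $u_A \circ T = f$ to force $\mu \in [0,1]$ and $\sigma \in F_f$. The paper instead argues globally: it shows $T(\conv(F_f\cup\overline F_f)) = \conv(F_f\cup\{0\})$, invokes Lemma \ref{face-pres-lemma} to see that $\conv(F_f\cup\{0\})$ is a face of $\Omega_A^{\leq 1}$, uses (Fact \ref{face-cap-aff}) to identify that face with $\aff(F_f\cup\{0\})\cap\Omega_A^{\leq 1}$, and then concludes via injectivity of $T|_{\aff(F_f\cup\overline F_f)}$ and (Fact \ref{image-superset}). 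Your version avoids Lemma \ref{face-pres-lemma} entirely (though the paper needs that lemma anyway in step (ii) of the main theorem), at the price of explicit coefficient bookkeeping; the paper's version is coordinate-free but leans on more convexity machinery.

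One small slip to repair: it is not true that every point of $\aff(F_f\cup\{\bar\omega\})$ has the form $\mu\sigma + (1-\mu)\bar\omega$ with $\sigma \in \aff(F_f)$ — an affine combination whose total weight on $F_f$ is zero gives points $\rho = \bar\omega + v$ with $v \neq 0$ a direction vector of $\aff(F_f)$, which your parametrization misses. This degenerate case is closed off in one line with your own tools: such a $v$ lies in $\spa(F_f)$ and satisfies $u_A(v)=0$, so $f(\rho) = u_A(T(\rho)) = u_A(T(\bar\omega)+T(v)) = u_A(v) = 0$, whence $\rho \in \overline F_f = \{\bar\omega\}$, i.e. $v=0$ after all. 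With that case added, your argument for (b) is complete.
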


\begin{figure}[htb]
\centering	

\psset{linewidth=0.7\pslinewidth}
\begin{pspicture}[showgrid=false](-3,-4)(3.5,1.5)
\psline[linewidth=4\pslinewidth](-1.25,-1.25)(1.25,-1.25)
\psline[linewidth=4\pslinewidth](-1.25,1.25)(1.25,1.25)
\psline(-1.25,-1.25)(-1.25,1.25)
\psline(1.25,-1.25)(1.25,1.25)
\uput[90](0,-1.25){$F_f$}
\uput[90](0,1.25){$\overline F_f$}
\uput[270](0,-1.8){$\dim(F_f)=\dim(\overline F_f)=1$}
\uput[270](0,-2.3){$\dim(F_f) + \dim(\overline F_f) > \underbrace{\dim(\Omega_A)}_{2} - 1$}
\uput[270](0,-3.2){\Large \XSolidBrush}
\end{pspicture}
\begin{pspicture}[showgrid=false](-1.5,-4)(3.5,1.5)
\pspolygon[linestyle=none, fillstyle=solid, fillcolor=lightgray](-0.9,-1.25)(0,1.45)(0.9,-1.25)
\rput(0,-0.05){\PstPentagon[PstPicture=false, unit=1.5]}
\psline[linewidth=4\pslinewidth](-0.9,-1.25)(0.9,-1.25)
\psdot[dotsize=0.15](0,1.45)
\uput[270](0,-1.25){$F_f$}
\uput[90](0,1.45){$\overline F_f$}
\rput[l]{90}(-0.01,-1.1){$\conv(F_f \cup \overline F_f)$}
\uput[270](0,-1.8){$\aff(F_f \cup \overline F_f) \cap \Omega_A = \Omega_A$}
\uput[270](0,-2.3){$\conv(F_f \cup \overline F_f) \neq \Omega_A$}
\uput[270](0,-3.2){\Large \XSolidBrush}
\end{pspicture}
\begin{pspicture}[showgrid=false](-1.5,-4)(3,1.5)
\rput(0,-0.46){\PstTriangle[PstPicture=false, unit=1.6]}
\uput[270](0,-3.2){\Large \CheckmarkBold}
\end{pspicture}

\caption{A few examples illustrating Lemma \ref{main-technical-lemma}: The square violates condition (a). As we have demonstrated in the main text, this leads to a dimension mismatch for the transformation. The pentagon satisfies condition (a) but violates condition (b). In the main text, we have seen that this results in a non-positive transformation: The white parts of the pentagon (i.e. the parts outside of the gray area $\conv(F_f \cup \overline F_f)$) are mapped outside of $\Omega_A^{\leq 1}$. The triangle satisfies both (a) and (b).}
\label{main-technical-visualization}
\end{figure}
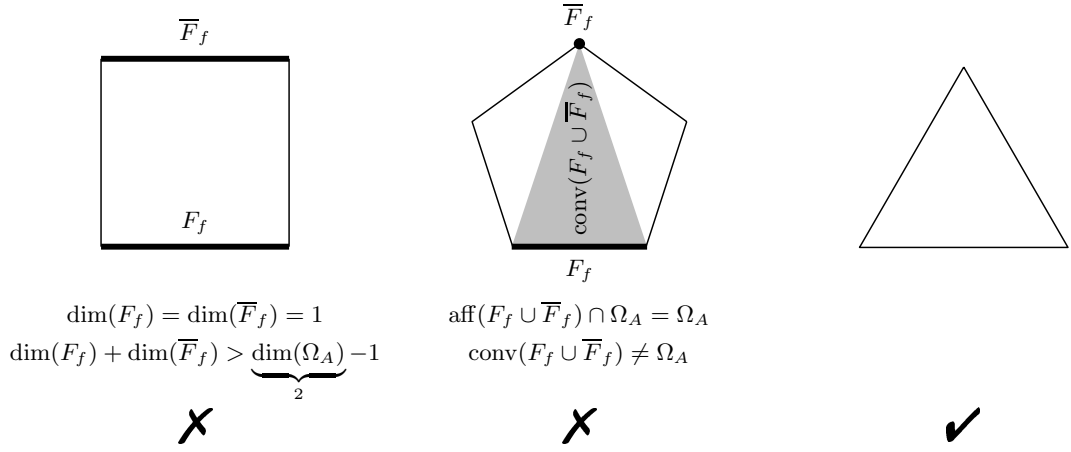

\begin{proof}
Let $(A, A_+, u_A)$ be an abstract state space, let $f \in E_A$ be a pure effect and $T: A \rightarrow A$ be a transformation with $u_A \circ T = f$ and $T(\omega) = \omega$ for every $\omega \in F_f$. For the rest of the proof, it is useful to write out (Def. \ref{trafo-def}) of a transformation and to list all the properties of $T$:
\begin{align}
&T: A \rightarrow A \text{ map such that} \nonumber \\
&\qquad T \text{ is linear,} \label{t-lin} \\
&\qquad T \text{ is positive, i.e. } T(A_+) \subseteq A_+ \,, \label{t-pos} \\
&\qquad u_A \circ T = f \label{t-ind-f} \\
&\qquad \qquad \Rightarrow u_A(T(\omega)) \leq u_A(\omega) \ \forall \omega \in \Omega_A \,, \label{t-norm-nonincr} \\
&\qquad T(\omega) = \omega \ \forall \omega \in F_f \,. \label{t-post-4}
\end{align}
Our goal is to show that properties (\ref{t-lin} - \ref{t-post-4}) imply (a) and (b) as stated above.

First note that (\ref{t-post-4}) and the linearity of $T$ (\ref{t-lin}) imply
\begin{align}
T|_{\spa(F_f)} = I|_{\spa(F_f)} \label{I-F-f} \,,
\end{align}
where $I|_{\spa(F_f)}$ is the restriction of the identity operator to $\spa(F_f)$. On the other hand, by the definition of $\overline F_f$, it holds that $f(\omega) = 0$ for all $\omega \in \overline F_f$, so by (\ref{t-ind-f}), we have that $u_A(T(\omega)) = 0$ for all $\omega \in \overline F_f$. This implies $T(\omega)$ is the zero-vector for all $\omega \in \overline F_f$, so by the linearity of $T$ (\ref{t-lin}), this means that
\begin{align}
T|_{\spa(\overline F_f)} = 0|_{\spa(\overline F_f)} \label{0-overline-F-f} \,,
\end{align}
where $0|_{\spa(\overline F_f)}$ denotes the restriction of the zero-operator to $\spa(\overline F_f)$.

\begin{enumerate}[(a)]

\item For Equations (\ref{I-F-f}) and (\ref{0-overline-F-f}) to be satisfied simultaneously, we must have that
\begin{align*}
\spa(F_f) \cap \spa(\overline F_f) = \{0\} \,,
\end{align*}
which is only possible if
\begin{align}
\dim(\spa(F_f)) + \dim(\spa(\overline F_f)) \leq \dim A \label{dim-ineq-prev}
\end{align}
Since $0 \notin \aff(F_f)$ and $0 \notin \aff(\overline F_f)$ (Fact \ref{0-notin-aff}), (Fact \ref{dim-span-plus-one}) implies that
\begin{align*}
&\dim(\spa(F_f)) = \dim(F_f) + 1 \,, \\
&\dim(\spa(\overline F_f)) = \dim(\overline F_f) + 1 \,.
\end{align*}
Noting that $\dim A = \dim \Omega_A + 1$, this allows us to rewrite Inequality (\ref{dim-ineq-prev}):
\begin{align*}
\dim F_f + \dim \overline F_f + 2 \leq \dim \Omega_A + 1
\end{align*}
and therefore
\begin{align*}
\dim F_f + \dim \overline F_f \leq \dim \Omega_A - 1 \,.
\end{align*}

\item Equations (\ref{I-F-f}) and (\ref{0-overline-F-f}) imply that
\begin{align*}
&T(F_f) = F_f \,, \\
&T(\overline F_f) = \{0\} \,.
\end{align*}
Thus, by (Fact \ref{t-conv-pres}), it holds that
\begin{align*}
T(\conv(F_f \cup \overline F_f)) &= \conv(T(F_f) \cup T(\overline F_f)) \\
&= \conv(F_f \cup \{0\}) \,.
\end{align*}
By Lemma \ref{face-pres-lemma}, the set $\conv(F_f \cup \{0\})$ is a face of $\conv(\Omega_A \cup \{0\}) = \Omega_A^{\leq 1}$. This allows us to apply (Fact \ref{face-cap-aff}) to see that
\begin{align*}
\conv(F_f \cup \{0\}) &= \underbrace{\aff(\conv(F_f \cup \{0\}))}_{\aff(F_f \cup \{0\}) \text{ by (Fact \ref{t-aff-pres})}} \cap \ \Omega_A^{\leq 1} \\
&= \aff(F_f \cup \{0\}) \cap \Omega_A^{\leq 1}
\end{align*}
and thus
\begin{align}
T(\conv(F_f \cup \overline F_f)) = \aff(F_f \cup \{0\}) \cap \Omega_A^{\leq 1} \label{t-conv} \,.
\end{align}
In the following, we show that this contains $T(\aff(F_f \cup \overline F_f) \cap \Omega_A)$. First note that
\begin{align*}
&T(\aff(F_f \cup \overline F_f) \cap \Omega_A) \subseteq T(\aff(F_f \cup \overline F_f)) \cap T(\Omega_A) \,.
\end{align*}
We can rewrite this term by means of (Fact \ref{t-aff-pres}),
\begin{align}
&T(\aff(F_f \cup \overline F_f) = \aff(T(F_f) \cup T(\overline F_f)) \,, \nonumber
\end{align}
and by means of (\ref{t-pos}) and (\ref{t-norm-nonincr}) (c.f. (\ref{t-cond-given-lin})),
\begin{align}
&T(\Omega_A) \subseteq \Omega_A^{\leq 1} \,,
\end{align}
to get
\begin{align}
&T(\aff(F_f \cup \overline F_f) \cap \Omega_A) \subseteq \aff(\underbrace{T(F_f)}_{F_f} \cup \underbrace{T(\overline F_f)}_{\{0\}}) \cap \underbrace{T(\Omega_A)}_{\subseteq \Omega_A^{\leq 1}} \,.\nonumber
\end{align}
Thus,
\begin{align}
&T(\aff(F_f \cup \overline F_f) \cap \Omega_A) \subseteq \aff(F_f \cup \{0\}) \cap \Omega_A^{\leq 1} \label{t-aff-subset} \,.
\end{align}
Note that in the assumed case where $\overline F_f$ consists of only one point, $T|_{\aff(F_f \cup \overline F_f)}$ is injective. (This is easily verified from Equations (\ref{I-F-f}) and (\ref{0-overline-F-f}): $T|_{\aff(F_f)}$ is injective, $\overline F_f$ is affinely independent of $F_f$ and $T(F_f) = F_f$ is affinely independent of $T(\overline F_f) = \{ 0 \}$, so the affine map $T|_{\aff(F_f \cup \overline F_f)}$ is injective.) Moreover,
\begin{align}
\conv(F_f \cup \overline F_f) \subseteq \aff(F_f \cup \overline F_f) \cap \Omega_A \,. \label{nice-subset}
\end{align}
By virtue of (Fact \ref{image-superset}), Equations (\ref{t-conv}), (\ref{t-aff-subset}) and (\ref{nice-subset}) imply
\begin{align*}
\aff(F_f \cup \overline F_f) \cap \Omega_A = \conv(F_f \cup \overline F_f) \,,
\end{align*}
which is what we wanted to show. \qedhere
\end{enumerate}

\end{proof}

\begin{lemma}
\label{u_s-face-lemma}
Let $A$ be an abstract state space, let $S \subseteq \Omega_A$ be any subset of the normalized states. Then
\begin{align*}
U_S := \{ f \in E_A \mid f(\omega) = 1 \ \forall \omega \in S \}
\end{align*}
is a face of $E_A$.
\end{lemma}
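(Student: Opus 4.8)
The plan is to verify the definition of a face (Def.~\ref{face-def}) directly for $U_S$, since $U_S$ is carved out of $E_A$ by the family of linear equality constraints $f(\omega)=1$ for $\omega\in S$, and equality constraints that saturate the bound $f\le 1$ already in force on $\Omega_A$ behave very well under convex combinations. First I would record that $U_S$ is nonempty and convex: the unit effect $u_A$ satisfies $u_A(\omega)=1$ for all $\omega\in\Omega_A\supseteq S$, so $u_A\in U_S$; and if $f,g\in U_S$ and $0\le\lambda\le 1$, then by linearity of effects $(\lambda f+(1-\lambda)g)(\omega)=\lambda f(\omega)+(1-\lambda)g(\omega)=1$ for every $\omega\in S$, while $\lambda f+(1-\lambda)g\in E_A$ because $E_A$ is convex. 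Hence $U_S$ is a nonempty convex subset of $E_A$.

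For the extremality property, I would take $x,y\in E_A$, $0<\lambda<1$, and suppose $z:=\lambda x+(1-\lambda)y\in U_S$; the goal is to show $x,y\in U_S$. Fixing any $\omega\in S$, the key observation is that $x,y\in E_A$ and $\omega\in\Omega_A$ force $x(\omega)\le 1$ and $y(\omega)\le 1$, so that
\begin{align*}
1 = z(\omega) = \lambda x(\omega) + (1-\lambda) y(\omega) \le \lambda + (1-\lambda) = 1 \,.
\end{align*}
Since $\lambda>0$ and $1-\lambda>0$, the only way this chain can collapse to an equality is $x(\omega)=y(\omega)=1$. As $\omega\in S$ was arbitrary, $x,y\in U_S$, which is exactly the face condition of (Def.~\ref{face-def}).

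I do not expect a genuine obstacle: the statement is a routine consequence of the fact that every effect is bounded above by $1$ on $\Omega_A$, so a strict convex combination hitting the value $1$ at a point of $S$ leaves no slack for either summand. The only small point requiring attention is the convention that faces are required to be nonempty, which is why I would include the remark $u_A\in U_S$ explicitly rather than dismiss nonemptiness as obvious.
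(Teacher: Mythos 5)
Your proof is correct and follows essentially the same route as the paper's: nonemptiness via $u_A\in U_S$, convexity by linearity of effects, and the face condition by noting that a strict convex combination of values bounded by $1$ can equal $1$ only if both values equal $1$. No gaps.
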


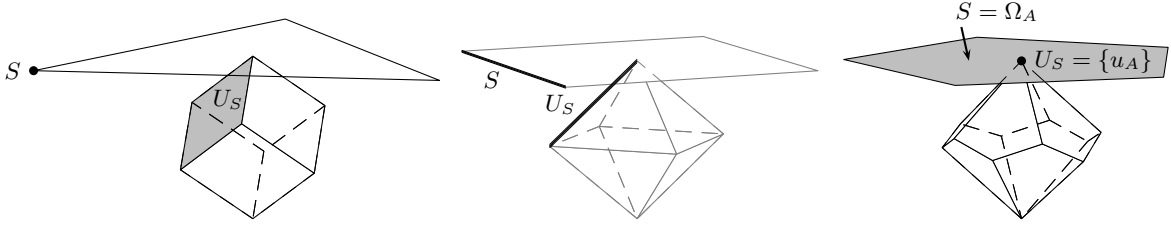
\begin{figure}[htb]
\centering

\begin{pspicture}[showgrid=false](-2,-0.5)(3,3)
\psset{viewpoint=20 50 10,Decran=120}
\psset{solidmemory}
\psSolid[object=new, linewidth=0.5\pslinewidth, fcol=true,
action=draw*,
name=C,
fcol=3 (0.75 setgray),
sommets=
0 0 0 
-0.707107 1.22474 1 
-0.707107 -1.22474 1 
1.41421 0 1 
-0.235702 0.408248 0.333333 
-0.235702 -0.408248 0.333333 
0.471405 0 0.333333 
0.235702 -0.408248 0.666667 
0.235702 0.408248 0.666667 
-0.471405 0 0.666667 
0 0 1, 
faces={
[0 4 8 6]
[0 6 7 5]
[0 5 9 4]
[7 6 8 10]
[8 4 9 10]
[9 5 7 10]
}]%
\psSolid[object=new, linewidth=0.5\pslinewidth,
action=draw,
sommets=
0 0 0 
-0.707107 1.22474 1 
-0.707107 -1.22474 1 
1.41421 0 1 
-0.235702 0.408248 0.333333 
-0.235702 -0.408248 0.333333 
0.471405 0 0.333333 
0.235702 -0.408248 0.666667 
0.235702 0.408248 0.666667 
-0.471405 0 0.666667 
0 0 1, 
faces={
[1 2 3]
}]%
\psPoint(1.41421, 0, 1){w3}
\uput[l](w3){$S$}
\psdots[dotsize=0.12](w3)
\psPoint(0, 0, 0.85){us}
\uput[dl](us){$U_S$}
\end{pspicture}
\begin{pspicture}[showgrid=false](-2,-0.5)(3,3)
\psset{viewpoint=26 10 5,Decran=60}
\psset{solidmemory}
\psSolid[object=new, linewidth=0.5\pslinewidth, linecolor=gray,
action=draw*,
name=B,
sommets= 
0 0 0 
0.420448 0.420448 0.5 
-0.420448 0.420448 0.5 
-0.420448 -0.420448 0.5 
0.420448 -0.420448 0.5 
0 0 1 
0 1.18921 1 
-1.18921 0 1 
0 -1.18921 1 
1.18921 0 1, 
faces={
[1 4 0]
[0 2 1]
[0 3 2]
[0 4 3]
[1 2 5]
[2 3 5]
[3 4 5]
[4 1 5]
[6 7 8 9]}]%
\psSolid[object=line, linewidth=2\pslinewidth, args=1.18921 0 1 0 -1.18921 1]
\psSolid[object=line, linewidth=2\pslinewidth, args=1.18921 0 1.006 0 -1.18921 1.006]
\psSolid[object=line, linewidth=2\pslinewidth, args=1.18921 0 0.994 0 -1.18921 0.994]
\psSolid[object=line, linewidth=2\pslinewidth, args=0.420448 -0.420448 0.5 0 0 1]
\psSolid[object=line, linewidth=2\pslinewidth, args=0.420448 -0.420448 0.51 0 0 1.01]
\psSolid[object=line, linewidth=2\pslinewidth, args=0.420448 -0.420448 0.49 0 0 0.99]
\psPoint(-0.2, -1.18921, 0.9){w3p}
\uput[dr](w3p){$S$}
\psPoint(0.40448, -0.35448, 0.62){e4}
\uput[u](e4){$U_S$}
\end{pspicture}
\begin{pspicture}[showgrid=false](-2,-0.5)(1.5,3)
\psset{viewpoint=26 10 5,Decran=60}
\psset{solidmemory}
\psSolid[object=new, linewidth=0.5\pslinewidth,
action=draw*,
name=A,
fcol=10 (0.75 setgray),
sommets= 
0 0 0 
0.152217 0.468477 0.414214 
-0.402248 0.29225 0.447214 
-0.402248 -0.29225 0.447214 
0.153645 -0.472871 0.447214 
0.497206 0 0.447214 
-0.153645 -0.472871 0.552786 
0.402248 -0.29225 0.552786 
0.402248 0.29225 0.552786 
-0.153645 0.472871 0.552786 
-0.497206 0 0.552786 
0 0 1 
0.343561 1.05737 1 
-0.899454 0.653491 1 
-0.899454 -0.653491 1 
0.343561 -1.05737 1 
1.11179 0 1, 
faces={
[0 1 8 5]
[0 5 7 4]
[0 4 6 3]
[0 3 10 2]
[0 2 9 1]
[6 4 7 11]
[7 5 8 11]
[8 1 9 11]
[9 2 10 11]
[10 3 6 11]
[12 13 14 15 16]}]%
\psPoint(0, 0, 1){u}
\psdots[dotsize=0.12](u)
\uput[r](u){$U_S = \{u_A\}$}
\uput[dr](-1,3){$S = \Omega_A$}
\psline{->}(-0.8,2.55)(-0.7,2.1)
\end{pspicture}

\caption{Illustration of Lemma \ref{u_s-face-lemma}: This figure shows three examples where the set $U_S$ is visualized. As one can see, in all three cases, the set $U_S$ is a face of $E_A$. For simplicity of the picture, we have chosen the subset $S \subseteq \Omega_A$ to be a face of $\Omega_A$. However, the fact that $U_S$ is a face of $E_A$ also holds when $S$ is not a face of $\Omega_A$. In the square example, for instance, if we would take any subset of the edge $S$ with at least two elements, then the set $U_S$ would still be the same.}
\label{u_s-examples}
\end{figure}

\begin{proof}
We have to check the properties listed in (Def. \ref{face-def}). Obviously, $u_A \in U_S$, so $U_S$ is nonempty. If $f_1, f_2 \in E_A$ with $f_1(\omega) = f_2(\omega) = 1$ for all $\omega \in S$, then $\lambda f_1(\omega) + (1-\lambda) f_2(\omega) = 1$ for all $\lambda \in [0,1]$ and all $\omega \in S$, so $U_S$ is convex. Let $f \in U_S$, let $g, h \in E_A$ and $0 < \lambda < 1$ such that $\lambda g + (1-\lambda) h = f$. For any $\omega \in S$, we have that
\begin{align}
\underbrace{\lambda}_{< 1} \underbrace{g(\omega)}_{\leq 1} + \underbrace{(1-\lambda)}_{<1} \underbrace{h(\omega)}_{\leq 1} = 1 \,. \label{fgh-eq}
\end{align}
Equation (\ref{fgh-eq}) can only be satisfied if $g(\omega) = h(\omega) = 1$, so $g, h \in U_S$.
\end{proof}

\begin{lemma}
\label{unique-suitable-f}
Let $A$ be an abstract state space, let $F$ be a minus-face of $\Omega_A$ (Def. \ref{minus-face}). Then there is a unique pure effect $f \in E_A$ such that $F$ is the certain face of $f$ (Def. \ref{cert-face-def}), i.e. $F_f = F$.
\end{lemma}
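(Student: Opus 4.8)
The plan is to reduce the statement to a one‑dimensional picture inside $E_A$. Write $d := \dim \Omega_A$. Since $A_+$ is generating and $0 \notin \aff(\Omega_A)$ (Fact \ref{0-notin-aff}), Facts \ref{spa-is-aff-0} and \ref{dim-span-plus-one} give $\dim A = d+1$; the same two facts applied to $F$ (again $0 \notin \aff(F)$) together with $\dim F = d-1$ give $\dim \spa(F) = d = \dim A - 1$. Thus $\spa(F)$ is a hyperplane in $A$, so there is a linear functional $\phi \in A^*$ with $\ker \phi = \spa(F)$, unique up to a nonzero scalar. This $\phi$ will be the single degree of freedom in the problem.

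Next I would pin down the behaviour of $\phi$ on $\Omega_A$. Using $\aff(F) = \spa(F) \cap \aff(\Omega_A)$ and $F = \aff(F) \cap \Omega_A$ (Fact \ref{face-cap-aff}), one gets $\{\omega \in \Omega_A : \phi(\omega) = 0\} = F$. I would then argue that $\phi$ cannot take both signs on $\Omega_A$: a chord of $\Omega_A$ from a point with $\phi > 0$ to one with $\phi < 0$ meets $\{\phi = 0\} \cap \Omega_A = F$ in its relative interior, which forces both of its endpoints into the face $F$ --- a contradiction. (This is exactly the statement that the minus-face $F$ is exposed.) After rescaling we may take $\phi \geq 0$ on $\Omega_A$; it is not identically zero there (otherwise $\phi \equiv 0$ on $\spa(\Omega_A) = A$), so by compactness $M := \max_{\omega \in \Omega_A} \phi(\omega)$ exists and is strictly positive.

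Now consider $U_F := \{ g \in E_A \mid g(\omega) = 1 \ \forall \omega \in F \}$, which is a face of $E_A$ by Lemma \ref{u_s-face-lemma}. For $g \in E_A$ one has $g \in U_F$ iff $u_A - g$ vanishes on $F$, hence (by linearity) on $\spa(F)$, hence $u_A - g = a\phi$ for some $a \in \mathbb{R}$; and $u_A - a\phi \in E_A$ iff $0 \leq a\phi(\omega) \leq 1$ for all $\omega \in \Omega_A$, i.e. iff $0 \leq a \leq 1/M$. So $U_F = \{ u_A - a\phi : 0 \leq a \leq 1/M \}$ is a nondegenerate line segment, whose only extreme points are $u_A$ (at $a = 0$) and $f := u_A - \phi/M$ (at $a = 1/M$). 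By Fact \ref{face-face-face}, an element of $U_F$ is a pure effect precisely when it is extreme in $U_F$, so the pure effects $g$ with $F \subseteq F_g$ are exactly $u_A$ and $f$. Finally $F_{u_A} = \Omega_A \neq F$ (as $F$ is proper, $\dim F < \dim \Omega_A$), whereas $F_f = \{\omega \in \Omega_A : \phi(\omega) = 0\} = F$. Hence $f$ is the unique pure effect with $F_f = F$, giving both existence and uniqueness.

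The step I expect to be the main obstacle is the claim that $\phi$ is sign-definite on $\Omega_A$ (equivalently, that $F$ is exposed): it is what guarantees the segment $U_F$ is nondegenerate, without which there would be no effect at all with certain face $F$, and it is the one place where the facial property of $F$ (beyond its dimension) really enters. The dimension count that makes $\phi$ essentially unique is routine but has to be tracked carefully, since the whole argument hinges on $\spa(F)$ being a genuine hyperplane in $A$.
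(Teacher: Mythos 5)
Your proof is correct, and its skeleton is recognizably the paper's: both arguments reduce the problem to the set $U_F$ of effects equal to one on $F$, use the dimension count $\dim\spa(F)=\dim A-1$, and combine Lemma \ref{u_s-face-lemma} with (Fact \ref{face-face-face}) to identify the pure effects containing $F$ in their certain face with the two endpoints of a line segment, of which $u_A$ is discarded. Where you differ is in execution, and precisely at the delicate point: the paper shows abstractly that $\dim U_F\le 1$, relegates the nondegeneracy of $U_F$ (existence of an element other than $u_A$) to a footnoted supporting-hyperplane sketch, and then gets $F_f=F$ from the facial argument that the only faces of $\Omega_A$ containing a minus-face are $F$ and $\Omega_A$; you instead construct the second endpoint explicitly as $f=u_A-\phi/M$ with $\ker\phi=\spa(F)$, and your chord argument (a point of $\{\phi=0\}\cap\Omega_A=F$ in the relative interior of a segment of $\Omega_A$ forces both endpoints into the face $F$) is a short, complete replacement for that footnote, after which $F_f=F$ is a direct computation rather than a face-lattice argument. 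Two small things to make explicit: the finiteness of $M=\max_{\omega\in\Omega_A}\phi(\omega)$ uses compactness (boundedness) of $\Omega_A$, which is not among the appendix's listed facts but does follow from the definition of an abstract state space (closed generating cone, strictly positive $u_A$, finite dimension) and is an explicit framework assumption in the main text; and the identity $\aff(F)=\spa(F)\cap\aff(\Omega_A)$ deserves its one-line justification that an element of $\spa(F)$ with $u_A$-value $1$ is an affine combination of points of $F$.
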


\begin{proof}
First, note that any effect $g$ with $F_g = F$ must be an element of $U_F = \{ g \in E_A \mid g(\omega) = 1 \ \forall \omega \in F \}$ since $g \in U_F$ is equivalent to $F \subseteq F_g$. Let $g \in U_F$. By the linearity of $g$, the condition
\begin{align}
\label{g-condition}
g(\omega) = 1 \ \forall \omega \in F
\end{align}
determines $g$ on $\spa(F)$. We know from (Fact \ref{0-notin-aff}) and (Fact \ref{dim-span-plus-one}) that $\dim(\spa(F)) = \dim F + 1$. Moreover, the premise that $F$ is a minus-face of $\Omega_A$ gives $\dim F + 1 = \dim \Omega_A - 1 + 1 = \dim \Omega_A = \dim A - 1$ and thus $\dim(\spa(F)) = \dim A - 1$. Thus, any functional $g \in A^*$ satisfying condition (\ref{g-condition}) is fully determined by specifying its value at some point $p \notin \spa(F)$. Let $\alpha, \beta \in \mathbb{R}$, $\alpha \neq \beta$, and let $g_\alpha, g_\beta \in A^*$ be the unique functional satisfying (\ref{g-condition}) and $g_\alpha(p) = \alpha, g_\beta(p) = \beta$, respectively. Any $g \in A^*$ satisfying (\ref{g-condition}) lies in the affine hull of $g_\alpha$ and $g_\beta$ since for $g \in A^*$ with $g(p) = \gamma$, it holds that
\begin{align*}
g(p) = \gamma &= \frac{\gamma - \beta}{\alpha - \beta} \alpha + \left(1 - \frac{\gamma - \beta}{\alpha - \beta}\right) \beta = \frac{\gamma - \beta}{\alpha - \beta} g_\alpha(p) + \left(1 - \frac{\gamma - \beta}{\alpha - \beta}\right) g_\beta(p) \,.
\end{align*}
Thus,
\begin{align*}
U_F &= \{ g \in E_A \mid g(\omega) = 1 \ \forall \omega \in F \} \subseteq \aff(\{g_\alpha\} \cup \{g_\beta\}) \\
&\Rightarrow \dim U_F = 1 \,,
\end{align*}
where we have used the fact that for a minus-face $F$, the set $U_F$ contains more elements than just $u_A$.\footnote{This can be seen geometrically: Regarding $\aff(\Omega_A)$ as an affine space, $\Omega_A$ fits between two parallel affine hyperplanes (in $\aff(\Omega_A)$) such that one of them touches $\Omega_A$ at $F$. Define an affine functional with value 1, 0 on the hyperplane, respectively. The linear extension of this functional to $A$ is an effect which is different from $u_A$. This proof sketch can be turned into a rigorous and elementary but lenghty proof.} According to Lemma \ref{u_s-face-lemma}, $U_F$ is a face of $E_A$. Therefore, $U_F$ is a convex 1-dimensional set, which is nothing but a line-segment. A line segment has exactly two extreme points, namely its endpoints. By (Fact \ref{face-face-face}), these two extreme points are precisely the extreme points of $E_A$ compatible with (\ref{g-condition}) and therefore pure effects. The pure effect $u_A \in E_A$ is obviously one of these two pure effects (for which $F_{u_A} = \Omega_A$, see Fig. \ref{three-ex-fig}). Let the other one be denoted by $f$. Obviously, $F \subseteq F_f$ since $f \in U_F$. By (Fact \ref{certain-indeed-face}), $F_f$ is a face of $\Omega_A$. However, by the premise that $F$ is a minus-face of $\Omega_A$ and by (Fact \ref{face-cap-aff}), the only faces of $\Omega_A$ containing $F$ are $F$ and $\Omega_A$. The latter can be excluded since $f \neq u_A$. Thus, $f$ is the unique pure effect such that $F_f = F$.
\end{proof}

\begin{figure}[htb]
\centering

\begin{pspicture}[showgrid=false](-2,-1.5)(3,3)
\psset{viewpoint=26 10 5,Decran=60}
\psset{solidmemory}
\psSolid[object=new, linewidth=0.5\pslinewidth, linecolor=gray,
action=draw*,
name=B,
sommets= 
0 0 0 
0.420448 0.420448 0.5 
-0.420448 0.420448 0.5 
-0.420448 -0.420448 0.5 
0.420448 -0.420448 0.5 
0 0 1 
0 1.18921 1 
-1.18921 0 1 
0 -1.18921 1 
1.18921 0 1, 
faces={
[1 4 0]
[0 2 1]
[0 3 2]
[0 4 3]
[1 2 5]
[2 3 5]
[3 4 5]
[4 1 5]
[6 7 8 9]}]%
\psSolid[object=line, linewidth=2\pslinewidth, args=1.18921 0 1 0 -1.18921 1]
\psSolid[object=line, linewidth=2\pslinewidth, args=1.18921 0 1.006 0 -1.18921 1.006]
\psSolid[object=line, linewidth=2\pslinewidth, args=1.18921 0 0.994 0 -1.18921 0.994]
\psSolid[object=line, linewidth=2\pslinewidth, args=0.420448 -0.420448 0.5 0 0 1]
\psSolid[object=line, linewidth=2\pslinewidth, args=0.420448 -0.420448 0.51 0 0 1.01]
\psSolid[object=line, linewidth=2\pslinewidth, args=0.420448 -0.420448 0.49 0 0 0.99]
\psPoint(-0.2, -1.18921, 0.9){w3p}
\uput[dr](w3p){$F$}
\psPoint(0.40448, -0.35448, 0.62){e4}
\uput[u](e4){$U_F$}
\psPoint(0, 0, 1){u}
\psdots[dotsize=0.13](u)
\uput[r](u){$u_A$}
\psPoint(0.420448, -0.420448, 0.5){e4}
\psdots[dotsize=0.13](e4)
\uput[l](e4){$f$}
\uput[270](0,-0.4){$\left.\begin{array}{l}\text{The only elements of $U_F$ that} \\ \text{are pure effects are $u_A$ and $f$.}\end{array}\right.$}
\end{pspicture}
\begin{pspicture}[showgrid=false](-2,-1.5)(4,3)
\psset{viewpoint=26 10 5,Decran=60}
\psset{solidmemory}
\psSolid[object=new, linewidth=0.5\pslinewidth,
action=draw*,
name=B,
fcol=8 (0.75 setgray),
sommets= 
0 0 0 
0.420448 0.420448 0.5 
-0.420448 0.420448 0.5 
-0.420448 -0.420448 0.5 
0.420448 -0.420448 0.5 
0 0 1 
0 1.18921 1 
-1.18921 0 1 
0 -1.18921 1 
1.18921 0 1, 
faces={
[1 4 0]
[0 2 1]
[0 3 2]
[0 4 3]
[1 2 5]
[2 3 5]
[3 4 5]
[4 1 5]
[6 7 8 9]}]%
\psPoint(0, 0, 1){u}
\psdots[dotsize=0.13](u)
\uput[r](u){$u_A$}
\uput[dr](-1,3){$F_{u_A} = \Omega_A \neq F$}
\psline{->}(-0.8,2.55)(-0.7,2.1)
\uput[270](0,-0.4){$\left.\begin{array}{l}\text{The certain face of } u_A \\ \text{does \emph{not} coincide with } F.\end{array}\right.$}
\end{pspicture}
\begin{pspicture}[showgrid=false](-2,-1.5)(3,3)
\psset{viewpoint=26 10 5,Decran=60}
\psset{solidmemory}
\psSolid[object=new, linewidth=0.5\pslinewidth, linecolor=gray,
action=draw*,
name=B,
sommets= 
0 0 0 
0.420448 0.420448 0.5 
-0.420448 0.420448 0.5 
-0.420448 -0.420448 0.5 
0.420448 -0.420448 0.5 
0 0 1 
0 1.18921 1 
-1.18921 0 1 
0 -1.18921 1 
1.18921 0 1, 
faces={
[1 4 0]
[0 2 1]
[0 3 2]
[0 4 3]
[1 2 5]
[2 3 5]
[3 4 5]
[4 1 5]
[6 7 8 9]}]%
\psSolid[object=line, linewidth=2\pslinewidth, args=1.18921 0 1 0 -1.18921 1]
\psSolid[object=line, linewidth=2\pslinewidth, args=1.18921 0 1.006 0 -1.18921 1.006]
\psSolid[object=line, linewidth=2\pslinewidth, args=1.18921 0 0.994 0 -1.18921 0.994]
\psPoint(0.420448, -0.420448, 0.5){e4}
\psdots[dotsize=0.13](e4)
\uput[l](e4){$f$}
\psPoint(-0.2, -1.18921, 0.9){w3p}
\uput[d](w3p){$F_f = F$}
\uput[270](0,-0.4){$\left.\begin{array}{l}\text{The certain face of } f \\ \text{\emph{does} coincide with } F.\end{array}\right.$}
\end{pspicture}

\caption{\label{three-ex-fig}The basic idea behind the proof of Lemma \ref{unique-suitable-f}: Here we illustrate the case where $F$ is a minus-face of a square-shaped set of states.}
\end{figure}
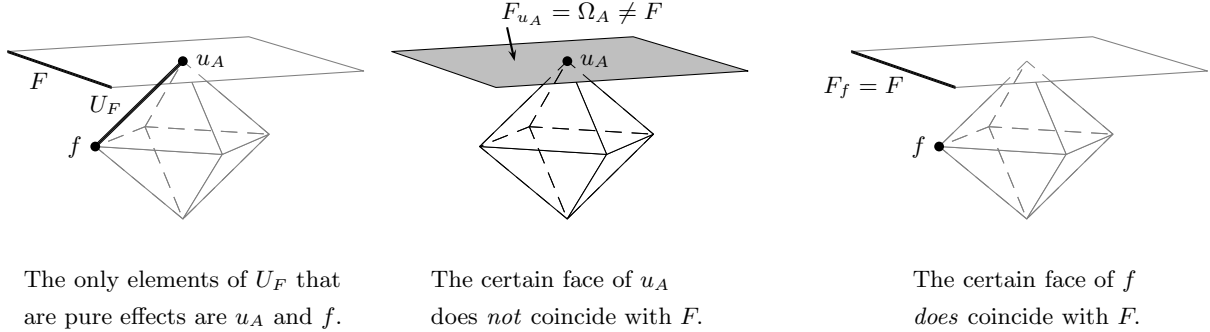

\subsection{The main result}
\label{main-result-section}

The following definition will be useful in the proof of Theorem \ref{main-result-thm}.

\begin{defi}
We call a polytope $P$ \emph{uniformly pyramidal} (For a motivation of this naming, see Example 2.40 and Definition 5.3 in \cite{Pfi12}.) if for every minus-face $F$ of $P$, it holds that $P = \conv(F \cup \{a_F\})$ for some $a_F \in P$. Note that in this case, it obviously holds that $a_F \notin \aff(F)$.
\end{defi}

\begin{figure}[htb]
\centering

\begin{pspicture}[showgrid=false](-2,-1.5)(6,4) 
\psset{viewpoint=10 10 30 rtp2xyz,Decran=11} 
\psset{solidmemory}
\psSolid[
object=new, linewidth=0.5\pslinewidth,
name=B,
sommets=
0 0 3 
-0.707107 1.22474 1 
-0.707107 -1.22474 1 
1.41421 0 1, 
faces={
[1 2 3]
[1 2 0]
[2 3 0]
[3 1 0]
},
action=draw*]%
\psSolid[
object=new, linewidth=0.5\pslinewidth,
name=C,
sommets=
0 0 3 
-0.707107 1.22474 1 
-0.707107 -1.22474 1 
1.41421 0 1, 
faces={
[2 1 0]
},
action=none]%
\psSolid[object=plan,definition=solidface,action=none,args=B 2,name=R0]
\psset{fontsize=15}
\psset{phi=90}
\psProjection[object=texte,text={B},plan=R0]%
\psSolid[object=plan,definition=solidface,action=none,args=B 3,name=R1]
\psset{fontsize=15}
\psset{phi=0}
\psProjection[object=texte,text={E},plan=R1]%
\psSolid[object=plan,definition=solidface,action=none,args=B 0,name=R2]
\psset{fontsize=15}
\psset{phi=270}
\psProjection[object=texte,text={C},plan=R2]%
\psSolid[object=plan,definition=solidface,action=none,args=C 0,name=R3]
\psset{fontsize=15}
\psset{phi=180}
\psProjection[object=texte,text={D},plan=R3]%
\psPoint(-0.707107, -1.22474, 1){p}
\uput[l](p){$a_E$}
\psPoint(-0.707107, 1.22474, 1){r}
\uput[r](r){$a_B$}
\psPoint(1.41421, 0, 1){s}
\uput[d](s){$a_D$}
\psPoint(0, 0, 3){t}
\uput[u](t){$a_C$}
\rput[tl](2.5, 2.5){\let\centering\relax\parbox{3cm}{$T = \conv(F \cup \{a_F\})$ for every minus-face $F$, i.e. for $F = B, C, D, E$}}
\uput[d](0,-0.2){\parbox{4cm}{\centering tetrahedron $T$ \\ uniformly pyramidal}}
\psdots[dotsize=0.1](p)(r)(s)(t)
\end{pspicture}
\begin{pspicture}[showgrid=false](-2.5,-1)(6,4)
\psset{viewpoint=40 16 12,Decran=130}
\psset{solidmemory}
\psSolid[object=new,linewidth=0.5\pslinewidth,
action=draw*,
name=A,
sommets= 
0 0 0 
0.420448 0.420448 0.5 
-0.420448 0.420448 0.5 
-0.420448 -0.420448 0.5 
0.420448 -0.420448 0.5 
0 0 1.2 
0 1.18921 1 
-1.18921 0 1 
0 -1.18921 1 
1.18921 0 1, 
faces={
[1 2 3 4]
[1 2 5]
[2 3 5]
[3 4 5]
[4 1 5]}]%
\psSolid[object=plan,definition=solidface,action=none,args=A 0,name=S0]
\psset{fontsize=8}
\psset{phi=0}
\psProjection[object=texte,text={G},plan=S0]%
\psSolid[object=plan,definition=solidface,action=none,args=A 1,name=S1]
\psset{fontsize=8}
\psset{phi=90}
\psProjection[object=texte,text={H},plan=S1]%
\psPoint(0, 0, 1.2){a}
\uput[u](a){$a_G$}
\psdots[dotsize=0.1](a)
\rput[tl](2, 3){\let\centering\relax\parbox{3cm}{$P = \conv(G \cup \{a_G\})$, but there is no \\ $a_H \in P$ such that $P = \conv(H \cup \{a_H\})$}}
\uput[d](0,0.3){\parbox{4cm}{\centering pyramid $P$ \\ \emph{not} uniformly pyramidal}}
\end{pspicture}

\caption{The uniformal pyramidal property: The tetrahedron $T$ on the left is an example of a uniformly pyramidal polytope. The shape $P$ on the right (formed like an Egyptian pyramid) is not uniformly pyramidal: it is only pyramidal with respect to its ground face $G$. For every other face $F$, there are two extreme points of the pyramid that are not contained in the face $F$, so the pyramid is not of the form $\conv(F \cup \{a_F\})$.}
\label{unif-pyr-figure}
\end{figure}
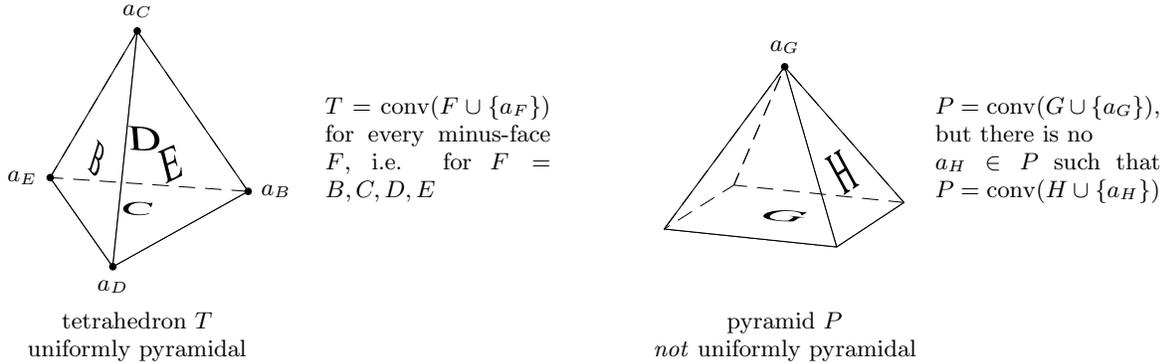

From a physical point of view, it would be sufficient to derive the result from the assumption that our postulate holds for \emph{every} pure effect. However, in Theorem \ref{main-result-thm}, we assume even less: We only assume the postulate for pure effects for which the certain face $F_f$ is a minus-face of $\Omega_A$. This is a weaker assumption and thus, we prove a stronger statement. This will be useful in Appendix \ref{approx-case-section}. To prove Theorem \ref{approx-case-thm}, we will make use of Lemma \ref{non-pres-f}, which is the contraposition of Theorem \ref{main-result-thm}. This contraposition has the right form if we only assume the postulate for pure effects for which the certain face $F_f$ is a minus-face of $\Omega_A$.

\begin{thm}
\label{main-result-thm}
Let $A$ be a polytopic theory (Def. \ref{polytopic-theory}) satisfying the following weak form of our postulate: For every pure effect $f \in E_A$ for which the certain face $F_f$ is a minus-face of $\Omega_A$, there is a transformation $T: A \rightarrow A$ such that $f = u_A \circ T$ and $T(\omega) = \omega$ for every $\omega \in F_f$ (Def. \ref{cert-face-def}). Then $A$ is a classical theory (Def. \ref{classical-theory}).
\end{thm}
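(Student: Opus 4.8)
The plan is to run the two-step strategy announced above: first show that the (weak) postulate forces $\Omega_A$ to be uniformly pyramidal, and then show that every uniformly pyramidal polytope is a simplex. One may assume $\dim\Omega_A\ge 1$, since a $0$-dimensional $\Omega_A$ is a single point, hence a $0$-simplex.

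\emph{Step 1 (the postulate $\Rightarrow$ uniformly pyramidal).} I would fix an arbitrary minus-face $F$ of $\Omega_A$ (one exists by Fact \ref{polytope-has-facet}; since $0\le\dim F=\dim\Omega_A-1<\dim\Omega_A$, it is a nonempty proper face). By Lemma \ref{unique-suitable-f} there is a unique pure effect $f$ with $F_f=F$; as $F\ne\emptyset$ and $F\ne\Omega_A$ we have $f\ne 0$ and $f\ne u_A$, so $\overline f=u_A-f$ is a nonzero pure effect (Fact \ref{u-f-pure}) and $\overline F_f=F_{\overline f}$ is a nonempty face (Fact \ref{certain-indeed-face}). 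The postulate supplies a transformation $T$ with $u_A\circ T=f$ and $T|_{F_f}=\mathrm{id}$, so Lemma \ref{main-technical-lemma}(a) gives $\dim\overline F_f\le\dim\Omega_A-1-\dim F_f=0$; with nonemptiness this forces $\overline F_f=\{a_F\}$ for a single point $a_F$. Then Lemma \ref{main-technical-lemma}(b) applies and yields $\aff(F\cup\{a_F\})\cap\Omega_A=\conv(F\cup\{a_F\})$. Since $f$ is linear and $\equiv 1$ on $F$, it is $\equiv 1$ on $\aff(F)$, while $f(a_F)=0$; hence $a_F\notin\aff(F)$, so $\dim\aff(F\cup\{a_F\})=\dim F+1=\dim\Omega_A$, which makes $\aff(F\cup\{a_F\})\supseteq\aff(\Omega_A)$ and collapses the left-hand side to $\Omega_A$. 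Thus $\Omega_A=\conv(F\cup\{a_F\})$ for every minus-face $F$, i.e.\ $\Omega_A$ is uniformly pyramidal.

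\emph{Step 2 (uniformly pyramidal $\Rightarrow$ simplex).} I would prove by induction on $d=\dim P$ that a uniformly pyramidal polytope $P$ is a $d$-simplex; the cases $d\le 1$ are immediate. For $d\ge 2$, pick a minus-face $F$ of $P$; uniform pyramidality gives $P=\conv(F\cup\{a\})$ with $a\in P$ and $a\notin\aff(F)$, and then $\ver(P)=\ver(F)\cup\{a\}$ with $a\notin\ver(F)$, while $F$ is a polytope (Fact \ref{polytope-face-lemma}). The crux is that $F$ is itself uniformly pyramidal. Given any minus-face $G$ of $F$, note $G$ is a face of $P$ (Fact \ref{face-face-face}), and since $a\notin\aff(F)\supseteq\aff(G)$, Lemma \ref{face-pres-lemma} shows $M:=\conv(G\cup\{a\})$ is a face of $\conv(F\cup\{a\})=P$ of dimension $\dim G+1=\dim P-1$, hence a minus-face of $P$. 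Applying uniform pyramidality of $P$ to $M$ gives $P=\conv(M\cup\{b\})$ with $b\notin\aff(M)$, so $\ver(P)=\ver(M)\cup\{b\}=\ver(G)\cup\{a\}\cup\{b\}$; comparing with $\ver(P)=\ver(F)\cup\{a\}$ and using $a\ne b$, $a\notin\ver(F)\cup\ver(G)$ forces $\ver(F)=\ver(G)\cup\{b\}$, so $b\in\ver(F)\subseteq F$ and $F=\conv(\ver F)=\conv(G\cup\{b\})$. Hence $F$ is uniformly pyramidal, so by the induction hypothesis it is a simplex; since $a\notin\aff(F)$, the set $\ver(P)=\ver(F)\cup\{a\}$ is affinely independent and $P$ is a $d$-simplex. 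Taking $P=\Omega_A$ gives the theorem.

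\emph{Where the difficulty lies.} Step 1 is essentially a combination of Lemmas \ref{unique-suitable-f} and \ref{main-technical-lemma}, so the real work is in Step 2, and specifically in the stability claim that a minus-face of a uniformly pyramidal polytope is again uniformly pyramidal; this is exactly where Lemma \ref{face-pres-lemma} is indispensable, letting one lift a minus-face $G$ of $F$ to a minus-face $M=\conv(G\cup\{a\})$ of $P$ and then transport the pyramidal decomposition of $P$ back to $F$ by a vertex count. The remaining care is purely combinatorial-geometric bookkeeping used without comment above: for a convex set $C$ and $a\notin\aff(C)$ one has $\ver(\conv(C\cup\{a\}))=\ver(C)\cup\{a\}$, $C$ is a face of $\conv(C\cup\{a\})$, and $\dim\aff(C\cup\{a\})=\dim C+1$ --- each following quickly from the definitions (the face statement by an argument parallel to the proof of Lemma \ref{face-pres-lemma}).
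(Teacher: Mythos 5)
Your proposal is correct and follows essentially the same route as the paper's own proof: Step 1 combines Lemma \ref{unique-suitable-f} with Lemma \ref{main-technical-lemma} (a) and (b) to show $\Omega_A$ is uniformly pyramidal, and Step 2 uses Lemma \ref{face-pres-lemma} to lift a minus-face $G$ of $F$ to a minus-face of $\Omega_A$ and then a vertex comparison (the paper counts $n_e$, you compare vertex sets, which is the same argument) to show $F$ is uniformly pyramidal, closing the induction. The small variations — justifying $a_F\notin\aff(F)$ via linearity of $f$ rather than via (Fact \ref{face-cap-aff}), and stating the bookkeeping identity $\ver(\conv(C\cup\{a\}))=\ver(C)\cup\{a\}$ explicitly — are inessential.
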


\begin{proof}
We prove this theorem in two steps.
\begin{enumerate}[(i)]
\item In the first step, we show that the assumptions imply that the polytope $\Omega_A$ is uniformly pyramidal (see the definition above).
\item Then we show that a uniformly pyramidal polytope $\Omega_A$ must be a simplex, so $A$ is a classical theory.
\end{enumerate}
Now we prove each of the two steps.
\begin{enumerate}[(i)]

\item Let $F$ be a minus-face of $\Omega_A$ (which exists by (Fact \ref{polytope-has-facet})). By (Def. \ref{minus-face}), this means that $\dim F = \dim \Omega_A - 1$. We have proved in Lemma \ref{unique-suitable-f} that there is a unique pure effect $f \in E_A$ such that $F$ is the certain face of $f$, i.e. $F_f = F$ (Def. \ref{cert-face-def}). Let $T: A \rightarrow A$ be a transformation such that $f = u_A \circ T$ and $T(\omega) = \omega$ for every $\omega \in F_f$. By Lemma \ref{main-technical-lemma} (a), we have that
\begin{align*}
\dim F_f + \dim \overline F_f &= \dim \Omega_A - 1 + \dim \overline F_f \\
&\leq \dim \Omega_A - 1 \\
&\Rightarrow \dim \overline F_f \leq 0 \,.
\end{align*}

\begin{figure}
\centering

\begin{pspicture}[showgrid=false](-6,-2)(6,2)
\psset{linewidth=0.7\pslinewidth}
\psline[linewidth=3\pslinewidth](-1.5,-1)(1.5,-1)
\psline(-1.5,-1)(0,1)
\psline(0,1)(1.5,-1)
\psline[linestyle=dashed](-1.5,-1)(-4,-1)
\psline[linestyle=dashed](1.5,-1)(4,-1)
\uput[l](-4,-1){$\aff(F_f)$}
\uput[d](0,-1){$F = F_f$}
\psdots[dotsize=0.1](0,1)
\uput[u](0,1){$\overline F_f = \{a_F\}$}
\uput[d](0,0){$\Omega_A$}

\end{pspicture}

\caption{The geometrical entities discussed in step (i) of the proof: Here we see the case where $\Omega_A$ is two-dimensional. The set $F$ is a minus-face of the states $\Omega_A$. It is the certain face of a pure effect $f \in E_A$, i.e. $F_f = f$. The impossible face $\overline F_f$ of $f$ consists of a single point $a_F$. It holds that $\aff(F_f) \cap \overline F_f = \emptyset$, from which it follows that $\aff(F_f \cup \overline F_f) \cap \Omega_A = \Omega_A$ since $F_f$ is a minus-face of $\Omega_A$.}
\label{step-i-illustration}
\end{figure}
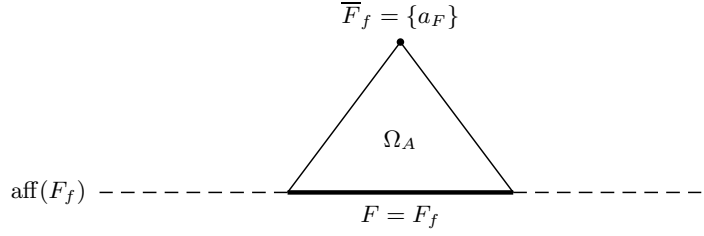

Thus, $\overline F_f$ must be a singleton or the empty set (Def. \ref{set-dim-def}). The latter is excluded since the pure effect $f$ is different from $u_A$ (because $F_{u_A} = \Omega_A \neq F$). Therefore, $\overline f$ (Def. \ref{compl-effect}) is nonzero and pure (Fact \ref{u-f-pure}) and thus $\overline F_f = F_{\overline f}$ is nonempty (Fact \ref{certain-indeed-face}). Thus, $\overline F_f = \{a_F\}$ for some $a_F \in \Omega_A$. Note that $\aff(F_f) \cap \overline F_f = \emptyset$ since $\aff(F_f) \cap \Omega_A = F_f$ and $F_f \cap \overline F_f = \emptyset$. Thus, $\aff(F_f \cup \overline F_f) \cap \Omega_A = \Omega_A$ since $\dim F_f = \dim \Omega_A - 1$ and $\overline F_f \subseteq \Omega_A \setminus \aff(F_f)$. Thus, by Lemma \ref{main-technical-lemma} (b), we have that
\begin{align*}
\Omega_A &=\aff(F_f \cup \overline F_f) \cap \Omega_A \\
&= \conv(F \cup \{a_F\}) \,.
\end{align*}
The point $a_F$ is affinely independent of $F$ since $\aff(F_f) \cap \overline F_f = \emptyset$, as we have already noticed.

\item Now we show that every uniformly pyramidal polytope $\Omega_A$ is a simplex. We prove this by induction over $\dim \Omega_A$. The base case $\dim \Omega_A = 0$ is trivial: A singleton is a simplex. The case $\dim \Omega_A = 1$ is equally easy: Every one-dimensional polytope is a line-segment, and a line-segment is a simplex.

Let $\Omega_A$ be a uniformly pyramidal polytope with $\dim \Omega_A = d \geq 2$. Assume that every $(d-1)$-dimensional uniformly pyramidal polytope is a simplex (induction hypothesis). Let $F \subseteq \Omega_A$ be a minus-face of $\Omega_A$. Since $\Omega_A$ is uniformly pyramidal, we have that $\Omega_A = \conv(F \cup \{a_F\})$ for some $a_F \in \Omega_A$ with $a_F \notin \aff(F)$. We want to show that $\Omega_A$ is a simplex. To this end, it is sufficient to show that $F$ is a simplex since $F$ being a simplex and $\Omega_A = \conv(F \cup \{a_F\})$ with $a_F \notin \aff(F)$ implies that $\Omega_A$ is a simplex (c.f. (Def. \ref{simplex-def})). We show that $F$ is uniformly pyramidal (by the induction hypothesis, this implies that $F$ is a simplex).

\begin{figure}[htb]
\centering

\begin{pspicture}[showgrid=false](-1.2,0.2)(1.3,3.4) 
\psset{viewpoint=10 10 25 rtp2xyz,Decran=11} 
\psset{solidmemory}
\psSolid[
object=new, linewidth=0.7\pslinewidth,
linecolor=gray,
fcol=0 (Gray),
name=B,
sommets=
0 0 3 
-0.707107 1.22474 1 
-0.707107 -1.22474 1 
1.41421 0 1, 
faces={
[1 2 3]
[1 2 0]
[2 3 0]
[3 1 0]
},
action=draw*]%
\psSolid[object=plan,definition=solidface,action=none,args=B 2,name=R0]
\psset{fontsize=20}
\psset{phi=90}
\psProjection[object=texte,text={H},plan=R0]%
\psSolid[object=plan,definition=solidface,action=none,args=B 3,name=R1]
\psset{fontsize=20}
\psset{phi=0}
\psProjection[object=texte,text={F},plan=R1]%
\psPoint(-0.707107, -1.22474, 1){p}
\uput[l](p){$a_F$}
\pstThreeDPut(1.41421, 1.6, 4.2){$G$}
\pstThreeDPut(0, 1.6, 1.4){$\Omega_A$}
\psSolid[object=line, linewidth=4\pslinewidth,args=1.41421 0 1 0 0 2.98]
\psSolid[object=line, linewidth=4\pslinewidth,args=1.41421 0.015 1 0 0.015 2.965]
\psPoint(-0.707107, 1.22474, 1){r}
\uput[r](r){$a_H$}
\psdots[dotsize=0.1](p)(r)
\end{pspicture}

\caption{Step (ii) of the proof: This figure visualizes the definitions in the proof that every uniformly pyramidal polytope $\Omega_A$ is a simplex.}
\label{simplex-proof-figure}
\end{figure}
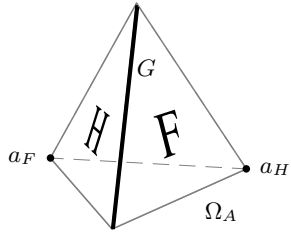

The set $F$ is a face of a polytope and therefore a polytope itself (Fact \ref{polytope-face-lemma}). Recalling (Def. \ref{number-ext-points}), we see that the equation $\Omega_A = \conv(F \cup \{a_F\})$ implies that
\begin{align}
n_e(\Omega_A) = n_e(F) + 1 \label{count-vertices-1} \,.
\end{align}
Let $G$ be a minus-face of $F$ (see Fig. \ref{simplex-proof-figure}). By Lemma \ref{face-pres-lemma}, $H := \conv(G \cup \{a_F\})$ is a face of $\conv(F \cup \{a_F\}) = \Omega_A$. The dimension of $H$ is given by $\dim H = \dim(\conv(G \cup \{a_F\})) = \dim G + 1 = \dim F = \dim \Omega_A - 1$, so $H$ is a minus-face of $\Omega_A$. Thus, since $\Omega_A$ is uniformly pyramidal, $\Omega_A = \conv(H \cup \{a_H\})$ for some $a_H \in \Omega_A$. This allows us to see that
\begin{align*}
n_e(G) &= n_e(\conv(G) \cup \{a_F\}) -1 \\
&= n_e(H) - 1 \\
&= n_e(\conv(H \cup \{a_H\})) - 2 \\
&= n_e(\Omega_A) - 2 \,,
\end{align*}
so by (\ref{count-vertices-1}),
\begin{align}
\label{count-vertices-2}
n_e(G) = n_e(F) - 1 \,.
\end{align}
Equation (\ref{count-vertices-2}) and the fact that $G$ is a minus-face of $F$ imply that
\begin{align}
F = \conv(G \cup \{b_G\}) \quad \text{for some } b_G \in F \,. \label{f-unif-pyr}
\end{align}
Since $G$ is an arbitrary minus-face of $F$, (\ref{f-unif-pyr}) implies that $F$ is uniformly pyramidal. By the induction hypothesis, it follows that $F$ is a simplex, so $\Omega_A = \conv(F \cup \{a_F\})$ is a simplex since $a_F$ is affinely independent of $F$. \qedhere

\end{enumerate}
\end{proof}

\clearpage
\section{Formal proof of the approximate case}
\label{approx-case-section}

In the following, we will give a formal prove of the approximate version of the result. ``Approximate'' means that this version shows that discrete (polytopic) non-classical theories are ruled out even if the postulate is weakened to an approximate version. Therefore, the result presented here is \emph{stronger} than the original version of the result. The statement of the approximate version is more difficult to read than the original version. It reads:
\begin{quote}
Let $A$ be a polytopic non-classical theory (Def. \ref{polytopic-theory}) and (Def. \ref{classical-theory}) and let $\Vert \cdot \Vert_A$ be any norm on $A$. Then there is a pure effect $f \in E_A$ and a positive number $\epsilon$ with the following property: For every transformation $T: A \rightarrow A$ with $f = u_A \circ T$, there is a state $\rho \in \Omega_A$ with $f(\rho) = 1$ and $\Vert T(\rho) - \rho \Vert_A \geq \epsilon$.
\end{quote}
This statement has the form of a contrapositive of the original version. To see why this is a stronger version of the main result, the reader is invited to convince himself that if $\epsilon$ is chosen to be just zero (instead of being positive) and the distance $\Vert T(\rho) - \rho \Vert_A \geq \epsilon$ is replaced by $\Vert T(\rho) - \rho \Vert_A > 0$, then this statement boils down to the original version of the result. The statement above (with positive $\epsilon$) means that a polytopic non-classical theory cannot even satisfy our postulate ``up to $\epsilon$''.

As before, we organise the proof in a concise way by first listing all the definitions that we need in Section \ref{not-and-def-b} and by listing all the facts that we will use in the proof in Section \ref{known-facts-b}. This time, the lists are shorter since we do not repeat definitions and facts of Appendix \ref{main-result-appendix}. In Section \ref{technical-lemmas-b}, we give an overview over the idea behind the proof and prove all the lemmas that we will need. Finally, we give the proof of the approximate version in Section \ref{approx-proof-section}.

\subsection{Notation and Definitions}
\label{not-and-def-b}

\begin{enumerate}[(Def. 1)]
\setcounter{enumi}{\letzterwert3}

\item For a normed space $A$ and a closed subset $C \subseteq A$, we define
\begin{align*}
\left.\begin{array}{cccl}d( \, \cdot \, , C): & A & \rightarrow & \mathbb{R} \\ & x & \mapsto & d(x, C) := \inf\limits_{y \in C} \Vert y - x \Vert\end{array}\right.
\end{align*}
This map has the property that for all $x \in A$, it holds that $d(x, C) \geq 0$ with equality if and only if $x \in C$.
\label{set-distance-def}

\item For a vector space $A$, let $\End(A)$ denote the space of endomorphisms on $A$, i.e. the vector space of all linear maps from $A$ to itself.

\item For an abstract state space $(A, A_+, u_A)$ and an effect $f \in E_A$, we define the set of all transformations that induce the effect $f$ as
\begin{align*}
\mathcal{T}_f := \{ T \in \End(A) \mid T \text{ is positive, } u_A \circ T = f \} \,.
\end{align*}
Note that since $T \in \End(A)$ implies linearity of $T$ and $u_A \circ T = f$ implies $u_A(T(\omega)) \leq 1$ for all $\omega \in \Omega_A$, the elements of $\mathcal{T}_f$ are precisely the transformations (Def. \ref{trafo-def}) $T: A \rightarrow A$ with $u_A \circ T = f$. \label{t-f-set-def}

If $\Vert \cdot \Vert_A$ is a norm on $A$, then the operator norm $\Vert \cdot \Vert_{\End(A)}$ induces a metric $d_{\mathcal{T}_f}(S, T) := \Vert S - T \Vert_{\End(A)}$ on $\mathcal{T}_f$, which turns $\mathcal{T}_f$ into a metric space.
\label{t-f-ind}

\item For an abstract state space $A$, an effect $f \in E_A$ and a norm $\Vert \cdot \Vert_A$ on $A$, we define the \emph{disturbance function}
\begin{align*}
\left.\begin{array}{cccl}D_f: & \mathcal{T}_f & \rightarrow & \mathbb{R} \\ & T & \mapsto & \max\limits_{\omega \in F_f} \Vert T(\omega) - \omega \Vert_A\end{array}\right.
\end{align*}
For every transformation $T: A \rightarrow A$ which induces the effect $f$ (i.e. $u_A \circ T = f$), the disturbance function evaluates the maximal disturbance on the certain face $F_f$ of $f$ (Def. \ref{cert-face-def}) caused by the transformation. \label{disturb-f}

\end{enumerate}

\subsection{Known facts}
\label{known-facts-b}

\begin{enumerate}[(F{a}ct 1)]
\setcounter{enumi}{\lastfactvaluestrichlein}

\item For any norm-induced topology on a finite-dimensional vector space $A$, a polytope $P \subseteq A$ is compact. \label{polytope-compact}

\item On a finite-dimensional vector space $A$, any two norms $\Vert \cdot \Vert_A$ and $\Vert \cdot \Vert'_A$ are equivalent, i.e. there are positive constants $c_1, c_2$ such that $c_1 \Vert v \Vert_A \leq \Vert v \Vert'_A \leq c_2 \Vert v \Vert_A$ for all $v \in A$.
\label{norm-equivalence}

\item (Heine-Borel Theorem) In a finite-dimensional normed space $A$, a subset $S \subseteq A$ is compact if and only if $S$ is closed and bounded. \label{heine-borel}

\item In a normed vector space $A$, the closure $\overline S$ of a subset $S \subseteq A$ coincides with the set of all limits of sequences in $S$ that converge in $A$. Thus, a subset $S \subseteq A$ is closed if (and only if) every sequence in $S$ that converges in $A$ has its limit in $S$. \label{closure-limits}

\item If $f: X \rightarrow Y$ is a linear map between between finite-dimensional normed spaces, then for every convergent sequence $x_n \rightarrow x \in X$, it holds that $f(x_n) \rightarrow f(x) \in Y$. \label{sequent-cont}

\end{enumerate}

\subsection{Technical lemmas}
\label{technical-lemmas-b}

In this section, we will prove four technical lemmas that will allow us to prove Theorem \ref{approx-case-thm}. To see where things are going, we first present a sketch of the organisation of the proof. As a corollary of Theorem \ref{main-result-thm} that we have proved in Appendix \ref{main-result-appendix}, we will first show in Lemma \ref{non-pres-f} that for every non-classical polytopic theory, there is a pure effect $f$ such that its certain face $F_f$ (Def. \ref{cert-face-def}) is a minus-face of the states and such that there is no transformation that induces $f$ which satisfies our postulate precisely. We start the proof of Theorem \ref{approx-case-thm} by considering a non-classical polytopic theory. We apply Lemma \ref{non-pres-f} which allows us to consider a pure effect $f$ with the mentioned properties.

For the actual proof of the theorem, we then make a distinction of cases (see Fig. \ref{lemma-structure-b}). We first prove case (i) where we assume that the dimension of the impossible face $\overline F_f$ of $f$ is zero-dimensional (in other words, consists of a single point). It is practical to consider this case separately since in this case, we can make a proof that considers a linear map $L$ (with certain properties) which does not exist if $\overline F_f$ is higher-dimensional.

The other case (ii) is the case where $\dim \overline F_f \geq 1$. Lemma \ref{t-f-dim-lemma} will help us to show that in this case, any transformation $T$ that induces $f$ (i.e. $T \in \mathcal{T}_f$, (Def. \ref{t-f-set-def})) must map $F_f$ to a set $T(F_f)$ of lower dimension than $F_f$. This implies that for every $T \in \mathcal{T}_f$, there is a state $\omega$ such that $T(\omega) \neq \omega$ and therefore $\Vert T(\omega) - \omega \Vert_A > 0$. This will show that the disturbance function $D_f$ is a positive function on $\mathcal{T}_f$. At this point, we will be left to show that $D_f$ is lower-bounded by a positive number $\epsilon$. We will show this by showing that $\mathcal{T}_f$ is compact (Lemma \ref{compactness-lemma}) and that $D_f$ is a continuous map on $\mathcal{T}_f$ (Lemma \ref{continuity-lemma}).

\begin{figure}[htb]
\centering

\begin{pspicture}[showgrid=false](-1,1)(9,5)
\psset{linewidth=0.7\pslinewidth, nodesep=3pt}
\psnode(0.5,4.5){oldthm}{\psframebox[linecolor=gray]{\textcolor{gray}{Theorem \ref{main-result-thm}}}}
\psnode(4,4.5){bl1}{\psframebox{Lemma \ref{non-pres-f}}}
\pnode(4,3.2){split}{}
\psnode(4,1.6){newthm}{\psframebox{\qquad \quad Theorem \ref{approx-case-thm} \qquad \quad}}
\psnode(3.5,2.5){newthmi}{(i)}
\psnode(4.5,2.5){newthmii}{(ii)}
\psnode(8,4){bl2}{\psframebox{Lemma \ref{t-f-dim-lemma}}}
\psnode(8,3){bl3}{\psframebox{Lemma \ref{compactness-lemma}}}
\psnode(8,2){bl4}{\psframebox{Lemma \ref{continuity-lemma}}}
\pnode(6,2.5){merge}{}
\ncline[linecolor=gray]{->}{oldthm}{bl1}
\ncline[nodesepB=0]{-}{bl1}{split}
\nlput[offset=-7pt](bl1)(split){1cm}{$f$}
\ncline[nodesepA=0]{->}{split}{newthmi}
\ncline[nodesepA=0]{->}{split}{newthmii}
\ncline{->}{bl2}{newthmii}
\ncline[nodesepB=0]{-}{bl3}{merge}
\ncline[nodesepB=0]{-}{bl4}{merge}
\ncline[nodesepA=0]{->}{merge}{newthmii}
\rput[tr](3.6,2.3){$\dim \overline F_f = 0$}
\rput[tl](4.3,2.3){$\dim \overline F_f \geq 1$}
\end{pspicture}

\caption{Organization of the proof of the approximate case: This diagram shows how the proof of Theorem \ref{approx-case-thm} is subdivided into several Lemmas.}
\label{lemma-structure-b}
\end{figure}
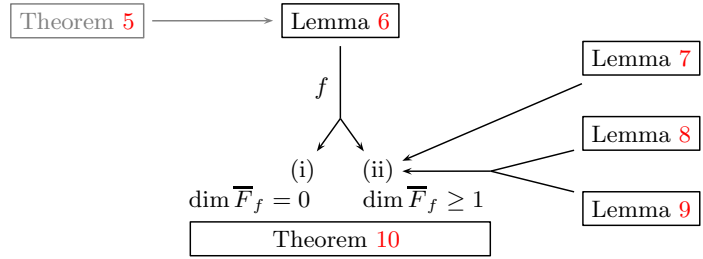

\begin{lemma}
\label{non-pres-f}
Let $A$ be a polytopic non-classical theory. Then there is a pure effect $f \in E_A$ such that the certain face $F_f$ is a minus-face of $\Omega_A$ and such that there is no transformation $L: A \rightarrow A$ with $f = u_A \circ L$ and $L(\omega) = \omega$ for all $\omega \in F_f$.
\end{lemma}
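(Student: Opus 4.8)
The plan is to observe that this lemma is nothing but the contrapositive of Theorem \ref{main-result-thm}, so no genuinely new argument is required. Theorem \ref{main-result-thm} asserts that a polytopic theory $A$ in which \emph{every} pure effect $f$ whose certain face $F_f$ is a minus-face of $\Omega_A$ admits a transformation $T: A \rightarrow A$ with $u_A \circ T = f$ and $T(\omega) = \omega$ for all $\omega \in F_f$ must be classical. Negating both sides: if $A$ is a polytopic theory which is \emph{not} classical, then it cannot be the case that every such pure effect admits such a transformation. Hence there is at least one pure effect $f \in E_A$ with $F_f$ a minus-face of $\Omega_A$ for which no transformation $L: A \rightarrow A$ with $f = u_A \circ L$ and $L(\omega) = \omega$ for all $\omega \in F_f$ exists. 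This is precisely the assertion of Lemma \ref{non-pres-f}, with $L$ playing the role of the (non-existent) measurement-transformation.

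To make the argument fully self-contained I would also record why the universally quantified statement being negated ranges over a non-empty family, so that the negation is substantive. By (Fact \ref{polytope-has-facet}) the polytope $\Omega_A$ has at least one minus-face $F$, and by Lemma \ref{unique-suitable-f} there is a (unique) pure effect $f \in E_A$ with $F_f = F$; thus the set of pure effects whose certain face is a minus-face of $\Omega_A$ is non-empty. (This is also a sanity check on Theorem \ref{main-result-thm} itself: were this family empty, its hypothesis would be vacuously satisfied, forcing every polytopic theory to be classical.)

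There is no real obstacle here; the only point requiring care is to match the logical form of Theorem \ref{main-result-thm} exactly. In particular, the theorem is already stated in terms of the \emph{weakened} postulate restricted to pure effects with minus-face certain faces, so its contrapositive lands exactly on the statement of Lemma \ref{non-pres-f} with no loss. An alternative would be to reprove the lemma directly: fix a minus-face $F$ of $\Omega_A$, take the associated pure effect $f$ from Lemma \ref{unique-suitable-f}, and run the two-step argument of Theorem \ref{main-result-thm} (via Lemma \ref{main-technical-lemma}) to conclude that if every such $f$ admitted an invariant transformation then $\Omega_A$ would be uniformly pyramidal and therefore a simplex, contradicting the non-classicality of $A$. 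Since this merely re-derives Theorem \ref{main-result-thm}, invoking its contrapositive directly is the cleaner route, and that is the proof I would write.
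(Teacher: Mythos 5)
Your proposal is correct and follows essentially the same route as the paper: the paper's proof likewise enumerates the minus-faces of $\Omega_A$ (non-empty by (Fact \ref{polytope-has-facet})), attaches to each its unique pure effect via Lemma \ref{unique-suitable-f}, and then invokes the contraposition of Theorem \ref{main-result-thm} to extract one such effect admitting no invariant transformation. Your added remark on non-vacuity is exactly the role Lemma \ref{unique-suitable-f} plays in the paper's argument, so nothing is missing.
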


\begin{proof}
Let $\{ F_i \}_{i \in I}$ be the set of minus-faces of $\Omega_A$. By Lemma \ref{unique-suitable-f}, for every $i \in I$, there is a unique pure effect $f_i \in E_A$ such that $F_{f_i} = F_i$. Since $A$ is polytopic but non-classical, we can apply the contraposition of Theorem \ref{main-result-thm} to see that there must be a $k \in I$ such that there is no transformation $L: A \rightarrow A$ with $f_k = u_A \circ L$ and $L(\omega) = \omega$ for all $\omega \in F_{f_k}$. Thus, $f := f_k$ is the effect we were looking for.
\end{proof}

\begin{lemma}
\label{t-f-dim-lemma}
Let $A$ be a non-trivial abstract state space (i.e. $\dim A > 1$). Let $f \in E_A$ be a pure effect. Let $T: A \rightarrow A$ be a transformation such that $f = u_A \circ T$. Then, for the certain face $F_f$ of $f$ (Def. \ref{cert-face-def}), we have
\begin{align*}
\dim(T(F_f)) \leq \dim A - \dim \overline F_f - 2 \,,
\end{align*}
where $\overline F_f$ is the impossible face of $f$ (Def. \ref{cert-face-def}) (Recall the convention $\dim (\emptyset) = -1$ (Def. \ref{set-dim-def})).
\end{lemma}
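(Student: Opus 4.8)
The plan is to extract from the single identity $u_A \circ T = f$ two separate dimension deficits for $T$ --- one forced on $\ker T$ by the impossible face $\overline F_f$, and one forced on $\im T$ by the certain face $F_f$ --- and then to add them up.

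First I would dispose of the degenerate possibility $f = 0$: since $A_+$ is generating and $u_A$ is strictly positive, every nonzero element of $A_+$ is a positive rescaling of a normalized state, whence $\spa(\Omega_A) = A$ and $\dim A = \dim\Omega_A + 1$; thus $f$ vanishing on $\Omega_A$ forces $f = 0$ on all of $A$, i.e. $f = 0 \iff F_f = \emptyset \iff \overline F_f = \Omega_A$, and in that case the claimed inequality reads $-1 = \dim(\emptyset) \le \dim A - \dim\Omega_A - 2 = -1$. So from now on assume $f \ne 0$; then $F_f$ is a nonempty face of $\Omega_A$ by (Fact \ref{certain-indeed-face}). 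For the kernel bound: if $\omega \in \overline F_f$ then $T(\omega) \in A_+$ by positivity of $T$ and $u_A(T(\omega)) = f(\omega) = 0$, so strict positivity of $u_A$ gives $T(\omega) = 0$; by linearity $\spa(\overline F_f) \subseteq \ker T$. Since $\overline F_f \subseteq \Omega_A$ implies $0 \notin \aff(\overline F_f)$ (Fact \ref{0-notin-aff}), (Fact \ref{dim-span-plus-one}) yields $\dim\spa(\overline F_f) = \dim\overline F_f + 1$ (in the case $\overline F_f = \emptyset$ this just reads $\dim\spa(\emptyset) = \dim\{0\} = 0 = \dim\overline F_f + 1$, which is consistent), hence $\dim\ker T \ge \dim\overline F_f + 1$, and the rank--nullity theorem gives $\dim\im T \le \dim A - \dim\overline F_f - 1$.

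For the image bound, I would use that $u_A(T(\omega)) = f(\omega) = 1$ for every $\omega \in F_f$, so $T(F_f)$ lies inside the affine hyperplane $\{x \in A \mid u_A(x) = 1\}$, which does not contain $0$; hence $0 \notin \aff(T(F_f))$ and (Fact \ref{dim-span-plus-one}) gives $\dim\spa(T(F_f)) = \dim T(F_f) + 1$. Since $\spa(T(F_f)) \subseteq \im T$ (the image of a linear map being a subspace), combining the two bounds yields $\dim T(F_f) + 1 \le \dim A - \dim\overline F_f - 1$, which is exactly the assertion. I do not expect a genuine obstacle here: once the two hyperplane constraints are isolated the argument is purely a dimension count, and the only point that demands care is consistent bookkeeping of the convention $\dim(\emptyset) = -1$ so that the boundary cases $\overline F_f = \emptyset$ and $F_f = \emptyset$ slot into precisely the same chain of inequalities.
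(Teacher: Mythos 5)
Your proof is correct and follows essentially the same route as the paper's: you deduce $T(\overline F_f)=\{0\}$ from $u_A\circ T=f$ together with positivity and strict positivity of $u_A$, obtain $\dim\im T\le\dim A-\dim\overline F_f-1$ via the span of $\overline F_f$, and then gain the final dimension drop from the fact that $T(F_f)$ lies in the hyperplane $\{u_A=1\}$, which avoids $0$. The only (cosmetic) difference is that you bound $\dim\spa(T(F_f))\le\dim\im T$ using (Fact \ref{dim-span-plus-one}) directly, whereas the paper intersects $\im T$ with $\aff(\Omega_A)$ and uses $0\in\im T$, $0\notin\aff(\Omega_A)$; your explicit treatment of the degenerate case $f=0$ is a harmless extra.
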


\begin{proof}
The impossible face $\overline F_f$ is the subset of $\Omega_A$ where $f$ vanishes, $\overline F_f = \{ \omega \in \Omega_A \mid f(\omega) = 0 \}$. Thus, the assumption that $f = u_A \circ T$ means that $f(\omega) = 0$ for all $\omega \in \overline F_f$ implies $T(\overline F_f) = \{0\}$ (since the zero-vector is the only element of $\Omega_A^{\leq 1}$ with normalization equal to zero). By linearity of $T$, this implies $T(\spa(\overline F_f)) = \{0\}$, so $\ker(T) \supseteq \spa(\overline F_f)$ and therefore $\dim(\ker(T)) \geq \dim(\spa(\overline F_f))$. (Fact \ref{0-notin-aff}) and (Fact \ref{dim-span-plus-one}) imply that $\dim(\spa(\overline F_f)) = \dim(\overline F_f) + 1$ and thus $\dim(\ker(T)) \geq \dim(\overline F_f) + 1$. Therefore,
\begin{align}
\label{dim-im-t-a-1}
\dim(\im(T)) \leq \dim A - \dim(\overline F_f) - 1 \,.
\end{align}
On the other hand, the condition that $f(\omega) = u_A(T(\omega)) = 1$ for all $\omega \in F_f$ implies that $T(F_f) \subseteq \Omega_A$ (rather than just $T(F_f) \subseteq \Omega_A^{\leq 1}$). Therefore,
\begin{align}
\label{t-f-subset-cap}
T(F_f) \subseteq \im(T) \cap \Omega_A \subseteq \im(T) \cap \aff(\Omega_A) \,.
\end{align}
In an abstract state space, we have that
\begin{align}
\label{dim-om-a-1}
\dim(\aff(\Omega_A)) = \dim A - 1 \,.
\end{align}
Moreover,
\begin{align}
\label{0-in-im-t}
0 \in \im(T) \,, \quad \text{but} \quad 0 \notin \aff(\Omega_A) \quad \text{by (Fact \ref{0-notin-aff})} \,.
\end{align}
We can combine (\ref{dim-im-t-a-1}), (\ref{dim-om-a-1}) and (\ref{0-in-im-t}) to see that
\begin{align*}
\dim(\im(T) \cap \aff(\Omega_A)) \leq \dim A - \dim(\overline F_f) - 2 \,.
\end{align*}
Thus, by (\ref{t-f-subset-cap}),
\begin{align*}
\dim(T(F_f)) \leq \dim A - \dim(\overline F_f) - 2
\end{align*}
as claimed.
\end{proof}

\begin{lemma}
\label{compactness-lemma}
Let $A$ be an abstract state space, let $f \in E_A$ be an effect. Then, for any norm $\Vert \cdot \Vert_A$ on $A$, the space $(\mathcal{T}_f, d_{\mathcal{T}_f})$ (Def. \ref{t-f-ind}) is compact.
\end{lemma}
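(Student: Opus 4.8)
The plan is to reduce the statement to the Heine--Borel theorem (Fact \ref{heine-borel}): since $\mathcal{T}_f \subseteq \End(A)$ and $\End(A)$ is a finite-dimensional normed space when equipped with the operator norm $\Vert \cdot \Vert_{\End(A)}$ induced by $\Vert \cdot \Vert_A$, it suffices to show that $\mathcal{T}_f$ is closed and bounded. Note that the statement does not require $\mathcal{T}_f \neq \emptyset$, so the empty case is trivially covered.

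For closedness I would take a sequence $T_n \in \mathcal{T}_f$ with $T_n \to T$ in $\End(A)$ and verify $T \in \mathcal{T}_f$ using Fact \ref{closure-limits}. The linear constraint $u_A \circ T_n = f$ passes to the limit because $S \mapsto u_A \circ S$ is a linear map on the finite-dimensional space $\End(A)$, hence sequentially continuous by Fact \ref{sequent-cont}. Positivity passes to the limit because $A_+$ is closed: for each $x \in A_+$ we have $T_n(x) \in A_+$ and $T_n(x) \to T(x)$ (again Fact \ref{sequent-cont}), so $T(x) \in \overline{A_+} = A_+$. Hence $T \in \mathcal{T}_f$.

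The substantive step is boundedness. Since $A_+$ is generating and $u_A$ is strictly positive on $A_+$, every nonzero element of $A_+$ is a positive multiple of an element of $\Omega_A$; consequently $\spa(\Omega_A) = A$, and we may fix a basis $\omega_1, \ldots, \omega_d$ of $A$ with each $\omega_i \in \Omega_A$, where $d = \dim A$. For any $T \in \mathcal{T}_f$, positivity gives $T(\omega_i) \in A_+$, and $u_A(T(\omega_i)) = f(\omega_i) \le 1$, so $T(\omega_i) \in \Omega_A^{\le 1}$. The set $\Omega_A^{\le 1}$ is bounded --- it is in fact compact, either by Fact \ref{polytope-compact} together with $\Omega_A^{\le 1} = \conv(\Omega_A \cup \{0\})$, or directly because $u_A$ is bounded below by a positive constant on the (compact) unit sphere of the closed cone $A_+$. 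Hence there is a constant $M$, depending only on the abstract state space and the chosen norm, with $\Vert T(\omega_i) \Vert_A \le M$ for all $i$. Expanding an arbitrary $v \in A$ in the fixed basis as $v = \sum_{i=1}^d c_i(v)\,\omega_i$, the coordinate functionals $c_i$ are linear on the finite-dimensional space $A$ and hence bounded, so $|c_i(v)| \le K \Vert v \Vert_A$ for some constant $K$. Then $\Vert T(v) \Vert_A \le \sum_{i=1}^d |c_i(v)|\,\Vert T(\omega_i) \Vert_A \le d\,K\,M\,\Vert v \Vert_A$, so $\Vert T \Vert_{\End(A)} \le dKM$ uniformly over all $T \in \mathcal{T}_f$.

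Being closed and bounded in a finite-dimensional normed space, $\mathcal{T}_f$ is compact by Heine--Borel (Fact \ref{heine-borel}), which completes the proof. The only delicate point is the uniform operator-norm bound; the closedness argument and the reduction to Heine--Borel are routine.
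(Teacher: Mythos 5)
Your proof is correct, and its skeleton matches the paper's: both reduce the claim via Heine--Borel (Fact \ref{heine-borel}) to closedness and boundedness of $\mathcal{T}_f$ in the finite-dimensional space $\End(A)$, and your closedness argument (sequential continuity of $S \mapsto S(\omega)$ and $S \mapsto u_A \circ S$, plus closedness of $A_+$) is essentially identical to the paper's. The genuine difference is in the boundedness step, which you rightly identify as the substantive part. The paper appeals to equivalence of norms on $\End(A)$ and bounds the auxiliary quantity $\Vert T \Vert_{u_A} := \sup_{\omega \in \Omega_A} \vert u_A(T(\omega)) \vert \leq 1$; but on all of $\End(A)$ this is only a seminorm (it vanishes on any map whose range lies in $\ker u_A$; spanning of $\Omega_A$ only yields $u_A \circ T = 0$, not $T = 0$), so the norm-equivalence appeal really needs positivity to be folded in. Your argument does exactly that explicitly: positivity of $T$ together with $f \in E_A$ gives $T(\omega_i) \in \Omega_A^{\leq 1}$ for a basis $\omega_1, \ldots, \omega_d$ chosen inside $\Omega_A$, boundedness of $\Omega_A^{\leq 1}$ gives a uniform bound on the images, and the coordinate functionals convert this into a uniform operator-norm bound. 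This is more elementary and self-contained, at the cost of a slightly longer write-up. One small caveat: your first justification for boundedness of $\Omega_A^{\leq 1}$ (via Fact \ref{polytope-compact} and $\Omega_A^{\leq 1} = \conv(\Omega_A \cup \{0\})$) only applies to polytopic theories, whereas the lemma concerns arbitrary abstract state spaces; your alternative direct argument (the minimum of $u_A$ on the intersection of the unit sphere with the closed cone $A_+$ is positive, since that set is compact and $u_A$ is strictly positive on $A_+ \setminus \{0\}$) covers the general case, so there is no gap.
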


\begin{proof}
Since $\End(A)$ is a finite-dimensional vector space, it is sufficient to show that $\mathcal{T}_f$ is closed and bounded (Fact \ref{heine-borel}).
\begin{itemize}
\item Closedness: Let $(T_n)_n$ be a sequence in $\mathcal{T}_f$ that converges in $\End(A)$, i.e. $T_n \rightarrow T \in \End(A)$. Closedness of $\mathcal{T}_f$ can be shown by showing that the limit $T$ is an element of $\mathcal{T}_f$ (Fact \ref{closure-limits}), in other words by showing that $T$ is positive and that $u_A \circ T = f$.
\begin{itemize}
\item Positivity: Let $\omega \in A_+$. The map $T_n \mapsto T_n(\omega)$ is a linear map from $\End(A)$ to $A$. Thus, since $(T_n)_n \subseteq \mathcal{T}_f$ is convergent, the sequence $(T_n(\omega))_n \subseteq A_+$ is convergent as well, and the limit of $(T_n(\omega))_n$ coincides with $T(\omega)$ (Fact \ref{sequent-cont}). By the definition of an abstract state space (Def. \ref{ass-def}), $A_+$ is closed, so the limit $T(\omega)$ of $(T_n(\omega))_n$ is an element of $A_+$ (Fact \ref{closure-limits}), so $T$ is positive.
\item $u_A \circ T = f$: Note that the sequence $(u_A \circ T_n)_n$ in $A^*$ is constantly equal to $f$ and thus $u_A \circ T_n \rightarrow f$. On the other hand, the map $T_n \mapsto u_A \circ T_n$ is a linear map from $\End(A)$ to $A^*$, so $u_A \circ T_n \rightarrow u_A \circ T$ (Fact \ref{sequent-cont}) and thus $u_A \circ T = f$.
\end{itemize}
We have shown that $T$ is positive and that $u_A \circ T = f$. Therefore, $T_n \rightarrow T \in \mathcal{T}_f$, so $\mathcal{T}_f$ is closed.
\item Boundedness: Since any two norms on $\End(A)$ are equivalent (Fact \ref{norm-equivalence}), it is sufficient to show the boundedness of $\mathcal{T}_F$ for a particular choice of a norm on $\End(A)$. Choose the norm $\Vert T \Vert_{u_A} = \sup_{\omega \in \Omega_A} \vert u_A(T(\omega))\vert$. It is easily verified that this indeed gives a norm on $\End(A)$ (for positive definiteness, make use of the fact that $\spa(\Omega_A) = A$ since $A_+$ is generating). For all $T \in \mathcal{T}_f$, it holds that $u_A \circ T = f$ and thus $\Vert T \Vert_{u_A} = \sup_{\omega \in \Omega_A} \vert f(\omega) \vert \leq 1$, so $\mathcal{T}_f$ is bounded.
\end{itemize}
We have shown that $\mathcal{T}_f$ is closed and bounded, so by (Fact \ref{heine-borel}), $\mathcal{T}_f$ is compact.
\end{proof}

\begin{lemma}
\label{continuity-lemma}
For an abstract state space $A$ and an effect $f \in E_A$, it holds that for any norm $\Vert \cdot \Vert_A$ on $A$, the disturbance function $D_f$ (Def. \ref{disturb-f})
\begin{align*}
\left.\begin{array}{cccl}D_f: & \mathcal{T}_f & \rightarrow & \mathbb{R} \\ & T & \mapsto & \max\limits_{\omega \in F_f} \Vert T(\omega) - \omega \Vert_A\end{array}\right.
\end{align*}
is a continuous function on $\mathcal{T}_f$ with respect to the operator norm $\Vert \cdot \Vert_{\End(A)}$, c.f. (Def. \ref{t-f-ind}).
\end{lemma}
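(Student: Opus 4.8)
The plan is to prove that $D_f$ is in fact Lipschitz continuous on $\mathcal{T}_f$, which of course implies continuity. Two preliminary observations make this work. First, $F_f = \{\omega \in \Omega_A \mid f(\omega) = 1\}$ is a closed subset of the compact set $\Omega_A$ (in the polytopic case it is even a face of a polytope, hence a polytope by (Fact \ref{polytope-face-lemma}) and compact by (Fact \ref{polytope-compact})), so it is compact; consequently the maximum defining $D_f(T)$ is always attained, and the number $M := \max_{\omega \in F_f} \Vert \omega \Vert_A$ is finite. Second, for any two bounded real-valued functions $g, h$ on a common domain one has $\lvert \max g - \max h \rvert \leq \max \lvert g - h \rvert$ (if $\omega^\ast$ attains $\max g$, then $\max g = g(\omega^\ast) \leq h(\omega^\ast) + \lvert g(\omega^\ast) - h(\omega^\ast) \rvert \leq \max h + \max\lvert g - h\rvert$, and symmetrically).

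Applying the second observation on the domain $F_f$ with $g(\omega) = \Vert T(\omega) - \omega \Vert_A$ and $h(\omega) = \Vert S(\omega) - \omega \Vert_A$ gives, for all $S, T \in \mathcal{T}_f$,
\begin{align*}
\lvert D_f(T) - D_f(S) \rvert \leq \max_{\omega \in F_f} \bigl\lvert \Vert T(\omega) - \omega \Vert_A - \Vert S(\omega) - \omega \Vert_A \bigr\rvert \,.
\end{align*}
Then I would bound the integrand pointwise by the reverse triangle inequality and the defining property of the operator norm: for each $\omega \in F_f$,
\begin{align*}
\bigl\lvert \Vert T(\omega) - \omega \Vert_A - \Vert S(\omega) - \omega \Vert_A \bigr\rvert \leq \Vert T(\omega) - S(\omega) \Vert_A = \Vert (T - S)(\omega) \Vert_A \leq \Vert T - S \Vert_{\End(A)} \, \Vert \omega \Vert_A \leq M \, \Vert T - S \Vert_{\End(A)} \,.
\end{align*}
Combining the two displays yields $\lvert D_f(T) - D_f(S) \rvert \leq M \, \Vert T - S \Vert_{\End(A)}$ for all $S, T \in \mathcal{T}_f$, so $D_f$ is Lipschitz with constant $M$ and in particular continuous, as claimed.

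There is no real obstacle in this argument; the only points deserving a line of justification are the compactness of $F_f$ (needed so that the maximum in the definition of $D_f$ exists and $M$ is finite) and the elementary inequality bounding the difference of two maxima by the maximum of the difference. Both are routine, and the operator-norm estimate is just the standard submultiplicativity bound $\Vert (T-S)(\omega) \Vert_A \leq \Vert T-S \Vert_{\End(A)} \Vert \omega \Vert_A$.
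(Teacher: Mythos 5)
Your proof is correct and takes essentially the same route as the paper: both establish that $D_f$ is Lipschitz with constant $\max_{\omega \in F_f} \Vert \omega \Vert_A$ by bounding the difference of the two maxima by the maximum of the pointwise difference, then applying the reverse triangle inequality and the operator-norm estimate $\Vert (T-S)(\omega)\Vert_A \leq \Vert T-S\Vert_{\End(A)} \Vert \omega \Vert_A$. Your explicit justification of the max-difference inequality and of the attainment and finiteness of the maximum via compactness of $F_f$ is, if anything, slightly more careful than the paper's chain of estimates.
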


\begin{proof}
This is easily calculated. Let $I_A$ be the identity operator on $A$. For any $T, S \in \mathcal{T}_f$, we have that
\begin{align*}
\vert D_f(T) - D_f(S) \vert &= \left\vert \left( \max\limits_{\omega \in F_f} \Vert T(\omega) - \omega \Vert_A \right) - \left( \max\limits_{\sigma \in F_f} \Vert S(\sigma) - \sigma \Vert_A \right) \right\vert \\
&= \left\vert \left( \max\limits_{\omega \in F_f} \Vert (T - I_A)\omega \Vert_A \right) - \left( \max\limits_{\sigma \in F_f} \Vert (S - I_A)\sigma \Vert_A \right) \right\vert \\
&\leq \left\vert \max\limits_{\omega \in F_f} \bigg( \Vert (T - I_A)\omega \Vert_A - \Vert (S - I_A)\omega \Vert_A \bigg) \right\vert \\
&\leq \max\limits_{\omega \in F_f} \bigg\vert \Vert (T - I_A)\omega \Vert_A - \Vert (S - I_A)\omega \Vert_A \bigg\vert \\
&\leq \max\limits_{\omega \in F_f} \bigg\vert \Vert (T - I_A)\omega - (S - I_A)\omega \Vert_A \bigg\vert \\
&\leq \max\limits_{\omega \in F_f} \Vert (T - S)\omega \Vert_A \\
&\leq \underbrace{\left( \max\limits_{\omega \in F_f} \Vert \omega \Vert_A \right)}_{\text{const.}} \Vert T - S \Vert_{\End(A)} \,,
\end{align*}
so $D_f$ is continuous.
\end{proof}

\subsection{Proof of the theorem}
\label{approx-proof-section}

\begin{thm}
\label{approx-case-thm}
Let $A$ be a polytopic non-classical theory (Def. \ref{polytopic-theory}) and (Def. \ref{classical-theory}) and let $\Vert \cdot \Vert_A$ be any norm on $A$. Then there is a pure effect $f \in E_A$ and a positive number $\epsilon$ with the following property: For every transformation $T: A \rightarrow A$ with $f = u_A \circ T$, there is a state $\rho \in \Omega_A$ with $f(\rho) = 1$ and $\Vert T(\rho) - \rho \Vert_A \geq \epsilon$.
\end{thm}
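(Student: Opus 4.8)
\emph{Overall approach.} The statement is just the contrapositive of Theorem~\ref{main-result-thm} promoted from an exact ``no transformation fixes the definite-outcome face'' to a quantitative ``every transformation disturbs that face by at least $\epsilon$, uniformly''. So the plan is: first pin down the right effect $f$ via Lemma~\ref{non-pres-f}, then show the disturbance functional $D_f$ of (Def.~\ref{disturb-f}) is strictly positive everywhere on $\mathcal{T}_f$, and finally use a compactness argument to extract a uniform positive lower bound. Concretely, I would begin by invoking Lemma~\ref{non-pres-f}: since $A$ is polytopic and non-classical, there is a pure effect $f\in E_A$ whose certain face $F_f$ is a minus-face of $\Omega_A$ and such that no transformation $L$ with $u_A\circ L=f$ satisfies $L(\omega)=\omega$ for all $\omega\in F_f$. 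By (Def.~\ref{disturb-f}) this says exactly that $D_f(T)>0$ for every $T\in\mathcal{T}_f$, because $D_f(T)=\max_{\omega\in F_f}\Vert T(\omega)-\omega\Vert_A=0$ is equivalent to $T$ restricting to the identity on $F_f$. The target then reduces to producing $\epsilon>0$ with $D_f(T)\ge\epsilon$ for all $T\in\mathcal{T}_f$; the desired state $\rho$ is the point of the compact polytope $F_f$ where the maximum defining $D_f(T)$ is attained, so automatically $f(\rho)=1$.

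\emph{Positivity of $D_f$.} It is cleanest to split on $\dim\overline F_f$. If $\dim\overline F_f\ge1$, positivity of $D_f$ follows directly from Lemma~\ref{t-f-dim-lemma}, without even using the negative conclusion of Lemma~\ref{non-pres-f}: one gets $\dim(T(F_f))\le\dim A-\dim\overline F_f-2\le\dim A-3<\dim A-2=\dim F_f$ (the last equality since $F_f$ is a minus-face of $\Omega_A$), so $T$ strictly lowers the dimension of $F_f$ and cannot restrict to the identity there, whence $D_f(T)>0$. If $\dim\overline F_f=0$ the dimension count is tight and one genuinely needs the shape-mismatch content of Lemma~\ref{main-technical-lemma}(b), which is what Lemma~\ref{non-pres-f} packages: the linear map fixing $F_f$ pointwise and sending the single point $\overline F_f$ to $0$ does exist as a linear map, but it fails to be positive, its image of $\Omega_A$ spilling outside $\Omega_A^{\leq1}$. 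In both cases $D_f>0$ on $\mathcal{T}_f$.

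\emph{From pointwise to uniform.} This is the step that does the real work, since a positive function can still have infimum $0$. I would use compactness: $\mathcal{T}_f$ is a closed and bounded, hence compact, subset of $\End(A)$ by Lemma~\ref{compactness-lemma} (closed because positivity and $u_A\circ T=f$ survive limits, using closedness of $A_+$; bounded because $\Vert T\Vert_{u_A}=\sup_{\omega\in\Omega_A}\vert u_A(T(\omega))\vert=\sup_\omega\vert f(\omega)\vert\le1$), and $D_f$ is continuous on it by Lemma~\ref{continuity-lemma}. One should also record that $\mathcal{T}_f\neq\emptyset$, e.g. $x\mapsto f(x)\sigma$ for a fixed $\sigma\in\Omega_A$, so the minimization is over a nonempty compact set. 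Hence $D_f$ attains its minimum at some $T^\star\in\mathcal{T}_f$; set $\epsilon:=D_f(T^\star)$, which is strictly positive by the previous step. Since every transformation $T$ with $u_A\circ T=f$ lies in $\mathcal{T}_f$, it satisfies $D_f(T)\ge\epsilon$, i.e. there is $\rho\in F_f$ with $f(\rho)=1$ and $\Vert T(\rho)-\rho\Vert_A\ge\epsilon$; this is exactly the assertion of the theorem.

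\emph{Main obstacle.} The only non-formal point is the last step: converting the qualitative fact ``no transformation exactly fixes the definite-outcome states'' into the quantitative, \emph{uniform} statement ``every transformation disturbs them by at least $\epsilon$'', over the infinite family of admissible transformations. Compactness of $\mathcal{T}_f$ is precisely what bridges this gap, so the technical heart is checking that $\mathcal{T}_f$ is closed and bounded and that the disturbance functional is continuous; the underlying geometric impossibility (dimension/shape mismatch) is already delivered by Theorem~\ref{main-result-thm} through Lemma~\ref{non-pres-f}.
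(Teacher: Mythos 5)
Your proposal is correct, and it is leaner than the proof in the paper. The paper proves Theorem \ref{approx-case-thm} by splitting on $\dim \overline F_f$: for $\dim \overline F_f \geq 1$ it argues exactly as you do in that branch (Lemma \ref{t-f-dim-lemma} gives pointwise positivity of $D_f$, then Lemmas \ref{compactness-lemma} and \ref{continuity-lemma} give a positive minimum), but for $\dim \overline F_f = 0$ it abandons the compactness machinery and instead constructs the unique linear map $L$ that fixes $F_f$ and annihilates the single point $\overline \omega_f$, uses Lemma \ref{non-pres-f} to conclude $L$ fails positivity so some $\tau\in\Omega_A$ has $d(L(\tau),\Omega_A^{\leq 1})>0$, and then runs a triangle-inequality estimate over an affine decomposition of $\tau$ to produce the explicit bound $\epsilon = d(L(\tau),\Omega_A^{\leq 1})/((d_A-1)\alpha_{\max})$. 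Your observation that this case distinction is unnecessary is sound: Lemma \ref{non-pres-f} already says that no element of $\mathcal{T}_f$ restricts to the identity on $F_f$, which is precisely $D_f(T)>0$ for every $T\in\mathcal{T}_f$ regardless of $\dim\overline F_f$, and Lemmas \ref{compactness-lemma} and \ref{continuity-lemma} are stated and proved for arbitrary effects, so the minimization argument applies uniformly (your remark that $\mathcal{T}_f\neq\emptyset$, witnessed by $x\mapsto f(x)\sigma$, or alternatively that the claim is vacuous if $\mathcal{T}_f=\emptyset$, closes the only loose end). What the paper's longer case-(i) treatment buys is a more explicit, geometrically interpretable value of $\epsilon$ in that case; for the statement of the theorem itself your uniform compactness argument suffices and shortens the proof. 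Your internal case split when establishing positivity of $D_f$ is likewise optional, since the appeal to Lemma \ref{non-pres-f} already covers both cases.
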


\begin{proof}
By virtue of Lemma \ref{non-pres-f}, there is a pure effect $f \in E_A$ such that the certain face $F_f$ is a minus-face of $\Omega_A$ and such that
\begin{align}
\left\{ \begin{array}{l}
\text{there is no transformation } L: A \rightarrow A \\
\text{with } f = u_A \circ L \text{ and } L(\omega) = \omega \ \forall \omega \in F_f \,.
\end{array} \right. \label{l-no-trafo}
\end{align}
This is the effect $f$ for which we will show the existence of a number $\epsilon > 0$ with the claimed properties. We make a proof by cases, where we distinguish between the cases where the impossible face $\overline F_f$ of $f$ (Def. \ref{cert-face-def}) satisfies $\dim \overline F_f = 0$ and where $\dim \overline F_f \geq 1$ (the case $\dim \overline F_f = -1$ is not possible since $F_f$ is a minus-face of $\Omega_A$).

\begin{enumerate}[(i)]

\item Assume that $\dim \overline F_f = 0$, i.e.
\begin{align*}
\overline F_f = \{ \overline \omega_f \} \quad \text{for some } \overline \omega_f \in \Omega_A \,.
\end{align*}
Since $\overline \omega_f \notin \spa(F_f)$, it holds that $\spa(F_f) \cap \spa(\overline F_f) = \{0\}$. ($\overline \omega_f \notin \spa(F_f)$ can be verified using the fact that $\spa(F_f) = \aff(F_f \cup \{0\})$ (Fact \ref{spa-is-aff-0}), $F_f = \aff(F_f) \cap \Omega_A$ (Fact \ref{face-cap-aff}) and $u_A(\overline \omega_f) = 1$ but $u_A(0) = 0$.) Thus, there is a linear map $L: A \rightarrow A$ with
\begin{align}
&L(\omega) = \omega \quad \forall \omega \in F_f \,, \label{l-cond-1} \\
&L(\overline F_f) = \{0\} \,. \label{l-cond-2}
\end{align}
Note that
\begin{align}
\label{dim-f-f-a-2}
\dim F_f = \dim A - 2
\end{align}
(since $F_f \subseteq \Omega_A$ is a minus-face). Thus,
\begin{align*}
\dim(\spa(F_f)) + \dim(\spa(\overline F_f)) &= \dim F_f + 1 + \dim \overline F_f + 1 &\text{by (Fact \ref{0-notin-aff}) and (Fact \ref{dim-span-plus-one})}\\
&= (\dim A - 2) + 1 + 0 + 1 &\text{by (\ref{dim-f-f-a-2})}\\
&= \dim A \,,
\end{align*}
so the conditions (\ref{l-cond-1}) and (\ref{l-cond-2}) fully determine the linear map $L: A \rightarrow A$. It also means that (\ref{l-cond-1}) and (\ref{l-cond-2}) imply that $f = u_A \circ L$ on a set that spans $A$, and thus $f = u_A \circ L$ everywhere. Thus, by (\ref{l-no-trafo}), $L$ cannot be a transformation, so by (Def. \ref{trafo-def}), linearity or $L(\Omega_A) \subseteq \Omega_A^{\leq 1}$ must fail for $L$. However, we have constructed $L$ to be linear, so $L(\Omega_A) \subseteq \Omega_A^{\leq 1}$ must fail. Thus, there is a $\tau \in \Omega_A$ such that
\begin{align}
L(\tau) \notin \Omega_A^{\leq 1} \,, \quad \text{i.e.} \quad d(L(\tau), \Omega_A^{\leq 1}) > 0 \,, \label{tau-outside}
\end{align}
where $d( \, \cdot \, , \Omega_A^{\leq 1})$ is defined in (Def. \ref{set-distance-def}). Define
\begin{align*}
d_A := \dim A
\end{align*}and recall our definition of the dimension of a set, (Def. \ref{set-dim-def}). It holds that $\tau \in \Omega_A \subseteq \aff(F_f \cup \overline F_f)$. (This follows from the fact that $F_f$ is a minus-face of $\Omega_A$ and $F_f, \overline F_f \subset \Omega_A$ but $\overline F_f \notin \aff(F_f)$.) This implies that there must be $\dim F_f + 1 = (\dim A - 2) + 1 = (d_A - 1)$ points $\{ \omega_1, \ldots, \omega_{d_A-1} \} \subset F_f$ such that
\begin{align}
&\tau \in \aff( \{ \omega_1, \ldots, \omega_{d_A-1}, \overline \omega_f \} ) \,, \quad \text{i.e.} \nonumber \\
&\tau = \left( \sum\limits_{i=1}^{d_A-1} \alpha_i \omega_i \right) + \alpha_{d_A} \overline \omega_f \quad \text{for some coefficients } \{ \alpha_i \}_{i=1}^{d_A} \text{ with } \sum\limits_{i=1}^{d_A} \alpha_i = 1 \,. \label{tau-decomp}
\end{align}
We define
\begin{align*}
\alpha_\text{max} := \max \{ \vert \alpha_i \vert \mid i=1, \ldots, d_A-1 \} \,.
\end{align*}
Note that $\alpha_\text{max}$ is positive by (\ref{tau-decomp}) since $\tau \neq \overline \omega_f$ (we have chosen $\tau$ such that $L(\tau) \notin \Omega_A^{\leq 1}$, but $L(\widetilde \omega_f) = 0 \in \Omega_A^{\leq 1}$). Let $T: A \rightarrow A$ be any transformation with $f = u_A \circ T$. \emph{Assume} that
\begin{align}
\Vert T(\omega) - \omega \Vert_A < \frac{d(L(\tau), \Omega_A^{\leq 1})}{(d_A-1) \alpha_\text{max}} \quad \forall \omega \in F_f. \label{false-ass}
\end{align}
We will show that this leads to a contradiction to the assumption that $T$ is positive. This, in turn, will show that the term on the right hand side of Inequality (\ref{false-ass}) is the $\epsilon$ with the claimed property (note that the term is independent on the choice of the transformation $T$).

Together with Assumption (\ref{false-ass}), we can use the triangle-inequality for the norm to derive the following bound:
\begin{align}
\Vert T(\tau) - L(\tau) \Vert_A &= \left\Vert \left( \sum\limits_{i=1}^{d_A - 1} \alpha_i T(\omega_i) \right) + \alpha_{d_A} T(\overline \omega_f) - \left( \sum\limits_{j=1}^{d_A - 1} \alpha_j L(\omega_j) \right) - \alpha_{d_A} L(\overline \omega_f) \right\Vert_A &\text{by (\ref{tau-decomp})} \nonumber \\
&\leq \sum\limits_{i=1}^{d_A-1} \vert \alpha_i \vert \ \Vert T(\omega_i) - \underbrace{L(\omega_i)}_{\omega_i} \Vert_A + \vert \alpha_{d_A} \vert \ \Vert \underbrace{T(\overline \omega_f)}_{0} - \underbrace{L(\overline \omega_f)}_{0} \Vert_A \nonumber \\
&= \sum\limits_{i=1}^{d_A-1} \vert \alpha_i \vert \ \Vert T(\omega_i) - \omega_i \Vert_A \nonumber \\
&< (d_A-1) \alpha_\text{max} \frac{d(L(\tau), \Omega_A^{\leq 1})}{(d_A-1) \alpha_\text{max}} &\text{by (\ref{false-ass})} \nonumber
\end{align}
and therefore
\begin{align}
\Vert T(\tau) - L(\tau) \Vert_A < d(L(\tau), \Omega_A^{\leq 1}) \,. \label{t-l-ineq}
\end{align}
For any $\sigma \in \Omega_A^{\leq 1}$, we can use the triangle inequality again to derive the following inequality:
\begin{align}
\Vert L(\tau) - \sigma \Vert_A &\leq \Vert L(\tau) - T(\tau) \Vert_A + \Vert T(\tau) - \sigma \Vert_A \nonumber \\
\Rightarrow \quad \Vert T(\tau) - \sigma \Vert_A &\geq \Vert L(\tau) - \sigma \Vert_A - \Vert L(\tau) - T(\tau) \Vert_A \,. \label{useful-ineq}
\end{align}
This allows us to conclude
\begin{align*}
d(T(\tau), \Omega_A^{\leq 1}) &=\min\limits_{\tau \in \Omega_A^{\leq 1}} \Vert T(\tau) - \sigma \Vert_A \\
&\geq \min\limits_{\tau \in \Omega_A^{\leq 1}} \bigg( \Vert L(\tau) - \sigma \Vert_A - \Vert L(\tau) - T(\tau) \Vert_A \bigg) &\text{by (\ref{useful-ineq})} \\
&= \min\limits_{\tau \in \Omega_A^{\leq 1}} \bigg( \Vert L(\tau) - \sigma \Vert_A \bigg) - \Vert L(\tau) - T(\tau) \Vert_A \\
&> d(L(\tau), \Omega_A^{\leq 1}) - d(L(\tau), \Omega_A^{\leq 1}) \quad &\text{by (\ref{t-l-ineq})}
\end{align*}
and therefore
\begin{align*}
d(T(\tau), \Omega_A^{\leq 1}) > 0 \,.
\end{align*}
Thus, assumption (\ref{false-ass}) implies that there is a $\tau \in \Omega_A$ which is mapped outside of $\Omega_A^{\leq 1}$ by $T$, so it implies that the map $T$ is not positive. But $T$ is a transformation and therefore positive, so the assumption (\ref{false-ass}) must be wrong. The negation of (\ref{false-ass}) is
\begin{align}
\exists \rho \in F_f: \Vert T(\rho) - \rho \Vert_A \geq \frac{d(L(\tau), \Omega_A^{\leq 1})}{(d_A-1) \alpha_\text{max}} \,. \label{0-claim-proved}
\end{align}
The fact that $\rho \in F_f$ means that $f(\rho) = 1$. Set
\begin{align*}
\epsilon := \frac{d(L(\tau), \Omega_A^{\leq 1})}{(d_A-1) \alpha_\text{max}} \,.
\end{align*}
This is a positive number by (\ref{tau-outside}). Since $\epsilon$ is independent of $T$ and $T$ is an arbitrary transformation with $f = u_A \circ T$, (\ref{0-claim-proved}) means that we have proved the claim for the case where $\dim \overline F_f = 0$.

\item Assume that $\dim \overline F_f \geq 1$. Let $T: A \rightarrow A$ be a transformation such that $f = u_A \circ T$. According to Lemma \ref{t-f-dim-lemma}, we have
\begin{align*}
\dim(T(F_f)) &\leq \dim A - \dim \overline F_f - 2 \\
&\leq d_A - 3 \,, \quad \text{where} \quad d_A := \dim A \,.
\end{align*}
Note that since $\Omega_A$ is a non-classical polytope, it holds that $\dim A \geq 3$ (for $\dim A = 1, 2$, the set $\Omega_A$ is a point or a line, respectively, which both are a simplex and therefore classical). The inequality shows that the minus-face $F_f$ of $\Omega_A$, which is a polytope with dimension $\dim F_f = d_A - 2$, is mapped to a set $T(F_f)$ which is at most $(d_A-3)$-dimensional. Diagrammatically,
\begin{align*}
\underbrace{F_f}_{(d_A-2)\text{-dim.}} \mapsto \underbrace{T(F_f)}_{\leq (d_A-3)\text{-dim.}} \subset \Omega_A \,.
\end{align*}
Therefore, $F_f$ cannot be contained in $T(F_f)$. In particular, there must be an $\omega \in F_f$ such that $\Vert T(\omega) - \omega \Vert_A > 0$. Since the map $\omega \mapsto \Vert T(\omega) - \omega \Vert _A$ is a continuous function and the polytope $F_f$ is compact (Fact \ref{polytope-compact}), the map attains a maximum on $F_f$. So far, we have shown the following: For every transformation $T: A \rightarrow A$ with $f = u_A \circ T$, it holds that $\max_{\omega \in F_f} \Vert T(\omega) - \omega \Vert_A$ is positive. Recapitulate our previous definitions (Def. \ref{t-f-ind}), (Def. \ref{disturb-f}). The transformations under consideration are given by
\begin{align*}
\mathcal{T}_f = \{ T \in \End(A) \mid T \text{ is positive, } u_A \circ T = f \} \,,
\end{align*}
This definition allows us to write the disturbance function $D_f$ as a function on $\mathcal{T}_f$:
\begin{align*}
\left.\begin{array}{cccl}D_f: & \mathcal{T}_f & \rightarrow & \mathbb{R} \\ & T & \mapsto & \max\limits_{\omega \in F_f} \Vert T(\omega) - \omega \Vert_A\end{array}\right.
\end{align*}
Using these definitions, we can summarize what we have proved so far by stating that the disturbance function $D_f$ is a positive function on $\mathcal{T}_f$. From Lemmas \ref{compactness-lemma} and \ref{continuity-lemma}, we know that $\mathcal{T}_f$ is compact and that $D_f$ is continuous on $\mathcal{T}_f$. A continuous function on a compact space attains a minimum, so $\min_{T \in \mathcal{T}_f} D_f(T)$ exists. Since the minimum of a positive function must be positive, it holds that
\begin{align*}
\min\limits_{T \in \mathcal{T}_f} D_f(T) = \min\limits_{T \in \mathcal{T}_f} \left( \max\limits_{\omega \in F_f} \Vert T(\omega) - \omega \Vert_A \right) > 0 \,.
\end{align*}
Set
\begin{align*}
\epsilon := \min\limits_{T \in \mathcal{T}_f} \left( \max\limits_{\omega \in F_f} \Vert T(\omega) - \omega \Vert_A \right) \,.
\end{align*}
Thus, $\epsilon$ is a positive number such that for every $T \in \mathcal{T}_f$, there is a $\rho \in F_f$ such that
\begin{align*}
\Vert T(\rho) - \rho \Vert_A \geq \epsilon \,.
\end{align*}
Writing out the definitions of $\mathcal{T}_f$ and $F_f$, (Def. \ref{t-f-ind}) and (Def. \ref{cert-face-def}), we have proved the existence of an $\epsilon > 0$ with the property that for every transformation $T: A \rightarrow A$ with $u_A \circ T = f$, there is a $\rho \in \Omega_A$ with $f(\rho) = 1$ such that $\Vert T(\rho) - \rho \Vert_A \geq \epsilon$, so we have proved the claim. \qedhere

\end{enumerate}
\end{proof}

\end{appendix}
\end{widetext}

\end{document}